\def\@secfont{\bfseries\Large\section@raggedright}
\newcommand\Pimathsymbol[3][\mathord]{%
  #1{\@Pimathsymbol{#2}{#3}}}
\def\@Pimathsymbol#1#2{\mathchoice
  {\@Pim@thsymbol{#1}{#2}\tf@size}
  {\@Pim@thsymbol{#1}{#2}\tf@size}
  {\@Pim@thsymbol{#1}{#2}\sf@size}
  {\@Pim@thsymbol{#1}{#2}\ssf@size}}
\def\@Pim@thsymbol#1#2#3{%
  \mbox{\fontsize{#3}{#3}\Pisymbol{#1}{#2}}}
\newcommand{\myneq}{\Pimathsymbol[\mathrel]{mathb}{"88}}
\DeclareFontFamily{U}{mathb}{\hyphenchar\font45}
\DeclareFontShape{U}{mathb}{m}{n}{
      <5> <6> <7> <8> <9> <10> gen * mathb
      <10.95> mathb10 <12> <14.4> <17.28> <20.74> <24.88> mathb12
      }{}
\newcommand{\IRS}{\ensuremath{\mathrm{IRS}}\xspace}
\newcommand{\clio}{\textsc{Clio}\xspace}
\newcommand{\lioS}{$\mathrm{LIO}$\xspace}
\newcommand{\flows}{\sqsubseteq}
\newcommand{\flowsstrict}{\myneq}
\newcommand{\lub}{\sqcup}
\newcommand{\glb}{\sqcap}
\newcommand{\lto}{\longrightarrow}
\newcommand{\lcur}[1][l]{\ensuremath{{#1}_{\textrm{cur}}}}
\newcommand{\ccurr}{\ensuremath{l_{\textrm{clr}}}}
\newcommand{\lclr}{\ccurr}
\newcommand{\gtau}{\boldsymbol{\underline{\tau}}}
\newcommand{\true}{\textsf{true}}
\newcommand{\false}{\textsf{false}}
\newcommand{\unit}{\textsf{()}}
\newcommand{\fix}[1]{\textsf{fix}\ #1}
\newcommand{\Labeled}[2]{\ensuremath{\text{\guilsinglleft}{#2}\!:\!{#1}\text{\guilsinglright}}}
\newcommand{\LabeledO}[2]{\Labeled{#1}{#2}}
\newcommand{\LIO}[1]{{#1}^\textsf{LIO}}
\newcommand{\CLIO}[1]{{#1}^\textsf{CLIO}}
\newcommand{\ltrip}[3]{\ensuremath{\langle {#1} , {#2}, {#3} \rangle}}
\newcommand{\ifel}[3]{\textsf{if}\ #1\ \textsf{then}\ #2\ \textsf{else}\ #3}
\newcommand{\join}[2]{#1 \lub #2}
\newcommand{\meet}[2]{#1 \glb #2}
\newcommand{\cjoin}[2]{#1 \lub^C #2}
\newcommand{\cmeet}[2]{#1 \glb^C #2}
\newcommand{\ijoin}[2]{#1 \lub^I #2}
\newcommand{\imeet}[2]{#1 \glb^I #2}
\newcommand{\ajoin}[2]{#1 \lub^A #2}
\newcommand{\ameet}[2]{#1 \glb^A #2}
\newcommand{\canFlowTo}[2]{#1 \flows #2}
\newcommand{\cantFlowTo}[2]{#1 \not\flows #2}
\newcommand{\canFlowToC}[2]{#1 \flows^C #2}
\newcommand{\cantFlowToC}[2]{#1 \not\flows^C #2}
\newcommand{\canFlowToI}[2]{#1 \flows^I #2}
\newcommand{\canFlowToA}[2]{#1 \flows^A #2}
\newcommand{\canFlowToStrict}[2]{#1 \flowsstrict #2}
\newcommand{\canFlowToStrictA}[2]{#1 \flowsstrict^A #2}
\newcommand{\canFlowToStrictC}[2]{#1 \flowsstrict^C #2}
\newcommand{\canFlowToStrictI}[2]{#1 \flowsstrict^I #2}
\newcommand{\canTrustTo}[2]{#1 \trusts #2}
\newcommand{\return}[1]{\textsf{return}\ #1}
\newcommand{\bindSymbol}{\ensuremath{\gg\!\!=}}
\newcommand{\bind}[2]{#1 \bindSymbol #2}
\newcommand{\getLabel}{\textsf{getLabel}}
\newcommand{\getClearance}{\textsf{getClearance}}
\newcommand{\llabel}[2]{\textsf{label}\ #1\ #2}
\newcommand{\labelOf}[1]{\textsf{labelOf}\ #1}
\newcommand{\unlabel}[1]{\textsf{unlabel}\ #1}
\newcommand{\toLabeled}[2]{\textsf{toLabeled}\ #1\ #2}
\newcommand{\lowerClr}[1]{\textsf{lowerClearance}\ #1}
\newcommand{\block}[4]{\ensuremath{{}^{#1}_{#2}\{^{#3}\, {#4}\, \}}}
\newcommand{\store}[2]{\textsf{store}\ {#1}\ {#2}}
\newcommand{\fetchvar}[3]{\textsf{fetch}_{\,{#3}}\ {#1}\ {#2}}
\newcommand{\fetch}[2]{\fetchvar{#1}{#2}{\tau}}
\newcommand{\BoolT}{\textsf{Bool}}
\newcommand{\unitT}{\textsf{()}}
\newcommand{\lio}{\ensuremath{\mathit{LIO}}}
\newcommand{\LIOT}[1]{\textsf{LIO}\ #1}
\newcommand{\CLIOT}[1]{\textsf{CLIO}\ #1}
\newcommand{\LabelT}{\textsf{Label}}
\newcommand{\LabeledT}[1]{\textsf{Labeled}\ #1}
\newcommand{\typeOf}[1]{\textsf{typeOf}\ #1}
\newcommand{\gv}{\ensuremath{\underline{v}}}
\newcommand{\lowEquiv}{\ensuremath{=_{\normalfont{\lowlabel}}}}
\newcommand{\lowEquivC}{\ensuremath{=^C_{\normalfont{\lowlabel}}}}
\newcommand{\attacker}{\ensuremath{\mathcal{A}}\xspace}
\newcommand{\drawnFrom}{\ensuremath{\leftarrow}}
\newcommand{\probabilityOfCond}[2]{\ensuremath{\mathrm{Pr}[#1~|~#2]}}
\newcommand{\conf}[3]{\ensuremath{\langle #1, #2 \mid #3\rangle}}
\newcommand{\randomExprL}{\ensuremath{\{\!\!\,|~}}
\newcommand{\randomExprR}{\ensuremath{~|\!\!\,\}}}
\newcommand{\randomExpr}[2]{\ensuremath{\{\!|\ {#1}\ |\ {#2}\ |\!\}}}
\newcommand{\trusts}{\ensuremath{\preceq}}
\newcommand{\projC}[1]{\ensuremath{\mathbb{C}({#1})}}
\newcommand{\projI}[1]{\ensuremath{\mathbb{I}({#1})}}
\newcommand{\projA}[1]{\ensuremath{\mathbb{A}({#1})}}
\newcommand{\event}{\ensuremath{\alpha}}
\newcommand{\lowlabel}{\ensuremath{\ell}\xspace}
\newcommand{\textor}{\ensuremath{\mathrm{~or~}}}
\newcommand{\textif}{\ensuremath{\mathrm{if~}}}
\newcommand{\textand}{\ensuremath{\mathrm{~and~}}}
\newcommand{\storecmd}[2]{\ensuremath{\textsf{put}~{#2}~\textsf{at}~{#1}}}
\newcommand{\fetchcmdvar}[3]{\ensuremath{\textsf{got}_{\,#3}~{#2}~\textsf{at}~{#1}}}
\newcommand{\fetchcmd}[2]{\fetchcmdvar{{#1}}{{#2}}{\tau}}
\newcommand{\missingcmd}[1]{\ensuremath{\textsf{nothing-at}~{#1}}}
\newcommand{\missing}{\ensuremath{\bot}\xspace}
\newcommand{\ltol}[1]{\ensuremath{\xrightarrow{#1~}~}}
\newcommand{\iconf}[2]{\ensuremath{\langle #1, #2 \rangle}}
\newcommand{\istore}{\ensuremath{\sigma} }
\newcommand{\istoredist}{\ensuremath{{\varmathbb{XXX}} }}
\newcommand{\ito}{\ensuremath{\rightsquigarrow}}
\newcommand{\emptystore}{\ensuremath{\emptyset}}
\newcommand{\concat}[2]{\ensuremath{#1 ~\cdot~ #2}}
\newcommand{\version}{\ensuremath{n}}
\newcommand{\incrementnm}{\ensuremath{\mathrm{increment}}}
\newcommand{\increment}[1]{\ensuremath{\incrementnm({#1})}}
\newcommand{\interactionsdist}{\ensuremath{{\varmathbb{R}}}}
\newcommand{\interaction}{\ensuremath{R}}
\newcommand{\interactionsvar}[1]{\ensuremath{\overline{#1}}}
\newcommand{\interactions}{\ensuremath{\overline{\interaction}}}
\newcommand{\interactionsprime}{\ensuremath{\overline{\interaction'}}}
\newcommand{\bitstream}{\ensuremath{b}\xspace}
\newcommand{\versions}{\ensuremath{\mathbf{V}}}
\newcommand{\rconf}[4]{\ensuremath{\langle #1, #3, #4 \rangle}}
\newcommand{\rto}[1]{\ensuremath{\rightsquigarrow_{#1}}}
\newcommand{\keystore}{\ensuremath{\mathcal{P}}}
\newcommand{\serializenm}{\ensuremath{\mathsf{serialize}}}
\newcommand{\serialize}[2]{\ensuremath{\serializenm_{\keystore}({#1}, {#2})}}
\newcommand{\deserializenm}{\ensuremath{\mathsf{deserialize}}}
\newcommand{\deserializevar}[3]{\ensuremath{\deserializenm_{\keystore}({#1}, {#2}, {#3})}}
\newcommand{\deserialize}[2]{\deserializevar{{#1}}{{#2}}{\tau}}
\newcommand{\fetchcknm}{\ensuremath{\textsf{fetch\_ck}_{\keystore}}}
\newcommand{\fetchck}[2]{\ensuremath{\fetchcknm({#1}, {#2})}}
\newcommand{\initializecknm}{\ensuremath{\textsf{initialize\_ck}_{\keystore}}}
\newcommand{\createcknm}{\ensuremath{\textsf{create\_ck}_{\keystore}}}
\newcommand{\Lowstep}[1]{\ensuremath{\curvearrowright_{#1}}}
\newcommand{\pcOf}[1]{\ensuremath{\mathrm{PC}({#1})}}
\newcommand{\adversary}{\ensuremath{\mathcal{A}}\xspace}
\newcommand{\strategy}{\ensuremath{\mathcal{S}}\xspace}
\newcommand{\principals}{\ensuremath{\tilde{p}}\xspace}
\newcommand{\INDvar}[7]{\mathrm{#7}_{{#1}}({#2}, {#3}, {#4}, {#5}, {#6})}
\newcommand{\IND}[6]{\INDvar{#1}{#2}{#3}{#4}{#5}{#6}{IND}}
\newcommand{\distfam}[1]{\ensuremath{\Big\{~{#1}~\Big\}_n}}
\newcommand{\cequiv}{\ensuremath{\approx}}
\newcommand{\cryptosys}{\ensuremath{\Pi}\xspace}
\newcommand{\nat}{\ensuremath{\varmathbb{N}}}
\newcommand{\Gen}{\textsf{Gen}}
\newcommand{\Sign}{\textsf{Sign}}
\newcommand{\Verify}{\textsf{Verify}}
\newcommand{\Enc}{\textsf{Enc}}
\newcommand{\Dec}{\textsf{Dec}}
\newcommand{\zeroes}{\ensuremath{\Sigma_0}}
\newcommand{\pubkeys}[1]{\ensuremath{\mathrm{pub}({#1})}}
\newcommand{\authorityOf}[1]{\ensuremath{\mathsf{authorityOf}({#1})}}
\newcommand{\Start}[1]{\ensuremath{\mathsf{Start}({#1})}}
\newcommand{\Clr}[1]{\ensuremath{\mathsf{Clr}({#1})}}
\newcommand{\Min}[1]{\ensuremath{\mathsf{min}({#1})}}
\newcommand{\Max}[1]{\ensuremath{\mathsf{max}({#1})}}
\newcommand{\multistepnm}{\textsf{step}}
\newcommand{\multistep}[4]{\ensuremath{\multistepnm^{#1}_{\lowlabel}({#2}, {#3}, {#4})}}
\newcommand{\skipcmd}{\textsf{skip}\xspace}
\newcommand{\corruptcmd}[1]{\textsf{corrupt}~{#1}}
\newcommand{\category}{\ensuremath{C}}
\newcommand{\categorykey}{\ensuremath{ck}}
\newcommand{\translate}[4]{\ensuremath{\mathcal{T}_{{#1}}({#2}, {#3}, {#4})}}
\newcommand{\iinteraction}{\ensuremath{I}\xspace}
\newcommand{\iinteractionsvar}[1]{\ensuremath{\overline{#1}}}
\newcommand{\iinteractions}{\ensuremath{\overline{\iinteraction}}}
\newcommand{\rulename}[1]{{\footnotesize{\text{\textsc{#1}}}}}
\newcommand{\figLowEquivMain}{
	$\gv_1$ & $\lowEquiv$ & $\gv_2$ & if $\gv_1 = \gv_2$ \\
	$\LabeledO{l_1}{\gv_1}$ & $\lowEquiv$ & $\LabeledO{l_1}{\gv_2}$& where \\
	
\multicolumn{4}{>{\hspace{1.5em}}l}{
 $ 
\left\{   
\begin{array}{lll}
	\gv_1 = \gv_2 & \mathrm{if} & \projC{l_1} \flows^C \projC{\lowlabel} \\ &&
	\mathrm{and~} \projI{l_1} \flowsstrict^I \projI{\lowlabel}  \\ &&
	 \mathrm{and~} \projA{l_1} \flowsstrict^A \projA{\lowlabel}, \\
	 \typeOf{\gv_1} = \typeOf{\gv_2} & \multicolumn{2}{l}{\mathrm{otherwise} }
\end{array}
\right.
$} \\
	\\
	$(v_1, v_2)$ & $\lowEquiv$ & $(v_1', v_2')$& if $v_1 \lowEquiv v_1'$ and $v_2 \lowEquiv v_2'$ \\
	$t_1\ t_2$ & $\lowEquiv$ & $t_1'\ t_2'$& if $t_1 \lowEquiv t_1'$ and $t_2 \lowEquiv t_2'$\\
}
\newcommand{\figLowEquivFull}{
	\multicolumn{3}{l}{{\bf Terms:}}\\
	\figLowEquivMain

	$\fix{t}$ & $\lowEquiv$ & $\fix{t'}$& if $t \lowEquiv t'$ \\
	$\ifel{t_1}{t_2}{t_3}$ & $\lowEquiv$ & $\ifel{t_1'}{t_2'}{t_3'}$& if $t_1 \lowEquiv t_1'$ and $t_2 \lowEquiv t_2'$ and $t_3 \lowEquiv t_3'$\\
	$\return{t}$ & $\lowEquiv$ & $\return{t'}$& if $t \lowEquiv t'$ \\
	$\bind{t_1}{t_2}$ & $\lowEquiv$ & $\bind{t_1'}{t_2'}$& if $t_1 \lowEquiv t_1'$ and $t_2 \lowEquiv t_2'$\\
	$\llabel{t_1}{t_2}$ & $\lowEquiv$ & $\llabel{t_1'}{t_2'}$& if $t_1 \lowEquiv t_1'$ and $t_2 \lowEquiv t_2'$\\
	$\labelOf{t}$ & $\lowEquiv$ & $\labelOf{t'}$& if $t \lowEquiv t'$ \\
	$\unlabel{t}$ & $\lowEquiv$ & $\unlabel{t'}$& if $t \lowEquiv t'$ \\
	$\getLabel$ & $\lowEquiv$ & $\getLabel$ & \\
	$\getClearance$ & $\lowEquiv$ & $\getClearance$ & \\
	$\lowerClr{t}$ & $\lowEquiv$ & $\lowerClr{t'}$& if $t \lowEquiv t'$ \\
	$\toLabeled{t_1}{t_2}$ & $\lowEquiv$ & $\toLabeled{t_1'}{t_2'}$& if $t_1 \lowEquiv t_1'$ and $t_2 \lowEquiv t_2'$\\
	$\block{l_1}{l_2}{l_3}{t}$ & $\lowEquiv$ & $\block{l_1'}{l_2'}{l_3}{t'}$& if $l_1 \lowEquiv l_1'$ and $l_2 \lowEquiv l_2'$ and $t \lowEquiv t'$\\
	$\store{t_1}{t_2}$ & $\lowEquiv$ & $\store{t_1'}{t_2'}$& if $t_1 \lowEquiv t_1'$ and $t_2 \lowEquiv t_2'$\\
	$\fetch{t_1}{t_2}$ & $\lowEquiv$ & $\fetch{t_1'}{t_2'}$& if $t_1 \lowEquiv t_1'$ and $t_2 \lowEquiv t_2'$\\
	\\ 
	\multicolumn{3}{l}{{\bf Configurations:}}\\
	%$l_1$ & $\lowEquiv$ & $l_2$ & if $\cantFlowTo{l_1}{\lowlabel}$ and $\cantFlowTo{l_2}{\lowlabel}$\\
	$\conf{\lcur}{\lclr}{t}$ & $\lowEquiv$ & $\conf{\lcur'}{\lclr'}{t'}$ & if $t \lowEquiv t'$ and $\lcur = \lcur'$ and  $\lclr = \lclr'$ \\
	$\conf{\lcur}{\lclr}{t}$ & $\lowEquiv$ & $\conf{\lcur'}{\lclr'}{t'}$ & if ($\lcur \not\flows \lowlabel$ and $\lcur' \not\flows \lowlabel$) and ($\lclr \not\flows \lowlabel$ and $\lclr' \not\flows \lowlabel$)\\
	% $\istore_1$ & $\lowEquiv$ & $\istore_2$ & \\
	% \multicolumn{4}{l}{
	% 	\hspace*{0.5cm} if
	% 	$\forall (\gv_k \mapsto \Labeled{l_1}{\Enc(sk, \gv_1)} \in \istore_1).~\mathrm{if~} \projC{l_1} \flows^C \projC{\lowlabel} \mathrm{~and~} \projI{l_1} \flowsstrict^I \projI{\lowlabel}.
	% 		\seen{\Labeled{l_1}{\Enc(sk, \gv_1)}} \textor \istore_2[\gv_k] = \Labeled{l_1}{\Enc(sk_2, \gv)}
	% 	$
	% }\\
	% \multicolumn{4}{l}{
	% 	\hspace*{0.5cm} and
	% 	$\forall (\gv_k \mapsto \Labeled{l_1}{\Enc(sk_2, \gv)} \in \istore_2).~\mathrm{if~} \projC{l_1} \flows^C \projC{\lowlabel} \mathrm{~and~} \projI{l_1} \flowsstrict^I \projI{\lowlabel}.
	% 		\seen{\Labeled{l_1}{\gv_2}} \textor \istore_1[\gv_k] = \Labeled{l_1}{\Enc(sk_1, \gv_1)}
	% 	$
	% }

	\\

	%\multicolumn{3}{l}{{\bf Real Interactions:}}\\
	%\figRealInteractionsLowEquiv 
	
	% \multicolumn{3}{l}{{\bf Stores} (with respect to a fixed keystore $\keystore$):}\\
	% $\emptyset$ & $\lowEquiv$ &  $\emptyset$ & \\
	% $\{\category \mapsto \categorykey\} \uplus \sigma_1$& $\lowEquiv$ & $\istore_2$ & if $\istore_1 \lowEquiv \istore_2$\\ 
	% $\{\gv \mapsto \Labeled{l_1}{\bitstream} \} \uplus \istore_1$ & $\lowEquiv$ & $\istore_2$ & if $l_1 \trusts \lowlabel$ and $\istore_1 \lowEquiv \istore_2$ \\
	
	% $\{\gv \mapsto \Labeled{l_1}{\Enc_\keystore(\sigma_1, l_1, (\gv_1, \version)} \} \uplus \istore_1$ & $\lowEquiv$ &
	% $\{\gv \mapsto \Labeled{l_1}{\Enc_\keystore(\sigma_2, l_1, (\gv_2, \version)} \} \uplus \istore_2$ & 
	% if $\istore_1 \lowEquiv \istore_2$ and \\
	% &&&
	% $
	% \left\{   
	% \begin{array}{lll}
	% 	\gv_1 = \gv_2 & \mathrm{if} & \projC{l_1} \flows^C \projC{\lowlabel} \\ &&
	% 	\mathrm{and~} \projI{l_1} \flowsstrict^I \projI{\lowlabel}  \\ &&
	% 	 \mathrm{and~} \projA{l_1} \flowsstrict^A \projA{\lowlabel}, \\
	% 	 \typeOf{\gv_1} = \typeOf{\gv_2} & \multicolumn{2}{l}{\mathrm{otherwise} }
	% \end{array}
	% \right.
	% $ \\
	% $\{\gv \mapsto \Labeled{l_1}{\bitstream} \} \uplus \istore_1$ & $\lowEquiv$ &
	% $\istore_2$ & if $\deserialize{\istore_1}{\Labeled{l_1}{\bitstream}}$  undefined \\ &&&\hspace*{7mm}  (for any $\tau$) and $\istore_1 \lowEquiv \istore_2$ \\
}
\newcommand{\figRealClioSyntax}{
	\begin{alignat*}{2}
	\textrm{Keystores:} & \quad
	  \keystore & :~~~ &
		p \rightarrow (b, b_\bot) \\
	\textrm{Bit strings:} & \quad
	  \bitstream & \in~~~ &
	 	\{0,1\}^* \\
	\textrm{Stores:} &  \quad
	 \istore & :~~~ & 
	 	(\gv \rightarrow \LabeledO{l}{\bitstream}_{\!\bot}) \cup (C \rightarrow ck_{\!\bot})\\
	\textrm{Versions:} & \quad
	 \versions & :~~~ &
	    \gv \rightarrow \version \\
	 % & \quad \version & \in~~~ & 
	 % 	\varmathbb{N} \\   
	\textrm{Interactions:} &  \quad
	 \interactions & \Coloneqq~ &
	 	\concat{\interaction}{\interactions} \ ~|~\ \interaction \\
	& \quad \interaction & \Coloneqq~ &
	 	\skipcmd = \lambda \istore. ~ \istore \\
	&& \!\!|\ ~~ &  \storecmd{\category}{\categorykey} = \lambda \istore. ~ \istore[\category \mapsto \categorykey]\\
	&& \!\!|\ ~~ & \storecmd{\gv_k}{\Labeled{l}{\bitstream}} = \lambda \istore. ~ \istore[\gv_k \mapsto \Labeled{l}{\bitstream}]\\ 
	\textrm{Strategies:} & \quad
		\strategy & :~~~ & \interactionsdist \rightarrow \interactionsdist \\
	\textrm{Category Keys:} & \quad
	 \categorykey & \Coloneqq~ &
	 	(b,~esk,~\bitstream') \\
	\textrm{Encrypted Keys:} & \quad
	 esk & :~~~ &
	 	p \rightarrow \bitstream
	\end{alignat*}

	\begin{tabular}{lcl}
	\multicolumn{3}{c}{{\bf keystore Label Functions}}\\
	$\Min{\keystore}$ & $=$ & $p_1 \vee ... \vee p_i \vee ... \vee p_n$ for all $p_i \in dom(\keystore)$\\
	$\Max{\keystore}$ & $=$ & $p_1 \wedge ... \wedge p_i \wedge ... \wedge p_n$ for all $p_i \in dom(\keystore)$ \\ 
	\multicolumn{3}{r}{and $sk_i \neq \bot$ where $\keystore(p_i) = (pk_i, sk_i) $}\\
	$\Start{\keystore}$ & $=$ & $\ltrip{\Min{\keystore}}{\Max{\keystore}}{\Max{\keystore}} \rangle$\\
	$\Clr{\keystore}$ & $=$ & $\ltrip{\Max{\keystore}}{\Min{\keystore}}{\Min{\keystore}} \rangle$\\
	\multicolumn{3}{l}{
		$\authorityOf{\keystore} =\ltrip{\Max{\keystore}}{\Max{\keystore}}{\Max{\keystore}} $
	}
	\end{tabular}
}
\newcommand{\figClioSemanticsBaseMain}{
	%%%%%%%%%%%%%%%%%%%%%%%%%%%%%%%%%%%%%%%%%%%%%%%%%%%%%%%%%%%%%%%%%%%%%%%%%%%%%%%
	% \inferrule[lOp]
	% {
	% v = [\!\!| l_1 \otimes l_2  |\!\!]
	% }
	% {
	%   \conf{\lcur}{\lclr}{l_1 \otimes l_2} \lto \conf{\lcur}{\lclr}{v}
	% }
	% \and
	%%%%%%%%%%%%%%%%%%%%%%%%%%%%%%%%%%%%%%%%%%%%%%%%%%%%%%%%%%%%%%%%%%%%%%%%%%%%%%%
	\rulename{LabelOf}\hfill
	\vspace{-1em}\\
	\inferrule
	{
	}
	{
	  \conf{\lcur}{\lclr}{\labelOf{(\LabeledO{l_1}{t})}} \lto \conf{\lcur}{\lclr}{l}
	}
	\vspace{-1em}
	\\
	%%%%%%%%%%%%%%%%%%%%%%%%%%%%%%%%%%%%%%%%%%%%%%%%%%%%%%%%%%%%%%%%%%%%%%%%%%%%%%%
	\rulename{Return}\hfill
	\vspace{-1em}\\
	\inferrule
	{ }
	{
	  \conf{\lcur}{\lclr}{\return{t}} \lto \conf{\lcur}{\lclr}{\CLIO{t}}
	}
	\vspace{-1em}
	\\
	%%%%%%%%%%%%%%%%%%%%%%%%%%%%%%%%%%%%%%%%%%%%%%%%%%%%%%%%%%%%%%%%%%%%%%%%%%%%%%%
	\rulename{Bind}\hfill
	\vspace{-1em}\\
	\inferrule
	{ }
	{
	\conf{\lcur}{\lclr}{\bind{\CLIO{t_1}}{t_2}} \lto \conf{\lcur}{\lclr}{(t_2\ t_1)}
	}
	\vspace{-1em}
	\\
	% %%%%%%%%%%%%%%%%%%%%%%%%%%%%%%%%%%%%%%%%%%%%%%%%%%%%%%%%%%%%%%%%%%%%%%%%%%%%%%%
	% \inferrule[lowerClearance]
	% { \canFlowTo{\lcur}{l} \and
	%   \canFlowTo{l}{\lclr} }
	% {
	% \conf{\lcur}{\lclr}{\lowerClr{l}} \lto \conf{\lcur}{l}{\return{()}}
	% }
	% \vspace{-1em}
	%%%%%%%%%%%%%%%%%%%%%%%%%%%%%%%%%%%%%%%%%%%%%%%%%%%%%%%%%%%%%%%%%%%%%%%%%%%%%%%
	\rulename{Label}\hfill
	\vspace{-1em}\\
	\inferrule
	{ \canFlowTo{\lcur}{l_1}  \and
	  \canFlowTo{l_1}{\lclr}
	}
	{
	\conf{\lcur}{\lclr}{\llabel{l_1}{\gv}} \lto \conf{\lcur}{\lclr}{\return{(\LabeledO{l_1}{\gv})}}
	}
	\vspace{-1em}
	\\
	%%%%%%%%%%%%%%%%%%%%%%%%%%%%%%%%%%%%%%%%%%%%%%%%%%%%%%%%%%%%%%%%%%%%%%%%%%%%%%%
	\rulename{Unlabel}\hfill
	\vspace{-1em}\\
	\inferrule
	{ \join{\lcur}{l_1} = l_2 \and \canFlowTo{l_2}{\lclr}}
	{
	\conf{\lcur}{\lclr}{\unlabel{(\LabeledO{l_1}{t_2})}} \lto \conf{l_2}{\lclr}{\return{t_2}}
	}
	\vspace{-0.5em}
	\\
	%%%%%%%%%%%%%%%%%%%%%%%%%%%%%%%%%%%%%%%%%%%%%%%%%%%%%%%%%%%%%%%%%%%%%%%%%%%%%%%
	\rulename{ToLabeled}\hfill
	\vspace{-1em}\\
	\inferrule
	{\canFlowTo{\lcur}{l_1}  \and
	  \canFlowTo{l_1}{\lclr} \and
	%  \conf{\lcur}{\lclr}{t} \lto^*_n \conf{l'}{c'}{m}{\LIO{t'}} \and
	%  \canFlowTo{l'}{l_1} \leadsto \true
	}
	{
	\conf{\lcur}{\lclr}{\toLabeled{l_1}{t}}\lto \conf{\lcur}{\lclr}{\block{\lcur}{\lclr}{l_1}{t}}
	}
	\vspace{-1em}
	\\
	%%%%%%%%%%%%%%%%%%%%%%%%%%%%%%%%%%%%%%%%%%%%%%%%%%%%%%%%%%%%%%%%%%%%%%%%%%%%%%%
	\rulename{Reset}\hfill
	\vspace{-1em}\\
	\inferrule
	{\canFlowTo{\lcur}{l_2}
	}
	{
	\conf{\lcur}{\lclr}{\block{l_1}{l_3}{l_2}{\CLIO{t}}}\lto \conf{l_1}{l_3}{\llabel{l_2}{t}}
	}
}
\newcommand{\figClioSemanticsStoreMain}{
	%%%%%%%%%%%%%%%%%%%%%%%%%%%%%%%%%%%%%%%%%%%%%%%%%%%%%%%%%%%%%%%%%%%%%%%%%%%%%%%
	\rulename{Store}\hfill
	\vspace{-1em}\\
	\inferrule
	{\canFlowTo{\lcur}{\lowlabel} \and
	\canFlowTo{\lcur}{l_1} \and
	\event = \storecmd{\gv_k}{\Labeled{{l_1}}{\gv}}
	}
	{
	\conf{\lcur}{\lclr}{\store{\gv_k}{\Labeled{l_1}{\gv}}}\ltol{\event} 
	    \conf{\lcur}{\lclr}{\return{\unitT}}
	}
	\vspace{-1em}
	\\
	%%%%%%%%%%%%%%%%%%%%%%%%%%%%%%%%%%%%%%%%%%%%%%%%%%%%%%%%%%%%%%%%%%%%%%%%%%%%%%%
	\rulename{Fetch-Valid}\hfill
	\vspace{-1em}\\
	\inferrule
	{
	%\tau = \typeOf{\gv_d} \and
	\canFlowToA{\projA{\lowlabel}}{\projA{l_d}} \\\\
	\event = \fetchcmd{\gv_k}{\Labeled{l}{\gv}} \and
	\canFlowTo{l}{{l_d}} 
	 %\and
	% l_3 = \join{l_1}{\Availability{\lowlabel}} 
	}
	{
	\conf{\lcur}{\lclr}{\fetch{\gv_k}{\Labeled{l_d}{\gv_d}}}\ltol{\event} 
	    \conf{\lcur}{\lclr}{{\return{\Labeled{l_d}{\gv}}}}
	}
	\vspace{-1em}
	\\
	%%%%%%%%%%%%%%%%%%%%%%%%%%%%%%%%%%%%%%%%%%%%%%%%%%%%%%%%%%%%%%%%%%%%%%%%%%%%%%%
	\rulename{Fetch-Invalid}\hfill
	\vspace{-1em}\\
	\inferrule
	{
	%\tau = \typeOf{\gv_d} \and
	\canFlowToA{\projA{\lowlabel}}{\projA{l_d}}
	\\\\
	(\event = \missingcmd{\gv_k})
		 \mathrm{~or~} 
	(\event = \fetchcmd{\gv_k}{\Labeled{l}{\gv}} \mathrm{~and~} \cantFlowTo{l}{{l_d}} )% \\\\ 
	% l_3 = \join{l_d}{\Availability{\lowlabel}}
	}
	{
	\conf{\lcur}{\lclr}{\fetch{\gv_k}{{\Labeled{l_d}{\gv_d}}}}\ltol{\event} 
	    \conf{\lcur}{\lclr}{\return{\Labeled{l_d}{\gv_d}}}
	}
}
\newcommand{\figLioSemanticsFull}{
	%%%%%%%%%%%%%%%%%%%%%%%%%%%%%%%%%%%%%%%%%%%%%%%%%%%%%%%%%%%%%%%%%%%%%%%%%%%%%%%
	\inferrule[app]
	{
	}
	{
	  \conf{\lcur}{\lclr}{(\lambda x. t_1)\ t_2} \lto \conf{\lcur}{\lclr}{[t_2/x ]~t_1}
	}
	\and
	%\vspace{-1em}
	%%%%%%%%%%%%%%%%%%%%%%%%%%%%%%%%%%%%%%%%%%%%%%%%%%%%%%%%%%%%%%%%%%%%%%%%%%%%%%%
	\inferrule[fix]
	{
	}
	{
	  \conf{\lcur}{\lclr}{\fix{(\lambda x. t)}} \lto \conf{\lcur}{\lclr}{[\fix{(\lambda x. t)}/x ]~t}
	}
	%\vspace{-1em}
	\and
	%%%%%%%%%%%%%%%%%%%%%%%%%%%%%%%%%%%%%%%%%%%%%%%%%%%%%%%%%%%%%%%%%%%%%%%%%%%%%%%
	\inferrule[ifTrue]
	{
	}
	{
	  \conf{\lcur}{\lclr}{\ifel{\true}{t_2}{t_3}} \lto \conf{\lcur}{\lclr}{t_2}
	}
	%\vspace{-1em}
	\and
	%%%%%%%%%%%%%%%%%%%%%%%%%%%%%%%%%%%%%%%%%%%%%%%%%%%%%%%%%%%%%%%%%%%%%%%%%%%%%%%
	\inferrule[getLabel]
	{ }
	{
	  \conf{\lcur}{\lclr}{\getLabel} \lto \conf{\lcur}{\lclr}{\return{\lcur}}
	}
	%\vspace{-1em}
	\and
	%%%%%%%%%%%%%%%%%%%%%%%%%%%%%%%%%%%%%%%%%%%%%%%%%%%%%%%%%%%%%%%%%%%%%%%%%%%%%%%
	\inferrule[ifFalse]
	{
	}
	{
	  \conf{\lcur}{\lclr}{\ifel{\false}{t_2}{t_3}} \lto \conf{\lcur}{\lclr}{t_3}
	}
	%\vspace{-1em}
	\and
	%%%%%%%%%%%%%%%%%%%%%%%%%%%%%%%%%%%%%%%%%%%%%%%%%%%%%%%%%%%%%%%%%%%%%%%%%%%%%%%
	\inferrule[getClearance]
	{ }
	{
	  \conf{\lcur}{\lclr}{\getClearance} \lto \conf{\lcur}{\lclr}{\return{\lclr}}
	}
	%\vspace{-1em}
	\and
	%%%%%%%%%%%%%%%%%%%%%%%%%%%%%%%%%%%%%%%%%%%%%%%%%%%%%%%%%%%%%%%%%%%%%%%%%%%%%%%
	\inferrule[lowerClearance]
	{ \canFlowTo{\lcur}{l_1} \and
	  \canFlowTo{l_1}{\lclr} }
	{
	\conf{\lcur}{\lclr}{\lowerClr{l_1}} \lto \conf{\lcur}{l_1}{\return{()}}
	}
	\and
	\inferrule[LabelOp]
	{
	\otimes \in \{ \glb, \lub, \flows \} \and
	\gv = l_1 \otimes l_2
	}
	{
	  \conf{\lcur}{\lclr}{l_1 \otimes l_2} \lto \conf{\lcur}{\lclr}{\gv}
	}
	\and
	\inferrule[Step]
	{
          \conf{\lcur}{\lclr}{t} \lto \conf{\lcur'}{\lclr'}{t'}
	}
	{
          \conf{\lcur}{\lclr}{E[t]} \lto \conf{\lcur'}{\lclr'}{E[t']}
	}

}
\newcommand{\figLioTypingRules}{
	\inferrule[bool]
	{
		b \in \{ \true, \false \}
	}{
		\Gamma \vdash b : \BoolT
	} \and
	\inferrule[unit]
	{
	}{
		\Gamma \vdash \unit : \unitT
	} \and
	\inferrule[label]
	{
	}{
		\Gamma \vdash l : \LabelT
	} \and
	\inferrule[pair]
	{
		\Gamma \vdash t_1 : \tau_1
		\and \Gamma \vdash t_2 : \tau_2
	}{
		\Gamma \vdash (t_1, t_2) : (\tau_1, \tau_2)
	} \and
	\inferrule[labeled]
	{
		\Gamma \vdash v : \tau
	}{
		\Gamma \vdash \Labeled{l}{v} : \LabeledT{\tau}
	}
	\and
	\inferrule[var]
	{
		\Gamma(x) = \tau
	}{
		\Gamma \vdash x : \tau
	} \and
	\inferrule[abstraction]
	{
		\Gamma[x \mapsto \tau_1] \vdash t_1 :  \tau_2 
	}{
		\Gamma \vdash \lambda x.\ t_1 : \tau_1 \rightarrow \tau_2
	} \and
	%%%%%%%%%%%%%%%%%%%%%%%%%%%%%%%%%%%%%%%%%%%%%%%%%%%%%%%%%%%%%%%%%%%%%%%%%%%%%%%
	\inferrule[app]
	{
	\Gamma \vdash t_1 : \tau_2 \rightarrow \tau_1
	\and
	\Gamma \vdash t_2 : \tau_2
	}
	{
	   \Gamma \vdash {t_1\ t_2} : \tau_1
	}
	\and
	%\vspace{-1em}
	%%%%%%%%%%%%%%%%%%%%%%%%%%%%%%%%%%%%%%%%%%%%%%%%%%%%%%%%%%%%%%%%%%%%%%%%%%%%%%%
	\inferrule[fix]
	{
	\Gamma \vdash  t : (\tau_1 \rightarrow \tau_2) \rightarrow \tau_2
	}
	{
	  \Gamma \vdash \fix{t} : \tau_1 \rightarrow \tau_2
	}
	%\vspace{-1em}
	\and
	\inferrule[LIO]
	{
		\Gamma \vdash t : \tau
	}{
		\Gamma \vdash \LIO{t} : \LIOT{\tau}
	}
	\and
	%%%%%%%%%%%%%%%%%%%%%%%%%%%%%%%%%%%%%%%%%%%%%%%%%%%%%%%%%%%%%%%%%%%%%%%%%%%%%%%
	\inferrule[if]
	{
	\Gamma \vdash t_1 : \BoolT \and
	\Gamma \vdash t_2 : \tau \and
	\Gamma \vdash t_3 : \tau 
	}
	{
	  \Gamma \vdash {\ifel{t_1}{t_2}{t_3}} : \tau
	}
	\and
	\inferrule[LabelOp]
	{
	\otimes \in \{ \glb, \lub, \flows \} \and
	\Gamma \vdash t_1 : \LabelT \and
	\Gamma \vdash t_2 : \LabelT
	}
	{
	  \Gamma \vdash {t_1 \otimes t_2} : \LabelT
	}
	%\vspace{-1em}
	\and
	%%%%%%%%%%%%%%%%%%%%%%%%%%%%%%%%%%%%%%%%%%%%%%%%%%%%%%%%%%%%%%%%%%%%%%%%%%%%%%%
	\inferrule[getLabel]
	{ }
	{
	  \Gamma \vdash {\getLabel} : \LabelT
	}
	%\vspace{-1em}
	\and
	%%%%%%%%%%%%%%%%%%%%%%%%%%%%%%%%%%%%%%%%%%%%%%%%%%%%%%%%%%%%%%%%%%%%%%%%%%%%%%%
	\inferrule[getClearance]
	{ }
	{
	  \Gamma \vdash {\getClearance} : \LabelT
	}
	%\vspace{-1em}
	\and
	%%%%%%%%%%%%%%%%%%%%%%%%%%%%%%%%%%%%%%%%%%%%%%%%%%%%%%%%%%%%%%%%%%%%%%%%%%%%%%%
	\inferrule[lowerClearance]
	{ \Gamma \vdash t : \LabelT
	}{
	\Gamma \vdash \lowerClr{t} : \LIOT{\unitT}
	}
	\and
	\inferrule[labelOf]
	{
		\Gamma \vdash t : \LabeledT{\tau}
	}
	{
	  \Gamma \vdash \labelOf{t} : \LabelT
	}
	\and
	%%%%%%%%%%%%%%%%%%%%%%%%%%%%%%%%%%%%%%%%%%%%%%%%%%%%%%%%%%%%%%%%%%%%%%%%%%%%%%%
	\inferrule[return]
	{ \Gamma \vdash t : \tau }
	{
	  \Gamma \vdash \return{t} : \LIOT{\tau}
	}
	\and
	%%%%%%%%%%%%%%%%%%%%%%%%%%%%%%%%%%%%%%%%%%%%%%%%%%%%%%%%%%%%%%%%%%%%%%%%%%%%%%%
	\inferrule[bind]
	{ 
		\Gamma \vdash t_1 : \LIOT{\tau_1} \and
		\Gamma \vdash t_2 : \tau_1 \rightarrow \LIOT{\tau_2}
	}
	{
		\Gamma \vdash \bind{t_1}{t_2} : \LIOT{\tau_2}
	}
	\and
	\inferrule[label]
	{ \Gamma \vdash t_1 : \LabelT \and
	\Gamma \vdash t_2 : \tau
	}
	{
	\Gamma \vdash \llabel{t_1}{t_2} : \LabeledT{\tau}
	}
	\and
	%%%%%%%%%%%%%%%%%%%%%%%%%%%%%%%%%%%%%%%%%%%%%%%%%%%%%%%%%%%%%%%%%%%%%%%%%%%%%%%
	\inferrule[unlabel]
	{ \Gamma \vdash t : \LabeledT{\tau} }
	{
	\Gamma \vdash \unlabel{t} : \LIOT{\tau}
	}
	\and
	%%%%%%%%%%%%%%%%%%%%%%%%%%%%%%%%%%%%%%%%%%%%%%%%%%%%%%%%%%%%%%%%%%%%%%%%%%%%%%%
	\inferrule[toLabeled]
	{\Gamma \vdash t_1 : \LabelT \and
	\Gamma \vdash t_2 : \LIOT{\tau}
	}
	{
	\Gamma \vdash \toLabeled{t_1}{t_2} : \LabeledT{\tau}
	}
	%\vspace{-0.5em}
	\and
	%%%%%%%%%%%%%%%%%%%%%%%%%%%%%%%%%%%%%%%%%%%%%%%%%%%%%%%%%%%%%%%%%%%%%%%%%%%%%%%
	\inferrule[reset]
	{ \Gamma \vdash t : \LIOT{\tau}
	}
	{
	\Gamma \vdash \block{l_1}{l_3}{l_2}{{t}} : \LabeledT{\tau}
	}
	\and
	%%%%%%%%%%%%%%%%%%%%%%%%%%%%%%%%%%%%%%%%%%%%%%%%%%%%%%%%%%%%%%%%%%%%%%%%%%%%%%%
	\inferrule[store]
	{
		\Gamma \vdash t_1 : \gtau' \and
		\Gamma \vdash t_2 : \LabeledT{\gtau}
	}
	{
	\Gamma \vdash \store{t_1}{t_2} : \LIOT{\unitT}
	}
	\vspace{-0.5em}
	\and
	%%%%%%%%%%%%%%%%%%%%%%%%%%%%%%%%%%%%%%%%%%%%%%%%%%%%%%%%%%%%%%%%%%%%%%%%%%%%%%%
	\inferrule[fetch]
	{
		\Gamma \vdash t_1 : \gtau' \and
		\Gamma \vdash t_2 : \LabeledT{\gtau}
	}
	{
		\Gamma \vdash \fetchvar{t_1}{t_2}{\gtau} : \LIOT{(\LabeledT{\gtau})}
	}
}
\newcommand{\figLioEvalContextMain}{}
\newcommand{\figLioEvalContextFull}{
	\begin{align*}
           && E   \Coloneqq~ &
                     [\cdot]
               \ |\  E\ t
               \ |\  \ifel{E}{t}{t}
               \ |\  \bind{E}{t}
               \ |\  \llabel{E}{t}
 \\ && &	   \ |\  \join{E}{t} \ |\ \join{l}{E}
 			   \ |\  \meet{E}{t} \ |\ \join{l}{E}
 			   \ |\  \canFlowTo{E}{t} \ |\ \canFlowTo{l}{E}
 \\ && &              \ |\  \llabel{\gv}{E}
               \ |\  \labelOf{E}
               \ |\  \unlabel{E}
 \\ && &              \ |\  \lowerClr{E}
               \ |\  \toLabeled{E}{t}
               \ |\  \block{l}{l}{l}{E}
 \\ && &       \ |\  \store{E}{t}
               \ |\  \store{\gv}{E}
               \ |\  \fetch{E}{t}
               \ |\  \fetch{\gv}{E}
	\end{align*}
}
\newcommand{\figLioSyntaxMain}{
	% \textrm{Labels:}       && l     \Coloneqq~ &
	%                                 \lbl{cnf}{cnf}
	%\\
	% \multicolumn{3}{l}{\textrm{Ground Value:}}
	% \\
	\textrm{Ground Value:}
	                       && \gv   \Coloneqq~ &
	                                 \true
	                           \ |\  \false
	                           \ |\  \unit
	                           \ |\   l
	                        %  \ |\   a
	                           \ |\   (\gv,\gv)
	\\
	\textrm{Value:}         && v    \Coloneqq~ &
	                                 \gv
	                           \ |\   (v,v)
	                        %  \ |\  \Ex
	                           \ | \  x
	                           \ |\  \lambda x.t
	                        %  \ |\  \LabeledF{t_2}{t_3}
	                        %  \ |\  \LIORef{t_1}{t_2}
	                           \ |\  \LIO{t}
	                           \ |\  \LabeledO{l}{v}
	                        %  \ |\  \Exception{t}
	\\
	\textrm{Term:}    &&             t  \Coloneqq~ &
	                                 v
	                           \ |\  (t,t)
	                           \ |\  t\ t
	                           \ |\  \fix{t}
	                           \ |\  \ifel{t}{t}{t}
	                \\&& &
	                         \ |\  \join{t_1}{t_2}
	                         \ |\  \meet{t_1}{t_2}
	                         \ |\  \canFlowTo{t_1}{t_2}
	                \\&& &     \ |\  \return{t}
	                           \ |\  \bind{t}{t}
	                \\&& &     \ |\  \llabel{t}{t}
	                           \ |\  \labelOf{t}
	                           \ |\  \unlabel{t}
	                \\&& &     \ |\  \getLabel
	                           \ |\  \getClearance
	                           \ |\  \lowerClr{t}
	                \\&& &     \ |\  \toLabeled{t}{t}
	                           \ |\  \block{l}{l}{l}{t}
	                \\&& &     \ |\  \store{t}{t}
	                           \ |\  \fetch{t}{t}
	\\
	% \multicolumn{3}{l}{\textrm{Ground Type:}}
	% \\
	\textrm{Ground Type:}
	                  && \gtau \Coloneqq~&
	                                 \BoolT
	                           \ |\  \unitT
	                           \ |\  \LabelT
	                           \ |\  (\gtau, \gtau)
	\\
	\textrm{Type:}    && \tau \Coloneqq~&
	                                 \gtau
	                           \ |\  (\tau, \tau)
	                           \ |\  \tau\to\tau
	                           \ |\  \LIOT{\tau}
	                          \  |\  \LabeledT{\tau}
	                        %  \ |\  \LIORefT{\tau}
	                        %  \ |\  \ExceptionT
}
\newcommand{\figCLioSyntaxMain}{
	% \textrm{Labels:}       && l     \Coloneqq~ &
	%                                 \lbl{cnf}{cnf}
	%\\
	% \multicolumn{3}{l}{\textrm{Ground Value:}}
	% \\
	\textrm{Ground Value:}
	                       && \gv   \Coloneqq~ &
	                                 \true
	                           \ |\  \false
	                           \ |\  \unit
	                           \ |\   l
	                        %  \ |\   a
	                           \ |\   (\gv,\gv)
	\\
	\textrm{Value:}         && v    \Coloneqq~ &
	                                 \gv
	                           \ |\   (v,v)
	                        %  \ |\  \Ex
	                           \ | \  x
	                           \ |\  \lambda x.t
	                        %  \ |\  \LabeledF{t_2}{t_3}
	                        %  \ |\  \LIORef{t_1}{t_2}
	                           \ |\  \CLIO{t}
	                           \ |\  \LabeledO{l}{v}
	                        %  \ |\  \Exception{t}
	\\
	\textrm{Term:}    &&             t  \Coloneqq~ &
	                                 v
	                           \ |\  (t,t)
	                           \ |\  t\ t
	                           \ |\  \fix{t}
	                           \ |\  \ifel{t}{t}{t}
	                \\&& &
	                         \ |\  \join{t_1}{t_2}
	                         \ |\  \meet{t_1}{t_2}
	                         \ |\  \canFlowTo{t_1}{t_2}
	                \\&& &     \ |\  \return{t}
	                           \ |\  \bind{t}{t}
	                \\&& &     \ |\  \llabel{t}{t}
	                           \ |\  \labelOf{t}
	                           \ |\  \unlabel{t}
	                \\&& &     \ |\  \getLabel
	                           \ |\  \getClearance
	                           \ |\  \lowerClr{t}
	                \\&& &     \ |\  \toLabeled{t}{t}
	                           \ |\  \block{l}{l}{l}{t}
	                \\&& &     \ |\  \store{t}{t}
	                           \ |\  \fetch{t}{t}
	\\
	% \multicolumn{3}{l}{\textrm{Ground Type:}}
	% \\
	\textrm{Ground Type:}
	                  && \gtau \Coloneqq~&
	                                 \BoolT
	                           \ |\  \unitT
	                           \ |\  \LabelT
	                           \ |\  (\gtau, \gtau)
	\\
	\textrm{Type:}    && \tau \Coloneqq~&
	                                 \gtau
	                           \ |\  (\tau, \tau)
	                           \ |\  \tau\to\tau
	                           \ |\  \CLIOT{\tau}
	                          \  |\  \LabeledT{\tau}
	                        %  \ |\  \LIORefT{\tau}
	                        %  \ |\  \ExceptionT
}
\newcommand{\figLioSyntaxFull}{
	\figLioSyntaxMain
}
\newcommand{\figIdealClioMain}{
	\rulename{Internal-Step}\hfill
	\vspace{-1em}\\
	\inferrule{
	  c \lto c'
	}{
	  \iconf{c}{\istore} \ito \iconf{c'}{\istore}
	}
	\vspace{-1em}
	\\
	\rulename{Store}\hfill
	\vspace{-1em}\\
	\inferrule{
	  c \ltol{\storecmd{\gv_k}{\Labeled{~l_1}{\gv~}}} c' \and
	  \istore' = \istore[\gv_k \mapsto \Labeled{l_1}{\gv}]
	}{
	  \iconf{c}{\istore} \ito \iconf{c'}{\istore'}
	}
	\vspace{-1em}
	\\
	\rulename{Fetch-Exists}\hfill
	\vspace{-1em}\\
	\inferrule{
	  c \ltol{\fetchcmd{\gv_k}{\Labeled{~l_1}{\gv~}}} c' \and
	  \Labeled{l_1}{\gv} = \istore(\gv_k)
	}{
	  \iconf{c}{\istore} \ito \iconf{c'}{\istore}
	}
	\vspace{-1em}
	\\
	\rulename{Fetch-Missing}\hfill
	\vspace{-1em}\\
	\inferrule{
	  c \ltol{\missingcmd{\gv_k}} c' \and
	  \istore(\gv_k) = \missing
	}{
	  \iconf{c}{\istore} \ito \iconf{c'}{\istore}
	}
}
\newcommand{\figIdealClioFull}{
	\figIdealClioMain
}
\newcommand{\figRealClioMain}{
	\rulename{Internal-Step}\hfill
	\vspace{-1em}\\
	\inferrule{
		c \lto c'
	}{
		\rconf{c}{\istoredist}{\interactionsdist}{\versions} \rto{1} \rconf{c'}{\istoredist}{\interactionsdist}{\versions}
	}
	\vspace{-1em}
	\\
	\rulename{Store}\hfill
	\vspace{-1em}\\
	\inferrule{
		c \ltol{\storecmd{\gv_k}{\Labeled{~l_1}{\gv~}}} c' \\\\
		\version = \increment{\versions(\gv_k)} \and
		\versions' = \versions[\gv_k \mapsto \version] \\\\
		\begin{array}{l<{\hspace{-2mm}}c<{\hspace{-2mm}}l<{\hspace{-2mm}}l}
		\interactionsdist' &=& \randomExprL ~ & \concat{\concat{\storecmd{\gv_k}{\Labeled{l_1}{\bitstream}}}{\interactionsprime}}{\interactions} ~ ~ \Big|~ ~
			\interactions \drawnFrom \interactionsdist;  \arcr
			%\probabilityOf{(\interactionsprime, \Labeled{l_1}{\bitstream}) = 
				&&&
				(\interactionsprime, \Labeled{l_1}{\bitstream}) \drawnFrom \serialize{\istore}{\Labeled{l_1}{(\gv, \gv_k, \version)}}
				 ~ \randomExprR
			%\istoredist' &=& \big\{~ & \interactions(\sigma) ~\Big|~ \istore \drawnFrom \istoredist;~ \interactions \drawnFrom \interactionsdist ~ \big\}
		\end{array} 
		%p > 0 \and
		%\interaction =  \\\\
		%\interactionsdist' = \{ ~ \concat{\interaction}{\concat{\interactionsprime}{\interactions}}~|~ \interactions \drawnFrom \interactionsdist ~ \} \\\\
	}{
		\rconf{c}{\istoredist}{\interactionsdist}{\versions} \rto{1} 
			\rconf{c'}{\istoredist}{\interactionsdist'}{\versions'}
	}
	\vspace{-1em}
	\\
	\rulename{Fetch-Exists}\hfill
	\vspace{-1em}\\
	\inferrule{
		c \ltol{\fetchcmd{\gv_k}{\Labeled{~l_1}{\gv~}}} c' \and \version \not< \versions(\gv_k)\\\\
		(\istore, p) \in \randomExprL ~\interactions(\emptyset) ~|~ \interactions \drawnFrom \interactionsdist ~ \randomExprR \and p > 0 
		\\\\
		\Labeled{l_1}{(\gv, \gv_k, \version)} = \deserialize{\istore}{\istore(\gv_k)}
	}{
		\rconf{c}{\istoredist}{\interactionsdist}{\versions} \rto{p}
			\rconf{c}{\istoredist}{\interactionsdist}{\versions}
	}
	\vspace{-1em}
	\\
	\rulename{Fetch-Missing}\hfill
	\vspace{-1em}\\
	\inferrule{
		c \ltol{\missingcmd{\gv_k}} c' \\\\
		(\istore, p) \in \randomExprL ~\interactions(\emptyset) ~|~ \interactions \drawnFrom \interactionsdist ~ \randomExprR \and p > 0 \\\\
		%\gv_k \not\in dom(\istore) \textor 
			\deserialize{\istore}{\istore(\gv_k)} \mathrm{~undefined}
	}{
		\rconf{c}{\istoredist}{\interactionsdist}{\versions} \rto{p}
			\rconf{c}{\istoredist}{\interactionsdist}{\versions}
	}
	\vspace{-1em}
	\\
	\rulename{Fetch-Replay}\hfill
	\vspace{-1em}\\
	\inferrule{
		c \ltol{\missingcmd{\gv_k}} c' \and 	\gv_k' \neq \gv_k \textor \version < \versions(\gv_k)\\\\
		(\istore, p) \in \randomExprL ~\interactions(\emptyset) ~|~ \interactions \drawnFrom \interactionsdist ~ \randomExprR \and p > 0 \\\\
		\Labeled{l_1}{(\gv, \gv_k', \version)} = \deserialize{\istore}{\istore(\gv_k)} 
	}{
		\rconf{c}{\istoredist}{\interactionsdist}{\versions} \rto{p}
			\rconf{c}{\istoredist}{\interactionsdist}{\versions}
	}
}
\newcommand{\figRealClioFull}{
	\figRealClioMain
}
\newcommand{\figRealClioLowStep}{
	\rulename{Low-Step}\hfill
	\vspace{-1em}\\
	\inferrule{
	\interactionsdist' = \randomExprL ~ \concat{\interactions_A}{\interactions} ~|~ \interactions_A \drawnFrom \interactionsdist_A;~ \interactions \drawnFrom \interactionsdist ~ \randomExprR \\\\
	  \rconf{c}{\istoredist'}{{\interactionsdist'}}{\versions} 
	  	\rto{p}
	  \rconf{c'}{\istoredist''}{\interactionsdist''}{\versions'} \\\\
	  \canFlowTo{\pcOf{c}}{\projC{\lowlabel}} \and
	  \canFlowTo{\pcOf{c'}}{\projC{\lowlabel}} 
	}{ 
		(\rconf{c}{\istoredist_0}{\interactionsdist}{\versions}, \interactionsdist_A)
			\Lowstep{p}
		\rconf{c'}{\istoredist''}{\interactionsdist''}{\versions'}
	}
	\vspace{-1em}
	\\
	\rulename{Low-to-High-to-Low-Step}\hfill
	\vspace{-1em}\\
	\inferrule
	{
	  \interactionsdist' = \randomExprL ~ \concat{\interactions_A}{\interactions} ~|~ \interactions_A \drawnFrom \interactionsdist_A;~ \interactions \drawnFrom \interactionsdist ~ \randomExprR \\\\
	  \rconf{c}{\istoredist'}{\interactionsdist'}{\versions} 
	  	\rto{p_0}
	  \rconf{c_0}{\istoredist_0}{\interactionsdist_0}{\versions)} \\\\
	  \rconf{c_0}{\istoredist_0}{\interactionsdist_0}{\versions_0} 
	  	\rto{p_1} ... \rto{p_j}
	  \rconf{c_j}{\istoredist_j}{\interactionsdist_j}{\versions_j} \\\\
	  \forall_{0 \leq i < j}.~\cantFlowTo{\pcOf{c_i}}{\lowlabel} \and
	  \canFlowTo{\pcOf{c_j}}{\lowlabel} \and
	  p = \Pi_{0 \leq i \leq j}\ p_i
	}{
	  (\rconf{c}{\istoredist}{\interactionsdist}{\versions}, \interactionsdist_A) 
			\Lowstep{p}
	  \rconf{c_j}{\istoredist_j}{\interactionsdist_j}{\versions_j}
	}
}
\newcommand{\figIdealClioLowStep}{
%	\rulename{Low-Step}\hfill
%	\vspace{-1em}\\
	\inferrule[Low-Step]
	{
	  \iconf{c}{\iinteractionsvar{\iinteraction}(\istore)}
	  	\ito
	  \iconf{c'}{\istore'} \\\\
	  \canFlowTo{\pcOf{c}}{\lowlabel} \and
	  \canFlowTo{\pcOf{c'}}{\lowlabel}
	}{ 
		(\iconf{c}{\istore}, \iinteractionsvar{\iinteraction})
			\Lowstep{}
		\iconf{c'}{\istore'}
	}
        %	\vspace{-1em} \\
        \hfill
%	\rulename{Low-to-High-to-Low-Step}\hfill
%	\vspace{-1em}\\
	\inferrule[Low-to-High-to-Low-Step]
	{
	  \iconf{c}{\iinteractionsvar{\iinteraction}(\istore)}
	  	\ito
	  \iconf{c_0}{\istore_0} \\\\
	  \iconf{c_0}{\istore_0}
	  	\ito ... \ito
	  \iconf{c_j}{\istore_j} \\\\
	  \forall_{0 \leq i < j}.~\cantFlowTo{\pcOf{c_i}}{\lowlabel} \and
	  \canFlowTo{\pcOf{c_j}}{\lowlabel}
	}{
	  (\iconf{c}{\istore}, \iinteractionsvar{\iinteraction})
			\Lowstep{}
	  \iconf{c_j}{\istore_j}
	}
}
\newcommand{\figRealClioInteractions}{
	\begin{align*}
		&\textrm{Interaction:} &
		 & \interaction & \Coloneqq~ & 
			\skipcmd = \lambda\istore.~ \istore 
		\\&&&& \ | \ & \storecmd{\gv_k}{\Labeled{l_1}{\bitstream}} = \lambda\istore.~ \istore[\gv_k \mapsto \Labeled{l_1}{\bitstream}] 
		\\&&&& \ | \ & \storecmd{\category}{\categorykey} = \lambda\istore.~ \istore[\category \mapsto \categorykey] 
	\end{align*}
}
\newcommand{\figIdealClioInteractions}{
	\begin{alignat*}{2}
		  &
		   \iinteractions~ 
	   		& \Coloneqq~~ & \concat{\iinteraction}{\iinteractions} \ ~|~\ \iinteraction\\
		  & 
		 \iinteraction~ & \Coloneqq~~ & 
			\skipcmd=\lambda\istore. \istore 
		\\&& \ |~~ \ & \storecmd{\gv'}{\Labeled{l_1}{\gv}} = \lambda\istore.~ \istore[\gv' \mapsto \Labeled{l_1}{\gv}] \textrm{~s.t.~} \canFlowToI{\projI{\lowlabel}}{\projI{l_1}}
		% \\&& \ |~~ \ & \copycmd{\gv}{\gv'} = \lambda\istore.~ \istore[\gv' \mapsto \istore(\gv)] 
		\\&& \ |~~ \ & \corruptcmd{\gv_1, ... \gv_n}=\lambda\istore. \istore[\gv_1 \mapsto \missing; ... ~ \gv_n \mapsto \missing] 
	\end{alignat*}
}
\newcommand{\figStepFunctionDefLineA}{%
  \ensuremath{\multistep{\keystore}{c_0}{\strategy}{1} \hspace{4mm}  = \hspace{1mm} \Big\{
	  \big(\rconf{c_1}{\istoredist}{\interactionsdist_1}{\versions_1}, p_0 \cdot p_1\big)
	  ~\Big|~}
}
\newcommand{\figStepFunctionDefLineAa}{%
 \ensuremath{\hspace*{0.4cm}
	  (\rconf{c_0}{\{(\emptyset,1)\}}{ \{ (\skipcmd, 1) \} }{\zeroes}, \strategy(\{ (\skipcmd, 1) \}))
	  	\Lowstep{p_1}
	  \rconf{c_1}{\istoredist}{\interactionsdist_1}{\versions_1} ~
          \Big\}
        }
}
\newcommand{\figStepFunctionDefLineB}{%
  \ensuremath{
    \multistep{\keystore}{c_0}{\strategy}{j+1} = \hspace{1mm} \Big\{
    \big(\rconf{c_2}{\istoredist_2}{\interactionsdist_2}{\versions_2}, p_0 \cdot p_1 \big)
    ~\Big|~
    }
}
\newcommand{\figStepFunctionDefLineBa}{%
 \ensuremath{\hspace*{0.4cm}
   \big(\rconf{c_1}{\istoredist_1}{\interactionsdist_1}{\versions_1}, p_0\big) \in \multistep{\keystore}{c_0}{\strategy}{j};}
}
\newcommand{\figStepFunctionDefLineBb}{%
 \ensuremath{\hspace*{0.4cm}
	  (\rconf{c_1}{\istoredist_1}{\interactionsdist_1}{\versions_1}, \strategy(\interactionsdist_1))
	  	\Lowstep{p_1}
	  \rconf{c_2}{\istoredist_2}{\interactionsdist_2}{\versions_2} ~
          \Big\}}
}
\newcommand{\figStepFunctionDef}{%
  \ensuremath{
    \begin{array}{l}
      \figStepFunctionDefLineA\figStepFunctionDefLineAa \\
      \figStepFunctionDefLineB\figStepFunctionDefLineBa\figStepFunctionDefLineBb
    \end{array}
    }
}
\newcommand{\figStepFunctionDefNL}{%
	\ensuremath{\begin{array}{l}
         \figStepFunctionDefLineA \\
	 \figStepFunctionDefLineAa \\
         \figStepFunctionDefLineB \\
	 \figStepFunctionDefLineBa \\
         \figStepFunctionDefLineBb
        \end{array}}
}
\newcommand{\figCTAGame}{
	\figCTAGameVar{IND}{\interactionsvar{\interaction_b}}
}
\newcommand{\figCTAGameVar}[2]{
	\begin{array}{l}
	\hspace*{0cm}  \INDvar{b}{\keystore_0}{\adversary}{\principals}{j}{n}{#1} = \\
	\hspace*{0.4cm} \keystore' \drawnFrom \Gen(\principals, 1^n);  ~ 
	% \\ \hspace*{0.4cm} 
	\keystore = \keystore_0 \uplus \keystore'; \\
	\hspace*{0.4cm} t, v_0, v_1, \strategy, \adversary_2 \drawnFrom 
	         \adversary(\pubkeys{\keystore})
	         	\text{~~such~that~~} v_0 \lowEquivC v_1 \\
	\hspace*{3.4cm} \text{~and~} \vdash t : \LabeledT{\tau} \rightarrow \LIOT{\tau'} \\
	\hspace*{3.4cm} \text{~and~} \vdash v_0 : \LabeledT{\tau}  \\
	\hspace*{3.4cm} \text{~and~} \vdash v_1 : \LabeledT{\tau}  \\
	\hspace*{3.4cm} \text{~and~} \lowlabel = \authorityOf{\keystore_0}; \\
	\hspace*{0.4cm} \rconf{c}{\istoredist}{\interactionsdist_b}{\versions'} \drawnFrom \multistep{\keystore}{\conf{\Start{\keystore}}{\Clr{\keystore}}{(t\ v_b)}}{\strategy}{j}; \\
	\hspace*{0.4cm} \interactionsvar{\interaction_b} \drawnFrom \interactionsdist_b; ~ 
	%\hspace*{0.4cm} \mathrm{Output}~\adversary_2^{\pubkeys{\keystore}}({#2})
	%\\ \hspace*{0.4cm} 
	\mathrm{Output}~\adversary_2({#2})
	\end{array}
}
\newcommand{\figCPAGame}{
	\begin{array}{r@{\;}l}
	\hspace*{0cm} \mathrm{IND}_b(\mathcal{A}, n) =
	&(pk, sk) \drawnFrom \Gen(1^n); \\
	& m_0, m_1, \mathcal{A}_2 \drawnFrom \mathcal{A}(pk) \text{ s.t. } |m_0| = |m_1|;\\
	& c \drawnFrom \Enc(pk, m_b);\\
	& \mathrm{Output~} \mathcal{A}_2(c)
	\end{array}
}
\newcommand{\figForgeryProb}{
	\begin{array}{l}
	\textbf{Pr}\Big[ ~ 
		\Labeled{l_1}{b} \in \mathrm{Values}_{\keystore}(\interactionsprime)
		\textand
		\Labeled{l_1}{b} \not\in \mathrm{Values}_{\keystore}(\interactions)
		%\textand
		%(\gv, \gv_k', n') \not\in \mathrm{Vals}_{\keystore'}(\interactions)
	%~ \\ \hspace*{0.9cm}
		% \Labeled{l_1}{\Enc(sk, ((\gv, \gv_k, n), \tilde{s}))} \in rng(\istore') 
		%  \textand
		% \\ \hspace*{0.6cm} 
		% \Labeled{l_2}{\Enc(sk', ((\gv, \gv_k', n'), \tilde{s'}))} \not\in rng(\istore	)
		%\canFlowToStrictI{\projI{\pcOf{c'}}}{\projI{\lowlabel}}
	\\ \hspace*{0.1cm} \Big|  
		\hspace*{0.3cm} 
		\keystore' \drawnFrom \Gen(\{p\}, 1^n); ~ 
		%\\ \hspace*{0.5cm}
		\keystore = \keystore_0 \uplus \keystore'; \\
		\hspace*{0.5cm} t, \strategy, \adversary_2 \drawnFrom \adversary(\pubkeys{\keystore});\\
		\hspace*{0.5cm} \rconf{c}{\istoredist}{\interactionsdist}{\versions} \drawnFrom \multistep{\keystore}{\conf{\Start{\keystore}}{\Clr{\keystore}}{t}}{\strategy}{j}; \\
		\hspace*{0.5cm}\interactions \drawnFrom \interactionsdist; \\
		\hspace*{0.5cm} 
		t', \strategy' \drawnFrom \adversary_2(\interactions);\\
		\hspace*{0.5cm} \rconf{c'}{\istoredist'}{\interactionsdist'}{\versions'} \drawnFrom \multistep{\keystore}{\conf{\Start{\keystore_0}}{\Clr{\keystore}}{t'}}{\strategy'}{j}; \\ %; \\
		\hspace*{0.5cm}\interactionsprime \drawnFrom \interactionsdist' 
		%\hspace*{0.5cm} \istore \drawnFrom \istoredist; ~ \istore' \drawnFrom \istoredist' ~
	\Big]
	\end{array}
}
\newcommand{\figInteractionLemmaIH}{
	% \mathrm{~or~}
	% \wf{c_1'} \mathrm{~or~} \wf{c_2'}
	% \mathrm{~or~} \mathrm{~not~} \wf{\interactionsvar{\interaction_1}}
	% \mathrm{~or~} \mathrm{~not~} \wf{\interactionsvar{\interaction_2}}
	% \mathrm{~or~}
	% \versions_1 \neq \versions_2
	%\mathrm{~or~}
	%c_1' \not\lowEquivC c_2'
}
\newcommand{\figInteractionProbStatement}[2]{
	$ 
	\hspace*{1cm} \mathbf{Pr}\Big[  
		\rconf{c_1'}{\istoredist_1}{\interactionsdist_1}{\versions_1}
		\not\lowEquivC
		\rconf{c_2'}{\istoredist_2}{\interactionsdist_2}{\versions_2}
		\textor
		\versions_1 \neq \versions_2
		{#2}
	~~\Big| \\ \hspace*{2cm}
		 ~\keystore \drawnFrom \Gen(\principals, 1^n); ~
    	\keystore_1 = \keystore_0 \uplus \keystore; ~
		\rconf{c_1'}{\istoredist_1}{\interactionsdist_1}{\versions_1} \drawnFrom \multistep{\keystore_1}{c_1}{\strategy}{#1}; \\
	\hspace*{2cm}
		\keystore' \drawnFrom \Gen(\principals, 1^n); ~
    	\keystore_2 = \keystore_0 \uplus \keystore'; ~
		\rconf{c_2'}{\istoredist_2}{\interactionsdist_2}{\versions_2} \drawnFrom \multistep{\keystore_2}{c_2}{\strategy}{#1}
	\Big]$ \\ \\
	\hspace*{1cm} is negligible in $n$, and %$\interactionsdist_1 \asymp \interactionsdist_2$ 
	\\ \\
	$\hspace*{2cm} \Big\{  ~~ \interactions_1 ~~ \Big| ~~
	 \keystore' \drawnFrom \Gen(\principals, 1^n); ~
	 \keystore = \keystore_0 \uplus \keystore'; ~
	 \rconf{c_1'}{\istoredist_1}{\interactionsdist_1}{\versions_1} \drawnFrom \multistep{\keystore}{c_1}{\strategy}{#1}; ~ \interactions_1 \drawnFrom \interactionsdist_1 ~ \Big\}_n \\
		\hspace*{7cm} \asymp \\
		\hspace*{2cm} \Big\{  ~~ \interactions_2 ~~ \Big| ~~
	 \keystore' \drawnFrom \Gen(\principals, 1^n);~ 
	 \keystore = \keystore_0 \uplus \keystore';~
	 \rconf{c_2'}{\istoredist_2}{\interactionsdist_2}{\versions_2} \drawnFrom \multistep{\keystore}{c_2}{\strategy}{#1}; ~ \interactions_2 \drawnFrom \interactionsdist_2 ~ \Big\}_n $
	%  \\ \\
	%  \hspace*{1cm} and, \\ \\
	%  $\hspace*{2cm} \Big\{  ~~ \istoredist_1 ~~ \Big| ~~
	%  \keystore' \drawnFrom \Gen(\principals, 1^n); ~
	%  \keystore = \keystore_0 \uplus \keystore'; ~
	%  \rconf{c_1'}{\istoredist_1}{\interactionsvar{\interaction_1}}{\versions_1} \drawnFrom \multistep{\keystore}{c_1}{\strategy}{#1} \Big\}_n \\
	% 	\hspace*{7cm} \approx \\
	% 	\hspace*{2cm} \Big\{  ~~ \istoredist_2 ~~ \Big| ~~
	%  \keystore' \drawnFrom \Gen(\principals, 1^n);~ 
	%  \keystore = \keystore_0 \uplus \keystore';~
	%  \rconf{c_2'}{\istoredist_2}{\interactionsdist_2}{\versions_2} \drawnFrom \multistep{\keystore}{c_2}{\strategy}{#1} ~ \Big\}_n $
}
\newcommand{\figInteractionLemma}[1]{
	all keystores $\keystore_0$ where $\lowlabel = \authorityOf{\keystore_0}$, \lioS configurations $c_1, c_2$, strategies $\strategy$, and principals $\principals$, and $j \in \nat$, if $\cryptosys$ is CPA Secure and $c_1 \lowEquivC c_2$, then

	\medskip

	\noindent \figInteractionProbStatement{j}{{#1}}
}
\def\arcr{\@arraycr}
\newif\ifsoundness
\newcommand{\todo}[1]{}
\newenvironment{CompactItemize}%
  {\begin{list}{$\ \ \bullet$}%
   {\leftmargin=9pt \itemsep=2pt \topsep=2pt
     \parsep=0pt \partopsep=0pt}}%
  {\end{list}}
\renewcommand\footnotetextcopyrightpermission[1]{} % removes footnote with conference information in first column
\begin{document}
\title{
  Cryptographically Secure Information Flow Control on Key-Value Stores}
\titlenote{The conference version of this paper appears in CCS 2017~\cite{clio:conf}.}

\author{Lucas Waye}
\affiliation{%
  \institution{Harvard University}
  \city{Cambridge}
  \state{Massachusetts}
}
\email{lwaye@seas.harvard.edu}

\author{Pablo Buiras}
\affiliation{%
  \institution{Harvard University}
  \city{Cambridge}
  \state{Massachusetts}
}
\email{pbuiras@seas.harvard.edu}

\author{Owen Arden}
\affiliation{%
  \institution{University of California, Santa Cruz$^\dagger$\authornote{Work done while author was at Harvard University.}}
  \city{Santa Cruz}
  \state{California}
}
\email{owen@soe.ucsc.edu}

\author{Alejandro Russo}
\affiliation{%
  \institution{Chalmers University of Technology}
  \city{Gothenburg}
  \country{Sweden}
}
\email{russo@chalmers.se}

\author{Stephen Chong}
\affiliation{%
  \institution{Harvard University}
  \city{Cambridge}
  \state{Massachusetts}
}
\email{chong@seas.harvard.edu}

\begin{abstract}
%!TEX root = paper.tex

% Cryptography can in principle be used to protect users' data when stored
% or transmitted, but in practice is error-prone and can potentially
% result in a violation of a user's security concerns.
% %
% Information flow control (IFC) systems, on the other hand,
% can automatically enforce security policies on data with
% policy languages expressive enough to capture many desired
% confidentiality and integrity requirements.
% In general, though, remote storage is assumed to be trustworthy.
%
We present \clio, an information flow control (IFC)
system that transparently incorporates cryptography
to enforce confidentiality and integrity policies on untrusted storage.
\clio insulates developers from explicitly manipulating keys and cryptographic
primitives by leveraging the policy language of the IFC system to
automatically use the appropriate keys and correct cryptographic operations.
We prove that \clio is secure with a novel proof technique that is
based on a proof style from cryptography together with standard programming 
languages results.
We present a prototype \clio implementation and a case study
that demonstrates \clio's practicality.

\end{abstract}

% TODO: replace this section with code generated by the tool at https://dl.acm.org/ccs.cfm
\begin{CCSXML}
  <ccs2012>
  <concept>
<concept_id>10002978.10003006.10011608</concept_id>
<concept_desc>Security and privacy~Information flow control</concept_desc>
<concept_significance>300</concept_significance>
</concept>
<concept>
<concept_id>10002978.10002979.10002980</concept_id>
<concept_desc>Security and privacy~Key management</concept_desc>
<concept_significance>300</concept_significance>
</concept>
<concept>
<concept_id>10002978.10002979.10002981.10011602</concept_id>
<concept_desc>Security and privacy~Digital signatures</concept_desc>
<concept_significance>300</concept_significance>
</concept>
<concept>
<concept_id>10002978.10002979.10002981.10011745</concept_id>
<concept_desc>Security and privacy~Public key encryption</concept_desc>
<concept_significance>300</concept_significance>
</concept>
<concept>
<concept_id>10002978.10002979.10002982.10011598</concept_id>
<concept_desc>Security and privacy~Block and stream ciphers</concept_desc>
<concept_significance>300</concept_significance>
</concept>
<concept>
<concept_id>10011007.10011006.10011050.10011017</concept_id>
<concept_desc>Software and its engineering~Domain specific languages</concept_desc>
<concept_significance>300</concept_significance>
</concept>
</ccs2012>
\end{CCSXML}

% \ccsdesc[300]{Security and privacy~Information flow control}
% \ccsdesc[300]{Security and privacy~Key management}
% \ccsdesc[300]{Security and privacy~Digital signatures}
% \ccsdesc[300]{Security and privacy~Public key encryption}

\keywords{information-flow control, cryptography} % TODO: replace with your keywords

\maketitle

\section{Introduction}
\label{sec:intro}
%!TEX root = paper.tex

Cryptography is critical for applications that securely store
and transmit data.
It enables the authentication of remote hosts, authorization of
privileged operations, and the preservation of confidentiality and integrity of
data.
However, applying cryptography is a subtle task, often involving setting up
configuration options and low-level details that users must get
right; even small mistakes can lead to major vulnerabilities~\cite{art1,art2}.
A common approach to address this problem is to raise the level of
abstraction.
For example, many libraries provide high-level interfaces for establishing
TLS~\cite{rfc5246} network connections (e.g.,
OpenSSL\footnote{\url{https://www.openssl.org/}}) that are very similar to the
interfaces for establishing unencrypted connections.
These libraries are useful (and popular) because they abstract many
configuration details, but they also make several assumptions about certificate
authorities, valid protocols, and client authentication.
Due in part to these assumptions, the interfaces
are designed for experienced cryptography programmers and as a result can be
used incorrectly by non-experts in spite of their high level of abstraction~\cite{Whitten}.
Indeed, crypto library misuse is a more prevelant security issue than
Cross-Site Scripting (XSS) and SQL Injection~\cite{veracode}.

Information flow control (IFC) is an attractive approach to building
secure applications because it addresses some of these issues.
There has been extensive work in developing expressive information flow policy
languages~\cite{conf/sp/MyersL98, stefan:2011:dclabels, FLAM} that help clarify
a programmer's intent.
Furthermore, many semantic guarantees offered by IFC languages are inherently
compositional from a security point of view
\cite{Goguen:Meseguer:Noninterference, robdecl}.
However, existing IFC languages (e.g., \cite{jif,
stefan:2011:flexible,Russo:2015:FP, hedin2014jsflow, Yip:2009:PBS,
DeGroef:2012,stefan:2014:protecting}) generally assume that critical components
of the system, such as persistent storage, are trustworthy---the components must enforce
the policies specified by the language abstraction.
This assumption makes most IFC systems a poor fit for many of the use-cases
that cryptographic mechanisms are designed for.

It is tempting to extend IFC guarantees to work with untrustworthy data
storage by simply ``plugging-in'' cryptography.
However, the task is not simple: the threat model of an IFC system extended
with cryptography differs from both
the standard cryptographic threat models and from standard IFC threat models.
Unlike most IFC security models, an attacker in this scenario may have
low-level abilities to access signatures and ciphertexts of sensitive data, and the ability to deny access to data by corrupting it (e.g.,
flipping bits in ciphertexts).

Attackers also have indirect access to the private cryptographic keys
through the trusted runtime.  An attacker may craft and run programs
that have access to the system's cryptographic keys in order to trick
the system into inappropriately decrypting or signing information.
Cryptographic security models often
account for the high-level actions of attackers using \emph{oracles} that
mediate what information an active attacker can learn through
interactions with the cryptosystem.  These oracles abstractly represent implementation
artifacts that could be used by the attacker to distinguish ciphertexts.  Ensuring that an
actual implementation constrains its behavior to that modeled by an oracle
is typically left to developers.

An attacker's actual interactions with a system often extend beyond
the semantics of specific cryptographic primitives and into
application-specific runtime behavior such as how a server responds
when a message fails to decrypt or a signature cannot be verified.  If
an attacker can distinguish this behavior, it may provide them with
information about secrets.  Building real
implementations that provide no additional information to attackers
beyond that permitted by the security model can be very challenging.

Therefore, to give developers better tools for building secure applications, we
need to ensure that system security is not violated by
combining attackers' low-level abilities and their ability to craft
their own programs. This requires extending the attacker's power beyond that
typically considered by IFC models, and representing the attacker's interactions with
the system more precisely than typical cryptographic security models.

This paper presents \clio,
% (pronounced \textit{cleo}),
a programming language
that reconciles IFC and cryptography models to provide guarantees on both
ephemeral data within \clio applications and persistent data on an untrusted
key-value store.
\clio extends the IFC-tool LIO~\cite{stefan:2011:flexible} with \emph{store}
and \emph{fetch} operations for interacting with a persistent key-value store.
Like LIO, \clio expresses confidentiality and integrity requirements using
\emph{security labels}: flows of information are
controlled throughout the execution of programs to ensure the policies
represented by the labels are enforced.
\clio encrypts and signs data as it leaves the \clio runtime,
and decrypts and verifies as it enters the system.
These operations are done automatically according to the security
labels---thus avoiding both the mishandling of sensitive data and the
misuse of cryptographic mechanisms.  Because the behavior of the
system is fully specified by the semantics of the \clio language, an
attacker's interactions with the system can be characterized
precisely. This results in a strong connection between the power of
the attacker in our formal security model and in actual \clio
programs.

\clio transparently maps security labels to cryptographic keys and leverages
the underlying IFC mechanisms to ensure that keys are not misused within the program.
Since we consider attackers capable of denying access to information
by corrupting data, \clio
extends \lioS
labels with an availability policy that tracks who can deny
access to information (i.e., who may corrupt the data).

Figure~\ref{fig:threat-model} presents an overview of the \clio threat
model.  At a high-level, a \clio program may be a malicious program written by the attacker.
All interactions between the runtime and the
store are visible to the attacker.  Only the (trusted) \clio runtime
has access to the keys used to protect information from the attacker,
but the attacker may have access to other ``low'' keys.  The \clio
runtime never exposes keys directly to program code: they are only used
implicitly to protect or verify data as it leaves or enters the \clio
runtime.

Attackers may also perform low-level
\texttt{fetch} and \texttt{store} operations directly on the key-value
store. Using these low-level operations, an attacker may corrupt ciphertexts to
make them invalid even when it does not possess the signing keys to make valid modifications.
We treat these actions as attacks on the \emph{availability} of data, rather than
on its integrity.
A low-availability store is vulnerable to availability
attacks, and thus should be prevented from storing data that requires high-availability.
\clio's information flow control mechanisms mediate the attacker's ability
to discover new information or modify signed values by interacting
with a \clio program through \texttt{fetch}s and \texttt{store}s to a
\clio store.

\begin{figure}
\begin{center}
\includegraphics[width=20em]{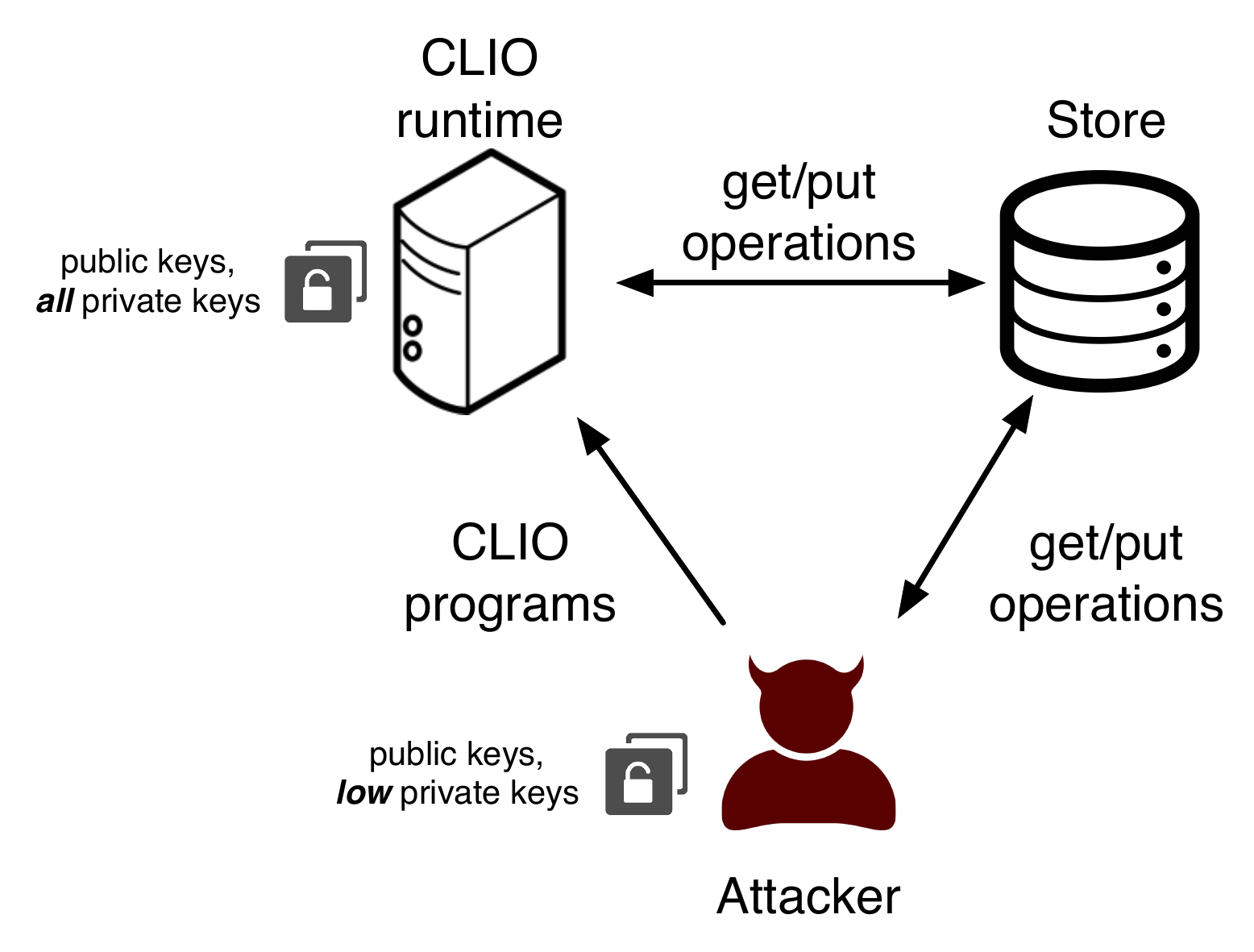}
\end{center}
\caption{\clio threat model. Attackers write \clio programs, read from and write to
the store, and observe the runtime's interactions.}
\label{fig:threat-model}
\end{figure}

This paper makes the following contributions:
\begin{CompactItemize}
\item A formalization of the \emph{ideal} semantics of \clio,
which models its security without
cryptography, and a \emph{real} semantics, which enforces
security cryptographically.
\item A novel proof technique that combines standard programming language
and cryptographic proof techniques.
Using this approach, we characterize the interaction between the
high-level security guarantees provided by information flow control
and the low-level guarantees offered by the cryptographic mechanisms.
\item For confidentiality, we have formalized these guarantees as \emph{chosen-term attack} (CTA)
security, an extension of \emph{chosen-plaintext attack} (CPA) security to systems where an
attacker may choose arbitrary \clio programs that encrypt
and decrypt information through the \clio runtime.  Though CTA security is predicated on the relatively weak guarantees of CPA
crypto primitives, CTA security provides stronger guarantees since it applies to the end-to-end flow of
information through the system, including the interactions an active, adaptive attacker might use to distinguish
ciphertexts.
\item For integrity, we have defined \emph{leveraged existential forgery}, an
extension of \emph{existential forgery} to systems where an attacker may choose and
execute a program to produce signed values.

\item  A prototype \clio implementation in the form of a Haskell library extending LIO.
Our prototype system employs the DC~labels model~\cite{stefan:2011:dclabels},
previously used in practical systems (e.g., Hails~\cite{giffin:2012:hails} and
COWL~\cite{stefan:2014:protecting}).
%
%As part of our contributions we describe a flexible encoding of DC-labeled
%values using cryptographic primitives, and
Our implementation extends DC-labels with an
availability component, which may be applicable to these existing
systems as well.
\end{CompactItemize}

Our approach uses a computational model of cryptography.
However, we do not rely on a formal definition
of computational noninterference~\cite{Laud:2001}.
Instead, we phrase security in terms
of an adversary-based game with a definition much closer to standard
cryptographic definitions of security such as CPA security~\cite{pass-shelat}. This
approach helps {to} model an active adversary on the store, something that
computational noninterference can not easily capture. Furthermore, we
incorporate the semantics of \clio programs and potential attacks
against them into the security model. This approach captures the power
of the attacker more precisely than cryptographic models for active
attackers like chosen-ciphertext attack (CCA)
security~\cite{pass-shelat}.

Our CTA model applies a game-based definition of security in a
language setting and is a novel aspect of this work. Computational
noninterference and related approaches consider attackers that can
only provide different secret inputs to the program. Thus a key
contribution of our work is capturing the abilities of an active
attacker (that can both supply code to execute and directly manipulate
the store) in a crypto-style game that goes beyond CPA security
 and standard IFC guarantees (noninterference, including
computational noninterference). Although our results are specific to
\clio, we expect our approach to be useful in proving the security of
cryptographic extensions of other information flow languages.

The rest of the paper is structured as follows.
Section~\ref{sec:background} introduces \lioS and
Section~\ref{sec:store-interaction} describes the extensions to it in order to
interact with an untrusted store. % and describes \clio should ideally behave.
Section~\ref{sec:design} describes the computational model of
\clio with cryptography, and Section~\ref{sec:eval} shows the model's
formal security properties. Section~\ref{sec:impl} describes
the prototype implementation of \clio along with a case study.
And finally Section~\ref{sec:related} discusses related work.
and Section~\ref{sec:conclusion} concludes.
% Local Variables:
% TeX-master: "paper.tex"
% TeX-command-default: "Make"
% End:

\section{Background}
\label{sec:background}
%!TEX root = paper.tex

% \todo{Goal of this section:
%   reader understands LIO,
%   there's a key-value store you can interact with,
%   how you interact with that store,
%   idea of the missing/corrupted entry,
%   how the store should work ideally from the perspective of the programmer,
%   high-level idea of noninterference of LIO,
%   that we leverage preservation of low equivalence and also that we can restrict
%     writing to the store when influenced by sensitive information,
%   that we use these two properties later to be able to use the store even
%     though it isn't totally trusted,
%   how we model adversarial interaction?}

In this section, we describe the programming model of \clio. % language, and the programmer's
%model for writing \clio programs (i.e., using an idealized
%non-cryptographic key-value store). 
\clio is based on
LIO \cite{stefan:2011:flexible}, a dynamic IFC library 
implemented in Haskell.
%We first
%briefly describe LIO and its security guarantees.

%% and the model a programmer can use to understand how \clio will behave
%% without any cryptography. That is, it acts as an \emph{idealized} semantics
%% of \clio. We build on ... \todo{more intro}

%% \clio is based on LIO \cite{stefan:2011:flexible}, a Haskell library that enforces
%% information flow control dynamically at the language level.
%% and provides
%% \emph{termination-insensitive} noninterference for sequential programs
%% \cite{Askarov:2008}.
% The library leverages Haskell's monadic encapsulation of side-effects
% to provide security. A \emph{monad}~\cite{moggi1989monads} is an
% abstract data type that can be used to structure effectful
% computations in purely functional languages. Monads specify how to
% build and bind computations together in sequence, and typically
% provide distinguished operations that model specific
% side-effects, e.g. mutating a reference or writing to a file.
%
% Different monads are often used to model different kinds
% of effectful computations. In Haskell, monads are used to express all
% effects, including exception handling, nondeterminism, and
% input/output.
%

%\subsection{LIO}

LIO uses Haskell features % (a special data structure)
to control how sensitive information is used and to restrict I/O
side-effects. In particular, it implements an embedded language and a
runtime monitor based on the notion of a \emph{monad}, an abstract
data type that represents sequences of actions (also known as
\emph{computations}) that may perform side-effects.
%% side-effecting operations the program
%% % \todo{sc: ``a program''
%% %  instead of ``the programmer''?}
%% is allowed to perform at any given
%% time.
%% %
%
The basic interface of a monad consists in the fundamental operations
\textsf{return} and $(\bindSymbol)$ (read as ``bind'').
The expression $\return{x}$ denotes a computation that returns
the value denoted by $x$, performing no side-effects.
The function $(\bindSymbol)$ is used to \emph{sequence} 
computations.
Specifically, $\bind{t}{\lambda x.t'}$ takes the result produced by
$t$ and applies function $\lambda x.t'$ to it (which allows
computation $t'$ to depend on the value produced by $t$).  In order to
be useful, monads are usually extended with additional primitive
operations to selectively allow the desired side-effects. The \lio{} monad is a
specific instance of this pattern equipped with IFC-aware operations
that enforce security.
%
% We sometimes use Haskell's \textsf{do}-notation to write such monadic
% computations. For example, the program $\bind{t}{\lambda x.\return{(x+1)}}$,
% which adds 1 to the value produced by $t$, can be written as shown below.

% % TODO: pretty-print
% \begin{verbatim}
% do  x <- t
%     return (x+1)
% \end{verbatim}

% \noindent
% In Haskell, input/output operations are provided by the IO monad.
% That is, all computations that want to perform I/O operations have to
% be of the type $\mathit{IO}\, a$, where $a$ is the type of the
% returned value of the computation. The \lio{} monad provided by the LIO
% library is intended to be used as a replacement for this type.  It
% provides a collection of operations similar to IO, but enriched with
% security checks that prevent unwanted information flows.

%\paragraph{Floating label}
LIO, like many dynamic IFC approaches (e.g., \cite{Zeldovich:2008,
Roy:2009, Cheng2012}), employs a \emph{floating label}.
Security concerns are represented by labels
which form a \emph{lattice}, a partially-ordered ($\sqsubseteq$) set with
least upper bounds ($\sqcup$) and greatest lower bounds ($\sqcap$).
A runtime monitor maintains as part of its state a distinguished
label $\lcur$ known as the \emph{current label}.
The current label is similar to the program
counter ($\mathit{pc}$) label of static IFC systems (e.g.,~\cite{jif,FlowCaml}):
it restricts side-effects in the current
computation that may compromise the
confidentiality or integrity of data.
For example, a computation whose current label is secret cannot write to a public location.
%This label represents the
%least upper bound over the labels on which the current computation
%depends
LIO operations adjust this label when sensitive information enters the program
and use it to validate (or reject) outgoing flows.

When an LIO computation with current label $\lcur$ observes an
entity with label $l$, its current label is increased (if necessary)
to the least upper bound of the two labels, written
$\lcur \lub l$. Thus, the current label
``floats up'' in the security lattice, so that it is always
an upper bound on the security levels of information in the computation.  Similarly, before performing a side-effect
visible to label $l$, LIO ensures the current label flows to
$l$ ($\lcur \flows l$). % before permitting the operation.

Once the current label within a given computation is raised, it can
never be lowered. This can be very restrictive, since, for example, as
soon as confidential data is accessed by a computation, the
computation will be unable to output any public data. To address this
limitation, the $\textsf{toLabeled}$
operation allows evaluation of an LIO computation $m$ in a separate
\emph{compartment}: $\toLabeled{l}{m}$ will run $m$ to completion,
and produce a \emph{labeled value} $\LabeledO{l}{v}$, where $v$ is the result of
computation $m$, and $l$ is an over-approximation of the final current
label of $m$. Note that the current label of the enclosing computation
is not affected by executing $\toLabeled{l}{m}$.  In general, given a
labeled value $\LabeledO{l}{v}$, label $l$ is an upper bound on the information
conveyed by $v$. Labeled values can also be created from raw values
using operation \textsf{label}, and a labeled value can be read into
the current scope with operation \textsf{unlabel}.
%% To address this limitation, LIO provides \emph{labeled values} that
%% explicitly protect data with a label other than the current label.
%% %
%% More precisely, a labeled value is a term $v @ l$, where
%% $v$ is a value and $l$ is a label that may differ from $\lcur$.
%% %
%% The pure function \textsf{labelOf} can
%% retrieve the label associated with a labeled value. The functions
%% \textsf{label} and \textsf{unlabel} are used to respectively create
%and read labeled values.
Creating a
labeled value with label $l$ can be regarded as writing into a channel
at security level $l$. Similarly, observing (i.e., unlabeling) a
labeled value at $l$ is analogous to reading from a channel at $l$.

%% Due to the coarse-grained nature of the enforcement, once the current
%% label is raised, it can never be lowered. The $\textsf{toLabeled}$
%% operation allows evaluation of an LIO computation $m$ in a separate
%% \emph{compartment}: $\toLabeled{l}{m}$ will run $m$ to completion,
%% wrapping the result in a labeled value with label $l$. Importantly,
%% the evaluation of $m$ will not affect the current label of the parent
%% context. The label $l$ is required to over-approximate the
%% final current label of $m$, and prevents leaks via
%% flow-sensitivity \cite{buiras:2014:on-dynamic}.

\emph{\lioS security guarantees.}
\label{sec:lio-security}
%\todo{SC: if we are short on space, I suggest compacting this subsection, to just describe NI, without the proof details, which can be mentioned where needed where we talk about \clio's proof.}
%
%\todo{nice security property is noninterference}
LIO provides a termination-insensitive \emph{noninterference-based}
security guarantee~\cite{Goguen:Meseguer:Noninterference}.
Intuitively, if a program is noninterfering with respect to
confidentiality, then the public outputs of a program reveal nothing
about the confidential inputs. More precisely, an attacker $\attacker$
that can observe inputs and outputs with confidentiality label at most
$l_\attacker$ learns nothing about any input to the program with label
$l$ such that $l \not\sqsubseteq l_\attacker$.  Similarly, a program
is noninterfering for integrity if an attacker that can control
untrusted inputs cannot influence trusted outputs.

% PB: commenting out since this para is about existing LIO guarantees
% , and is
% noninterfering for availability if an attacker that can deny
% low-availability inputs cannot prevent the production of
% high-availability outputs.

\subsection{\clio}

%We now describe the $\clio$ language and its formalization.

\paragraph{\clio calculus}

\begin{figure}[t] % language
\resizebox{\columnwidth}{!}{
\begin{minipage}{\columnwidth}
\begin{align*}
\figCLioSyntaxMain
\end{align*}
\end{minipage}
}
\caption{Syntax for \clio values, terms, and types.\label{fig:language}
%\ds{may want \fix{} to a term and not a value}
}
\end{figure}

%\lrw{we need to introduce the syntax and semantics somewhere}
\clio{} is formalized as a typed $\lambda$-calculus with call-by-name
evaluation, in the same style as LIO~\cite{stefan:2011:flexible}.
Figure~\ref{fig:language} gives the syntax of \clio values, terms, and types.
In addition to standard $\lambda$-calculus features, \clio includes several
security-related extensions that mirror those in LIO, and two
operations for interacting with the key-value store, namely
$\texttt{store}$ and $\texttt{fetch}$. As those primitives have nontrivial
semantics that involve the external storage, we defer their discussion
to Section~\ref{sec:store-interaction}.
Security labels have type $\LabelT$ and labeled values have type $\LabeledT{\tau}$.
Computation on labeled values occur in the $\CLIOT$ monad using the $\return{}$ and
$(\bindSymbol)$ monadic operators.
The nonterminals $\CLIO{t}$ and
$\block{l}{l'}{l''}{t}$ are only generated by intermediate reduction steps and are
not valid source-level syntax.
For convenience, we also distinguish values that can be easily
serialized as \emph{ground values}, $\gv$. Ground values are all values
except functions and \clio{} computations. To facilitate our extension of LIO with cryptography, we require
labeled values to contain only ground values. %, i.e., serializable and fully evaluated terms.

Static type checking is performed in the standard way. % judgment $\vdash t : \tau$ means that the closed term $t$
%has type $\tau$. 
We elide the typing rules $\vdash t : \tau$ since they are mostly
standard\footnote{Complete definitions given in \Cref{sec:typing}.}. 
LIO enforces information flow control dynamically, so it
does not rely on its type system to provide security guarantees.

%\todo{describe store/fetch and events}

The semantics is given by a small-step reduction relation $\lto$ over
\clio configurations (Figure~\ref{fig:sos-lio})\footnote{The rest can be found in \Cref{sec:fullsem}.}. Configurations are of the form
$\conf{\lcur}{\lclr}{t}$, where $\lcur$ is the current label and $t$
is the \clio{} term being evaluated.  Label $\lclr$ is the \emph{current
clearance} and is an upper bound on the current label $\lcur$. The
clearance allows a programmer to specify an upper bound for
information that a computation is allowed to access. We write $c \lto
c'$ to express that configuration $c$ can take a reduction step to
configuration $c'$. We define $\lto^*$ as the reflexive and transitive
closure of $\lto$. Given configuration $c=\conf{\lcur}{\lclr}{t}$ we write $\pcOf{c}$ for $\lcur$, the current label of $c$.
%\todo{AR: does clearance play a role as it did in the CSF submission?. If not,
%  we could think about removing it.}

\begin{figure}[t] % source lio
\begin{center}
\resizebox{.95\columnwidth}{!}{
\begin{minipage}{\columnwidth}
\begin{center}
\figLioEvalContextMain
\vspace{-1em}
\begin{mathpar}
\figClioSemanticsBaseMain
\end{mathpar}
\end{center}
\end{minipage}
}
\end{center}
\caption{\clio language semantics (selected rules). \label{fig:sos-lio}
% \ds{we should distinguish between monadic and pure evaluation contexts; as is, this is not faithful to Haskell semantics} \pb{Could you elaborate on that? In order to be truly faithful to Haskell semantics (itself an ill-defined notion) we would at least need memoisation. So you mean something else I guess? Evaluation of the label algebra operations? I think we would rather not have explicit context closure rules like in the JFP paper.}
% \ds{unlabel and reset rules should really be in terms of grounded values}
% \ds{I think we should give static semantics as well; or at least rename the primitives to not be the same as those of LIO. I think the latter may be too much work, but IMO better since it means we have labeled lambdas and crypto-labeled values (but I guess having laeled cryphertext is not something we can prove anything about for now).} \pb{We can do labelled (i.e. encrypted) ciphertext with the current system.}
}
\end{figure}

% Figure~\ref{fig:sos-lio} shows the most important rules defining small-step reduction relation\footnote{The rest can be found in \Cref{sec:fullsem}.}.
%The rules use evaluation contexts ($E$) to focus on the next redex in the term
%and fix evaluation order.

Rules $\rulename{return}$ and
$\rulename{bind}$ encode the core monadic operations. The intermediate
value $\CLIO{t}$ is used to represent a \clio computation which produces
the term $t$, without any further effects on the configuration.  In
rule $\rulename{label}$, the operation $\llabel{l}{\gv}$ returns a
labeled value with label $l$ holding $\gv$ ($\LabeledO{l}{\gv}$), provided
that the current label flows to $l$ ($\lcur \flows l$) and
$l$ flows to the current clearance ($l \flows \lclr$). Note
that we force the second argument to be a ground value, i.e. it should
be fully normalized.
Rule $\rulename{unlabel}$ expresses that, given a labeled value
$\mathit{lv}$ with label $l$, the operation $\unlabel{\mathit{lv}}$
returns the value stored in $\mathit{lv}$ and updates the current
label to $\lcur \lub l$, to capture the fact that a value with label $l$ has been read, provided that this new label flows to the
current clearance ($l \flows \lclr$).
The operations $\textsf{getLabel}$ and $\textsf{getClearance}$ can be
used to retrieve the current label and clearance respectively.

Rules $\rulename{toLabeled}$ and $\rulename{reset}$ deserve special
attention. To evaluate $\toLabeled{l_1}{t}$, we first check that $l_1$
is a valid target label
($\lcur \flows l_1 \flows \lclr$) and then wrap
$t$ in a compartment using the special syntactic form
$\block{\lcur}{\lclr}{l_1}{t}$, recording the
current label and clearance at the time of entering
$\textsf{toLabeled}$ and the target label of the operation,
$l_1$. Evaluation proceeds by reducing $t$ in the context of
the compartment to a value of the form $\CLIO{t_1}$. Next,
the rule $\rulename{reset}$ evaluates the term $\block{\lcur}{\lclr}{l_1}{\CLIO{t_1}}$,
first checking that the current label
flows to the target of the current $\textsf{toLabeled}$
($\lcur \flows l_2$). Finally, the compartment is replaced
by a normal $\textsf{label}$ operation and the current label and
clearance are restored to their saved values.

\paragraph{DC Labels}
LIO is parametric in the label format, but for the purposes of this
paper in \clio{} we use DC~labels~\cite{stefan:2011:dclabels} with
three components to model confidentiality, integrity, and availability
policies.
A label $\ltrip{l_c}{l_i}{l_a}$ represents a policy with
confidentiality $l_c$, integrity $l_i$, and availability $l_a$.
Information labeled with $\ltrip{l_c}{l_i}{l_a}$ can be read by $l_c$,
is vouched for by $l_i$, and is hosted by $l_a$.  We write
$\projC{l}$, $\projI{l}$, $\projA{l}$ for the
confidentiality, integrity, and availability components of $l$,
respectively. Each component is a conjunction of disjunctions of
principal names, i.e., a formula in conjunctive normal form. A
disjunction $A\vee B$ in the confidentiality component means that
either $A$ or $B$ can read the data; in the integrity component, it
means that one of $A$ or $B$ vouch for the data, but none of them take
sole responsibility; in terms of
availability, it means that one of $A$ or $B$ can deny access to the
data. Conjunctions $A \wedge B$ mean that only $A$ and $B$ together
can read the data (confidentiality), that they jointly vouch for
the data (integrity), or that they can jointly deny access to the data (availability).
%
% The label $l_c$ protects confidentiality by specifying the labels that are
% allowed (or whose consent is needed) to observe the data while $l_i$ protects
% integrity by specifying the labels that created and may currently modify the
% data.
%
Data may flow between differently labeled entities, but only those with more \emph{restrictive} policies:
those readable, vouched for, or hosted by fewer entities.
A label $\ltrip{l_c}{l_i}{l_{a}}$ can flow to any label where the
confidentiality component is at least as sensitive than $l_c$, the integrity
component is no more trustworthy than $l_i$, and the availability is no
more %available 
than $l_a$,  i.e.  $l₁ \flows l₂$
if and only if $\projC{l_2} \Longrightarrow \projC{l_1}$, 
$\projI{l_1} \implies \projI{l_2}$,
and  $\projA{l_1} \implies \projA{l_2}$. We use logical implication
because it matches the intuitive meaning of disjunctions and
conjunctions, e.g., data readable by $A\vee B$ is less confidential
than data readable only by $A$, and data vouched for by $A\wedge B$ is
more trustworthy than data vouched for only by $A$.
%\todo{should give intuition for why this makes sense}
%
%
In the rest of the paper, we consider only \clio{} computations that
work on labels of this form.

\section{Interacting with an Untrusted Store}
\label{sec:store-interaction}

\clio extends LIO with a key-value store. The language is extended with two new commands:
$\store{t_k}{t_v}$ puts a labeled value $t_v$ in the
store indexed by key $t_k$; $\fetch{t_k}{t_v}$ command fetches the entry with key $t_k$ and
if it cannot be fetched, returns the labeled value $t_v$.
In both commands, $t_k$ must evaluate to a ground value and the labeled value
$t_v$ must evaluate to a labeled ground value with type $\tau$.
%\todo{AR: why does it mean the subindex of fetch and got?}

\begin{figure}[t] % source clio store/fetch
\begin{center}
\resizebox{.95\columnwidth}{!}{
\begin{minipage}{\columnwidth}
\begin{center}
\begin{mathpar}
\figClioSemanticsStoreMain
\end{mathpar}
\end{center}
\end{minipage}
}
\end{center}
\caption{\clio language semantics (store and fetch rules). \label{fig:sos-clio-store}
}
\end{figure}

Semantics for fetch and store are shown
%% We extend \lioS to provide access to a single store by providing two
%% new terms $\fetch{t_k}{t_v}$ and $\store{t_k}{t_v}$, shown
in Figure~\ref{fig:sos-clio-store}.
%% The $\store{t_k}{t_v}$ command puts a labeled value $t_v$ on the
%% store indexed at $t_k$.
%% The $\fetch{t_k}{t_v}$ command fetches the entry indexed at $t_k$ and
%% if it cannot be fetched then returns the labeled value $t_v$.
%% In both commands, $t_k$ must evaluate to a ground value and the labeled value
%% $t_v$ must evaluate to a labeled ground value.
%
We modify the semantics to be a labeled transition system, where step relation $\ltol{\alpha}$ is annotated with \emph{store events} $\event$. Store event $\event$ is one of:
\begin{CompactItemize}
\item
$\skipcmd$ (representing no interaction
with the store, i.e., an internal step; we typically elide $\skipcmd$ for clarity),

\item $\storecmd{\gv_k}{\Labeled{l}{\gv}}$ (representing putting a labeled ground value $\Labeled{l}{\gv}$ indexed by
$\gv_k$),
\item $\fetchcmd{\gv_k}{\Labeled{l}{\gv}}$ (representing reading
a labeled value from the store indexed by $\gv_k$), or
\item $\missingcmd{\gv_k}$
representing no value is indexed by $\gv_k$.
\end{CompactItemize}

Labeling transitions with store events allows us to cleanly factor out the implementation of the store, enabling us to easily use either an idealized (non-cryptographic) store, or a store that uses cryptography to help enforce security guarantees. We describe the semantics of store events in both these settings later.
%% The store events are handled
%% by a separate semantics which describes how the labeled values are
%% placed and fetched from the store. We defer the explanation of that semantics
%% to the next section.

We associate a label $\lowlabel$ with the store. Intuitively, store level $\lowlabel$ describes how trusted the store is: it represents the inherent protections provided
by the store and the inherent trust by the store in \clio. 
For example, the store may be behind an organization's firewall 
so data is accessible only to organization members due to an external access control
mechanism (i.e., the firewall), so \clio can safely store the organization's information there.
%% there while it is computing on it (as the users of the store can already access
%% that information anyway). 
Dually, there may be integrity requirements that \clio is trusted to uphold when
writing to the store. For example, the store may be used as part of a larger system that 
uses the store to perform important operations (e.g., ship customer orders). Thus the integrity component of the store label is a bound on the untrustworthiness of information that \clio should write to the store (e.g., \clio should not put unendorsed shipping requests in the store).
%% \clio must ensure that its writes to the store are not influenced by low integrity
%% information (e.g., an unendorsed shipping request).
The availability component of the store label specifies a bound on who is able to corrupt information in the store and thus make it unavailable. (Note that we are concerned with \emph{information availability} rather than \emph{system availability}.)
In general, this would describe all the principals who have direct and 
indirect write-access to the store.
%XXX\todo{give intuition for attacker label, in terms of confidentiality, integrity, and availability.}

Rule \rulename{Store} (Figure~\ref{fig:sos-lio}) is used to put a labeled value \Labeled{l}{\gv} in the store, indexed by key $\gv_k$.
%% \clio programs can only put labeled values on the store when their current label
%% of computation, which summarizes the policies of the information it is
%% currently computing on,
%% can safely flow to the \emph{store level}.
% That is, \clio programs  run with a fixed level that describes the security policy of the store.
%% The store level $\lowlabel$ is a label that describes the store's
%% confidentiality, integrity, and availability guarantees.
We require that the current label $\lcur$ is bounded above by store level $\lowlabel$. In terms of confidentiality, this means that any information that may be revealed by performing the store operation (i.e., $\lcur$) is permitted to be learned by users
of the store. 
For integrity, the decision to place this value in the store (possibly overwriting
a previous value) should not be influenced by information below the integrity
requirements of the store. For availability, the information should not be derived
from less available sources than the store's availability level.
%% to ensure the store's availability level
%% accurately reflects who could have corrupted stored information.

% \todo{Hmmm, whats the story for integrity and availability here? Find a better way to communicate this?}
%
%% Intuitively, the store level represents the amount of trust the \clio
%% program has in the store. To that end, we require that putting a
%% labeled value on the store should not be influenced by information that
%% the store is not trusted to read. Otherwise, a covert channel would exist
%% in the \clio program whereby the store behavior of the \clio program
%% could reveal secret information through the store.
%
Additionally, we require the current label
to flow to $l$, the label of the value that is being stored (i.e., $\canFlowTo{\lcur}{l_1}$).
Intuitively, this is because an entity that learns the labeled value also learns
that the labeled value was put in the store. Current label $\lcur$ is an upper
bound on the information that led to the decision to perform the store, and
$l_1$ bounds who may learn the labeled value.
For command $\fetch{\gv_k}{\Labeled{l_d}{\gv_d}}$,
labeled value $\Labeled{l_d}{\gv_d}$ serves double duty. First, if
 the store cannot return a suitable value (e.g., because there is no value indexed by key $\gv_k$, or because cryptographic signature verification fails), then the fetch command evaluates to the default labeled value $\Labeled{l_d}{\gv_d}$ (which might be an error value or a suitable default). Second, label $l_d$ specifies an upper bound on the label of any value that may be returned:
 %by the fetch command:
 if the store wants to return a labeled value $\Labeled{l}{\gv}$ where $\cantFlowTo{l}{l_d}$, then the fetch command evaluates to %the default labeled value
 $\Labeled{l_d}{\gv_d}$ instead. This allows programmers to specify bounds on information they are willing to read from the store.

%% The default labeled value is used when a value cannot be successfully
%% retrieved from the store (e.g., it could represent an error code signal
%% to the program or an actual suitable default value).
%% Its label
%% should be a safe approximation of the label of the value retrieved from
%% the store. More precisely, the label of the default value must
%% flow to the store store. \clio uses the default value to both serve as
%% as sentinel value for when a value is not present on the store \emph{and}
%% when it is not safe for the \clio computation to use the value
%% retrieved value from the store. For example, it may be the case that
%% a high integrity computation does not want to use an untrusted value
%% from the store. In general the computation cannot determine with
%% certainty that the value fetched from the store will have high integrity
%% due to other users interacting on the store. As a result, in the cases
%% where a value cannot be safely returned to the \clio computation,
%% the default value is used.

Rule \rulename{Fetch-Valid} is used when a labeled value is successfully
fetched from the store. Store event $\fetchcmd{\gv_k}{\Labeled{l}{\gv}}$
indicates that the store was able to return labeled value $\Labeled{l}{\gv}$ indexed by the key $\gv_k$. % supplied by the fetch command.
%% comes from the index requested by the \clio program and the labeled
%% value comes from the store semantics. In order to successfully
%% retrieve a value from the store, the availability of the default value
%% must safely flow the availability of the store level. In other words,
%% the retrieved value cannot be more available than the store level as it
%% actually came from the store.
Rule \rulename{Fetch-Invalid} is used when a labeled value cannot
be found indexed at the index requested or it does not safely flow
to the default labeled value (i.e., it is too secret, too untrustworthy
or not available enough), and causes the fetch to evaluate to the specified default labeled value. 
Since the label of the default value $l_d$ will be used for the label
of the fetched value in general, the availability of the store level
should be bounded above by the availabiltiy of the label of the default value
(i.e., $\canFlowToA{\projA{\lowlabel}}{\projA{l_d}}$) in both rules,
as the label of the fetched value should reflect the fact that
anyone from the store could have corrupted the value.

% To ensure the 

% Both rules require that the availability projection of the store level
% is bounded above by the availability of the default value .
%as it should be an upper bound on the label of any value fetched from the store.
% Otherwise, the default label (which is used as the label for the fetched value) may not 
% accurately reflect the availability of the data.
%Since the availability of 

%Intuitively, this captures that
% the availability of the result ($\projA{l_d}$) depends on the availability of the store (\projA{\lowlabel}). \todo{Not quite an accurate description. Could do a better one-sentence intuition...}

\subsection{Ideal Store Behavior}
\label{subsec:ideal-clio}

%\todo{Steve got up to here.}
%In this section
We informally describe the \emph{ideal} behavior of an untrusted
store from the perspective of a \clio program.\footnote{Complete formal definitions in \Cref{sec:ideal-storesem}.}
The ideal store semantics provides a specification of the
behavior that
a real %\clio
implementation should strive for, and allows the programmer to focus on functionality and security properties of the store rather than the details of cryptographic enforcement of labeled values.
%and also to capture
%how other non-\clio programs can access the store.
%% The ideal \clio
%% store semantics separates the concerns of the programmer model
%% of interacting with the store from the cryptographic enforcement of the
%% security policies of the labeled values.
%
% In this section we present an idealized version of \clio that
% considers a trusted store which does not use cryptography to enforce
% the security policies of the labeled values it has (e.g., the store
% has the appropriate access control mechanisms to restrict reads of secret
% information).
In Section~\ref{sec:design} we describe how
we use cryptography to achieve this ideal specification. % for interacting with the store
%(instead of trusting it with sensitive information directly as is the case in this
%idealized store).
%  \todo{Afterwards, we establish a formal connection between this idealized
% semantics and the real semantics that uses cryptography, shown in Section~\ref{sec:eval}.}
% \todo{AR: I am a bit confused here, Figure 2 presents CLIO, right? CLIO = LIO +
%   fetch and store. So, what we talk here is about an idealized semantics of
%   CLIO, not about CLIO itself}.

% \begin{figure}[t]
% \begin{mathpar}
% \figIdealClioMain
% \end{mathpar}
% \caption{Ideal \clio Semantics} \label{fig:ideal-clio}
% \end{figure}

% \todo{SC: It seems like this section and the next (\ref{subsec:ideal-clio} and \ref{subsec:non-clio-interaction}) are about \clio, whereas everything else in this section is about \lioS. May be worth moving these two subsections to later, possibly to their own section, or to their own subsection after the discussion of \lioS NI}
%Figure~\ref{fig:ideal-clio} describes the semantics of the ideal \clio store
We use a small-step  relation $\iconf{c}{\istore}\ito\iconf{c'}{\istore'}$ where  $\iconf{c}{\istore}$ and $\iconf{c'}{\istore'}$ are pairs of
a \clio configuration $c$ 
and an ideal store $\istore$. %%  to pairs of resulting
%% \clio configurations and ideal stores.
An ideal store $\istore$ 
maps ground values $\gv_k$ to labeled ground values $\Labeled{l}{\gv}$. If a store doesn't contain a mapping for an index $\gv_k$, we represent that as mapping it to 
%% Missing entries
%% are represented as a mapping from the ground value index to a
the distinguished value
$\bot$.

Store events %emitted by \clio configurations
are used to communicate with the store. When a $\storecmd{\gv_k}{\Labeled{l}{\gv}}$ event is emitted, the store is updated appropriately. When the \clio computation issues a fetch command, the store provides the appropriate event (i.e., either provides event
$\missingcmd{\gv_k}$ or 
event  $\fetchcmd{\gv_k}{\Labeled{l}{\gv}}$ for an appropriate labeled value $\Labeled{l}{\gv}$). For \clio computation steps that do not interact with the store, store event $\skipcmd$ is emitted, and the store is not updated.

%% When the \clio program is not interacting with the store, the pair transitions
%% with the updated configuration and the store remains unchanged. 
%% When storing a labeled value $\Labeled{l_1}{\gv}$ indexed at $\gv_k$, 
%% the pair transitions with the updated configuration and an updating store
%% containing the mapping for the key $\gv_k$ with the labeled value $\Labeled{l_1}{\gv}$.
%% When fetching, the store is used to find a mapping $\Labeled{l_1}{\gv}$ 
%% corresponding to the requested key $\gv_k$ and the
%% store event $\fetchcmd{\gv_k}{\Labeled{l_1}{\gv}}$ is used for the configuration
%% transition. Otherwise, if no mapping
%% can be found, the store event $\missingcmd{\gv_k}$ is used for the configuration
%% transition.
%% %Rule \rulename{Internal-Step} represents a non-store interaction corresponding to one of the \lioS rules. Rule \rulename{Store} represents
%% %a store interaction attempting to store . This has the effect of updating the store mapping

% Rule \rulename{Fetch-Exists} represents a fetch command for the
% labeled value indexed at $\gv_k$. The store mapping $\Labeled{l_1}{\gv}$
% for the key $\gv_k$ is returned to the \clio configuration through the
% store event $\fetchcmd{\gv_k}{\Labeled{l_1}{\gv}}$. Rule \rulename{Fetch-Missing}
% represents when a labeled value was not found indexed at $\gv_k$. The
% missing store event $\missingcmd{\gv_k}$ is used for the \clio configuration
% transition.

\subsection{Non-\clio Interaction: Threat Model}
\label{subsec:non-clio-interaction}
We assume that programs other than \clio computations may interact with the store
%% In running \clio, it may also be the case that other programs are interacting with
%% the store. Further,
and may try to actively or passively subvert the security of \clio programs.
Our threat model for these adversarial programs is as follows (and uses store level $\lowlabel$ to characterize some of the adversaries' abilities).
%% In general these programs may not be concerned with security
%% and as a result could affect the store in arbitrary ways. Even worse, it may
%% be the case that these programs are \emph{actively} working in an \emph{adversarial}
%% way to subvert the security of \clio programs. These active adversaries may attempt
%% to influence the behavior of a \clio program by modifying the store's contents in
%% attempt to reveal secret information either with an \emph{explicit flow}
%% to the store
%% (i.e., the \clio program accidentally stores a secret value in the store which the
%% adversary should not have been able to read), or with an \emph{implicit flow}
%% (i.e., the \clio program updated an entry in the store in a way that depended
%% on a secret value).
%
%% We model these kinds of adversarial interactions as arbitrary updates to the store.
%
%% In modeling the interactions that can be performed by non-\clio programs, we assume certain abilities that capture the \emph{threat model} of the \clio store:
\begin{CompactItemize}
\item All indices of the key-value store are public information, and an adversary can probe any index of the store and thus notice any and all updates to the store.
%% An adversary can notice any updates to the store. In other words, all updates to the store are known to the adversary. This conservatively models a program being able to probe any index of the key-value store over time to notice changes in the value. As a result, we consider all indexes of the key-value store to be public information.
\item An adversary can read labeled values $\Labeled{l_1}{\gv}$ in the store where the confidentiality level of label $l_1$ is at least as confidential as the store level $\lowlabel$ (i.e., $\canFlowToC{\projC{l_1}}{\projC{\lowlabel}}$). 
\item An adversary can put labeled values  $\Labeled{l_1}{\gv}$ in the store (with arbitrary ground value index $\gv_k$) provided the integrity level of store level $\lowlabel$ is at least as trustworthy as the integrity of label $l_1$ (i.e., $\canFlowToI{\projI{\lowlabel}}{\projI{l_1}}$). 
\end{CompactItemize}
An adversary can adaptively interact with the store. That is, the behavior of the adversary may depend upon (possibly probabilistically) changes the adversary detects or values in the store.

We make the following restrictions on adversaries.
%Additionally, we make the following limiting restrictions in the power of the adversary:
\begin{CompactItemize}
\item The adversary does not have access to timing information. That is, it cannot observe the time between updates to the store. We defer to orthorgonal techniques to mitigate the impact of timing channels~\cite{Askarov:2010}. For example, \clio could generate store events on a fixed schedule.
\item The adversary cannot observe termination of a \clio program, including abnormal termination due to a failed label check. This assumption can be satisfied by requiring that all \clio programs do not diverge and are checked to ensure normal termination, e.g., by requiring $\getLabel$ checks on the label of a labeled value before unlabeling it. Static program analysis can ensure these conditions, and in the rest of the paper we consider only \clio programs that terminate normally. %rogram analysis to enforce this is straightforward, and in  and changing its behavior if the label returned would cause the program to get stuck. We consider only these class of \clio programs.
\end{CompactItemize}

%% With these considerations, we consider adversaries to be in general at the store
%% level $\lowlabel$. Intuitively, the store level is the level at which \clio trusts
%% the store, so an adversary should in general only be able violate security policies up to and including that level of trust.

%\todo{Introduce Figure~\ref{fig:ideal-clio-low-steps}}
\begin{figure}[t]
\begin{mathpar}
\figIdealClioLowStep
\end{mathpar}
\figIdealClioInteractions
\vspace{-1.5em}
\caption{Adversary Interactions and Low Steps \label{fig:ideal-clio-low-steps}}
\end{figure}

Note that even though the
adversary might have compromised the \clio program, it can only
interact with it at runtime through the store. The adversary does not
automatically learn everything that the program learns, because data
in the \clio runtime is still subject to \clio semantics and the IFC
enforcement, which might prevent exfiltration to the store. \todo{Give
example here? Or later? Maybe when explaining fetch?}  The \clio semantics
thus gives a more precise characterization of the power of the adversary.  Rather 
than proving the security in the presence of a decryption oracle
(e.g., CCA or CCA-2 \cite{pass-shelat}), the \clio runtime prevents
system interactions from being used as a decryption oracle, by
construction.

We formally model the non-\clio interactions with the store using
sequences of \emph{adversary interactions} $I$, given in
Figure~\ref{fig:ideal-clio-low-steps}.
%% describes formally the language of
%% interactions.
Adversary interactions are $\skipcmd$, $\storecmd{\gv'}{\Labeled{l_1}{\gv}}$
and $\corruptcmd{\gv_1, ... \gv_n}$, which, respectively: do nothing; put a labeled value in the store; and
%                
%% We formally model non-\clio interactions with interactions $I$ which
%% range over the same types of commands as \clio performs with one additional
%% kind of interaction: $\corruptcmd{\gv_1, ..., \gv_n}$ which
delete the mappings
for entries at indices $\gv_1$ to $\gv_n$. %, and $\copycmd{\gv}{\gv'}$ which
%copies the mapping at index $\gv$ to the mapping at index $\gv'$.
For storing labeled values, we restrict the integrity of the labeled value stored
by non-\clio interactions to be at most at the store level.
Sequences of interactions $\iinteraction_1 \cdot ... \cdot \iinteraction_n$
are notated as $\iinteractions$.

To model the adversary actively updating the store, we define a step semantics
$\Lowstep{}$ that includes adversary interactions $\iinteractions$. We restrict
interactions to occur only at \emph{low steps}, i.e., when the current label of the \clio computation is less than or equal to the store level $\lowlabel$. (By contrast, a \emph{high step} is when the current label can not flow to $\lowlabel$.)
%% A low step is when the \clio computation takes a step where the current label of computation is at a level that can flow to the
%% adversary. In contrast, a \emph{high step} is when the \clio computation's
%% current label is at a level that can not flow to the adversary (and as a result
%% is computing with sensitive information).
This captures the threat model assumption that the attacker cannot observe timing.
%% We introduce this restriction because our threat
%% model does not include timing information. Allowing the adversary to interact
%% on high steps would create an internal timing channel as the number of high
%% steps performed is dependent on sensitive information.
Rules \rulename{Low-Step}
 and \rulename{Low-To-High-To-Low-Step} in Figure~\ref{fig:ideal-clio-low-steps} express adversary interactions occurring only at low steps.

% We do not consider terms
% that end above the adversary level as the raw value in the context would not
% be properly labeled. We consider this only a technical restriction for
% simplicity, since a term $t$ could be rewritten to $\toLabeled{l}{t}$.

%\section{Language}

%\subsection{Dynamic semantics}

% Local Variables:
% TeX-master: "paper.tex"
% TeX-command-default: "Make"
% End:

\section{Realizing \clio}
\label{sec:design}
%!TEX root = paper.tex
% \todo{Goal of this section: 
% 	Know the threat model of the store,
% 	the principal keystore,
% 	relationship to keystore and DC labels,
% 	how we store and fetch category keys (maybe informal only),
% 	how we serialize and deserialize values (maybe informal only),
% 	the crypto store semantics (in contrast to the ideal store semantics),
% 	design decisions?
% 	}

In this section we describe how \clio uses cryptography to enforce
the policies on the labeled values through a formal model, called
the real \clio store semantics. This model serves as the 
basis for establishing strong, formally proven, computational
 guarantees of the \clio system.
We first describe how DC labels are enforced with 
cryptographic mechanisms (Section~\ref{subsec:keystore}), 
and then describe the real \clio
store semantics (Section~\ref{subsec:clio-store-semantics}). 

\subsection{Cryptographic DC Labeled Values}
\label{subsec:keystore}
\todo{This section is queueing up a lot of definitions, but it is hard to hold all of 
this in memory.  maybe we need a summary table to refer to?}

%As \clio will use cryptographic mechanisms to enforce the security policies of 
%labeled values, \clio must have a cryptographic basis for deriving the
%authority of a principal. 
\clio, like many systems, % that use cryptography,
identifies security principals
with the \emph{public key} of a cryptographic key pair, and associates
the authority to act as a given principal with possession of the
corresponding \emph{private key}. At a high level, \clio ensures that only those with access to a principal's private key can access information confidential to that principal and vouch for information on behalf of that principal.

%% \clio derives the authority of acting as a 
%% particular principal as possession of a cryptographic
%% \emph{secret key} for that principal, which came from the generation of a 
%% cryptographic key pair. Conceptually,
%% as \clio interacts with the store, it will use the secret key of the principal
%% to ensure that only those with that secret key can access and vouch for 
%% the information for that principal. Identity of a principal is based on
%% the corresponding \emph{public key} of the cryptographic key pair.

% \begin{figure}[t]
% % \resizebox{\columnwidth}{!}{
% \small
% % \begin{minipage}{\columnwidth}
% \figRealClioSyntax
% % \end{minipage}
% %}
% \caption{Real \clio Store Syntax \label{fig:real-clio-syntax}}
% \end{figure}

\clio tracks key pairs in a \emph{keystore}.
% Figure~\ref{fig:real-clio-syntax} presents the syntax of keystores in 
% the real \clio store semantics. 
Formally, a keystore is a mapping 
$\keystore : p \mapsto (\{0, 1\}^*, \{0, 1\}^*_{\bot}) $, where $p$ is the principal's well-known name, and the pair of bit strings contains the public and private keys for the principal.
In general, the private key for a principal may not be known---represented by $\bot$---which corresponds to knowing the identity of a principal, but not possessing its authority.
Keystores are the basis of authority and identity for \clio computations.
We use meta-functions on keystores
to describe the authority of a keystore in terms of DC labels.\footnote{Complete
definitions for these functions are in \Cref{appendix:real-syntax}.}
Conceptually, a keystore can access and vouch for any information 
for a principal for which it has the principal's 
private key. Meta-function $\authorityOf{\keystore}$ returns a label where each component (confidentiality, integrity, and availability) is the conjunction of all principals for which keystore $\keystore$ has the private key.
%(captured by the $\authorityOf{\keystore}$ function).
We also use the keystore to determine the starting label of a \clio program $\Start{\keystore}$
and the 
least restrictive
clearance $\Clr{\keystore}$, which are, respectively, the 
 most public, trusted, and available label possible
and the
 most confidential, least trusted, and least available data that the computation can compute on, given the keystore's authority.

%% given the keystore's authority), and the 
%% %of a \clio computation ($\Start{\keystore}$) and the
%% least restrictive
%% clearance for a \clio computation ($\Clr{\keystore}$, which is the most confidential, least trustworthy, and least available data that the computation can compute on given the keystore's authority).
%% Intuitively, clearance captures the 
%% most secret and most trustworthy information a \clio computation can 
%% read and produce, respectively, 
%% which exactly corresponds to the private keys it has.

Using the principal keystore as a basis for authority and identity
for principals,
%(i.e., possessing a secret represents the ability to read or sign data for that principal), 
\clio derives a cryptographic protocol 
%from that an association between DC labels 
%to key pairs 
%whose readability of the secret keys for those key pairs mirrors 
that enforces the security policies of
safe information flows defined by DC labels. 

In the DC label model, labels are made up of triples of \emph{formulas}. Formulas are conjunctions of \emph{categories} $\category_1 \wedge ... \wedge \category_n$.
Categories are disjunctions of principals $p_1 \vee ... \vee p_n$. 
Any principal in a category can read (for confidentiality)
and vouch for (for integrity) information bounded above by the level of the category. 
We enforce that ability cryptographically by ensuring that only principals
in the category have access to the private key for that category.
\clio achieves this through the use of \emph{category keys}. 

A category key serves as the 
cryptographic basis of authority and identity for a category.
Category keys are made up of the following components:
a \emph{category public key} that is readable by all principals, 
a \emph{category private key} that is only readable by members of the category,
and a \emph{category key signature} that is a signature on the category
public key and category private key to prove the category key's authenticity.
Category keys are created lazily by \clio as needed and placed in
the store.
A category key is created using a randomized meta-function\footnote{Defined formally in \Cref{appendix:serialization}.} parameterized by the keystore.
%It produces a \emph{category public key} and a \emph{category secret key}. 
%The category public key is publicly readable, but the category secret key is only
%readable by members of the category. 
% It returns the resulting key pair $(pk, sk)$ for the category and a sequence
% of interactions on the store that it used to construct the category key. 
% The creation of a category key may interact with the store if the store did 
% not already contain the category key pair. 
The generated category private key is encrypted 
for each member of the category separately using each member principal's 
public key. To prevent illegitimate creation of category keys, the encrypted
category private key and category public key are together
signed using the private key of one of the category members.\footnote{The \clio runtime ensures that the first time a category key for a given category is required, it will be because data confidential to the category or vouched for by the category is being written to the store, and thus the computation has access to at least one category member's private key. Note that any computation with the authority of a category member has the authority of the category.}
%using the creating principal's private key.
When a category key is created and placed in the store, 
it can be fetched by anyone but decrypted only
by the members of the category.
When a \clio computation fetches a category key,
it verifies the signature of the category key to ensure that a  
category member actually vouches for it.\footnote{``Encrypt-then-sign''issues (e.g., \cite{anderson1995robustness}) do not apply here as the threat model (i.e., signed encrypted messages implying authorship) is different.} 
(Failing to verify the signature would allow an adversary to trick a \clio computation into using a category key that is readable by the adversary.)
%% the category key and not an unauthorized third party (who could have, 
%% in the absence of a signature, been able to trick a \clio computation 
%% to encrypt information using a key known by someone not in the category).

A \clio computation
encrypts data confidential to a formula $\category_1 \wedge ... \wedge \category_n$
by chaining the encryptions of the value.
It first encrypts  using $\category_1$'s category public key
and then encrypts the resulting ciphertext for formula $\category_2 \wedge ... \wedge \category_n$. % on the resulting ciphertext. %but with the ciphertext from $\category_1$'s encryption instead of the original value.
%Decryption is handled in the reverse order. 
This form of layered encryption relies on a canonical ordering of categories;
we use a  lexicographic ordering of principals to ensure a canonical ordering of encryptions and decryptions.
%% to ensure the value is encrypted and decryped in the same order;
%% since \clio principals are strings, we rely on lexicographic ordering to ensure
%% consistent orderings of encryptions and decryptions.

A \clio computation signs data for a formula by signing the data with each category's private key and then concatenating the signatures together. 
Verification  succeeds only if every category signature
can be verified. % with the corresponding category's public key.

Equipped with a mechanism to encrypt and sign data for DC labels that
conceptually respects safe information flows in \clio, we use this mechanism to serialize and deserialize labeled values to the store. Given a labeled ground value $\Labeled{\gv}{\ltrip{l_c}{l_i}{l_a}}$, the value $\gv$ is signed according to formula $l_i$. The value and signature are encrypted according to formula $l_c$, and the resulting bitstring is the serialization of the labeled value. Deserialization performs decryption and then verification. 
%% now describe
%% serializing and deserializing labeled values to the store. As a first approximation, serializing takes a ground value and a label and produces an
%% encrypted bitstring. First the value is signed according to the integrity
%% policy of the label. Then the value and its signatures are encrypted together
%% according to the confidentiality policy of the label. Deserialization follows
%% the reverse process: the encrypted value is decrypted using the label's 
%% confidentiality policy and then verified according to the integrity policy
%% of the label.
If deserialization fails, then \clio treats it like a missing entry, and
%% the same
%% way it treats a missing entry in the store and
the fetch command that triggered the deserialization would evaluate to the default labeled value.
%% the default labeled value $t_d$ from
%% the $\fetch{t_k}{t_d}$ is used.

\paragraph{Replay Attacks}
Unfortunately, using just encryption and signatures does not faithfully implement the ideal store semantics: the adversary is able to swap entries in the store, or re-use a previous valid serialization, and thus in a limited way modify high-integrity labeled values in the store. We prevent these attacks by requiring that the encryption of the ground value and signature also includes the index value (i.e., the key used to store the labeled value) and a \emph{version number}. 
%
%% Unfortunately, though, this method of encryption is not quite enough to faithfully
%% implement the ideal store semantics. With this method of serializing values, the adversary is more powerful than the adversary described in our threat model. 
%% An adversary can successfully swap entries in the store, thus
%% allowing it to provide high integrity values in a limited way. This sort of 
%% attack can be thwarted by including the index the value was stored at 
%% in the underlying message. During deserialization, the recorded index must 
%% match the requested index.
%% There is still yet another form of vulnerability in this serialization
%% method. An adversary can record a previously stored value and 
%% overwrite an entry with a previously seen value. To prevent this
%% form of adversary interaction, we \emph{version} the serialized entries. That is,
%% in addition to the index being stored with the ground value, a version
%% is also included in the encrypted and signed message as well.
The real
\clio semantics keeps track of the last seen version of a labeled value
for each index of the store. When a value is serialized, the version of that
index is incremented before being put in the store. When the value is 
deserialized the version is checked to ensure that the version is not
before a previously used version for that index. In a distributed setting, this version counter
could be implemented as a vector clock between \clio computations to account
for concurrent access to the store. However, for simplicity, we use
 natural numbers for versions in the real \clio store semantics.

% There is a similar meta-function for when the category key is known by CLIO to already be in the store: $\fetchcknm$ takes the same arguments and fetches the key pair but does not return an updated store.

% The structure of the real \clio store (as opposed to the ideal store presented in the previous section) 
% is a partioning of mappings between ground values $\gv$ to labeled bit strings $\LabeledO{l}{\bitstream}$, and categories $C$ to a \emph{category key} $ck$.

\subsection{\clio Store Semantics}
\label{subsec:clio-store-semantics}
\todo{reviewer: did not think store semantics accounted for category key creation}
\todo{reviewer: puzzled by signing before encrypting, don't we need CCA2-security? what about malleability?}

In this section we describe the real \clio store semantics
%Similar to the ideal store semantics, our model is given
in terms of
a small-step probabilistic relation $\rto{p}$.
%In contrast to the ideal semantics,
%however, this relation is probabilistic.
The relation models a step taken from a real 
\clio configuration to a real \clio configuration with probability $p$.
A real \clio configuration is a triple
$\rconf{c}{\istoredist_0}{\interactionsdist}{\versions}$
of a \clio configuration $c$, 
a \emph{distribution} of \emph{sequences} of \emph{real interactions} 
with the store $\interactionsdist$,
 and a version map $\versions$. 
  %Figure~\ref{fig:real-clio-syntax} shows
% their formal definitions.
The version map tracks version numbers for the store to prevent replay attacks, as described above. For technical reasons, instead of the configuration representing the key-value store as a map, we use
the history of store interactions (which includes interactions made both by the \clio computation and the adversary). The sequence of interactions applied to the initial store gives the current store. Because the real \clio store semantics are probabilistic (due to the use of a probabilistic cryptosystem and cryptographic-style probabilistic polynomial-time adversaries), configurations contain distributions over sequences of store interactions.

Real interactions $\interaction$ (and their sequences $\interactions$)  are defined 
in Figure~\ref{fig:real-clio-semantics} and are similar to interactions with the ideal store. However, instead of labeled values containing ground values, they contain 
bitstrings $b$ (expressing the low-level details of the cryptosystem and the ability of the adversary to perform bit-level operations). Additionally,
the interaction $\storecmd{\category}{\categorykey}$ represents storing of a category key. These
interactions arise from the serialize metafunction, which may create new category keys. 
Note that the interaction $\storecmd{\gv_k}{\LabeledO{l}{b}}$ does not need an integrity
side condition (as it did in the ideal semantics) in the real semantics since there is
no distinction between corruptions and valid store interactions.
%% Additionally,
%%  %instead of using a store to summarize
%%  %the interactions with the store in the store semantics, 
%%  we include the entire history of the 
%%  interactions (including the adversary interactions) that
%%  led to the current state of this real \clio configuration. % to accommodate complete
%%  %reasoning about the store. 
%% In the same vein, we record the entire probability distribution
%%  of sequences of interactions $\interactionsdist$ 
%%  as our model of computation in the real \clio store semantics
%%  is probabilistic.

%To ease the notation burden, 
%To deal with manipulating probability distributions,
%To express probability distributions, we use notation
We use notation
%\vspace{-1.4em}
$$
\randomExprL f(X_1, ..., X_n) ~|~ X_1 \drawnFrom D_1; ~ ... ~ X_n \drawnFrom D_n \randomExprR
%\vspace{-0.4em}
$$
%\noindent
to describe the probability distribution over the function $f$ with inputs of random variables $X_1, ...,  X_n$ where $X_i$ is distributed according to distribution $D_i$ for $1 \leq i \leq n$.

\begin{figure}[t]
\begin{mathpar}
\figRealClioMain
\end{mathpar}
%\vspace{-1em}
\caption{Real \clio Semantics} \label{fig:real-clio}
\end{figure}

Figure~\ref{fig:real-clio} presents the inference rules for %of the single step relation
$\rto{p}$. %analagous to the ideal single step relation $\ito$ %. The rules mirror the ideal store semantics.
Internal steps do not affect the interactions or versions. For storing 
(rule \rulename{Store}),
the version of the entry is incremented using the $\incrementnm$ function and
the real \clio configuration uses a new distribution of interactions 
$\interactionsdist'$ containing the interactions to store the labeled value. 
The new distribution contains the original interactions
(distributed according to the original distribution of interactions)
along with a concatenation of labeled ciphertexts and 
any new category keys (distributed according to the distribution
given by serialization function). Note that the label of the stored value is
not encrypted as it is public information.
The configuration steps with probability $1$ as the \rulename{Store} rule
will be used for all $\textsf{store}$ operations.

When fetching a labeled value, there are three possible rules 
that can be used depending on the
current state of the store:  \rulename{Fetch-Exists},  \rulename{Fetch-Missing},  \rulename{Fetch-Replay}. 
The premise,
%
%\vspace{-1.4em}
$$
(\istore, p) \in \randomExprL \interactions(\emptyset) ~|~ 
	\interactions \drawnFrom \interactionsdist \randomExprR
%\vspace{-0.4em}
$$
%\noindent
in each of these rules
means that store $\istore$ has probability $p$ of being produced (by drawing
interaction sequence $\interactions$ from distribution  $\interactionsdist$ and applying $\interactions$ to the empty store $\emptystore$ to give store $\istore$).
%% is the state of the untrusted store being $\istore$ with probability $p$.
%% The state of the store is the result of performing the interactions 
%% in sequence on an initially empty store $\emptyset$, where the 
%% interactions are distributed according to the configuration's 
%% history of interactions on the store $\interactionsdist$.

Which rule is used for a fetch operation depends on the state of the
store, and so the transitions may have probability less than one.
%% As the state of the store is based on the distribution of
%% interactions, the fetch step will probabilistically transition
%% depending on the exact state of the store.
Rule \rulename{Fetch-Exists} is
used when the sequence of interactions drawn produces a store that has a
serialized labeled value indexed by $\gv_k$ that can be correctly deserialized
and whose version is not less than the last version seen at this index.
Rule \rulename{Fetch-Missing} is used when the sequence of interactions
drawn
produces a store that either does not have an entry indexed by $\gv_k$, or
has an entry that cannot be correctly deserialized. Finally, \rulename{Fetch-Replay}
rule is used when the sequences of interactions drawn produce a store
where an adversary has attempted to replay an old value:
the store has a labeled value that can be deserialized correctly, but whose 
recorded index is not the same as the index requested by the \clio computation
or whose version is less than the version last seen. % by the \clio computation.

\begin{figure}[t]
{\small
\begin{mathpar}
\figRealClioLowStep
\end{mathpar}
%\vspace{-2em}
\begin{alignat*}{2}
\textrm{Interactions:} &  \quad
	 % \interactions & \Coloneqq~ &
	 % 	\concat{\interaction}{\interactions} \ ~|~\ \interaction \\
	%& \quad 
	\interaction & \Coloneqq~ &
	 	\skipcmd = \lambda \istore. ~ \istore \\
	&& \!\!|\ ~~ &  \storecmd{\category}{\categorykey} = \lambda \istore. ~ \istore[\category \mapsto \categorykey]\\
	&& \!\!|\ ~~ & \storecmd{\gv_k}{\Labeled{l}{\bitstream}} = \lambda \istore. ~ \istore[\gv_k \mapsto \Labeled{l}{\bitstream}]\\ 
	\textrm{Strategies:} & \quad
		\strategy & :~~~ & \interactionsdist \rightarrow \interactionsdist 
\end{alignat*}
\noindent\figStepFunctionDefNL%%{\newline }
}
\caption{Real \clio Low Step Semantics} \label{fig:real-clio-semantics}
\end{figure}

Similar to the ideal store semantics, we use a low step relation $\Lowstep{p}$
to model adversary interactions, shown in Figure~\ref{fig:real-clio-semantics}.
The low step relation is also probabilistic as it is based on the probabilistic 
single step relation $\rto{p}$. Additionally, we use a distribution of sequences of adversarial 
interactions $\interactionsdist_A$ to model an 
adversary that behaves probabilistically. In rules \rulename{Low-Step} and 
\rulename{Low-To-High-To-Low-Step} a new distribution of interactions, $\interactionsdist'$ is created by concatenating interaction sequences drawn from the existing distribution of interactions $\interactionsdist$ and the adversary distribution $\interactionsdist_A$. This is analogous to the application of adversary interactions to the current store in the ideal semantics.
%% which
%% represents the concatenation of the sequences of interactions from the adversary
%% together with the past \clio interactions.
The rest of the definitions of the rules
follow the same pattern as the ideal \clio low step store semantics. 

With the low step relation, we use metafunction $\multistepnm$ to describe 
the distributions of real \clio configurations resulting from taking $j$
low steps from configuration $c_0$, formally defined in 
Figure~\ref{fig:real-clio-semantics}. The $\multistepnm$ function
is parameterized by the keystore $\keystore$ and store level $\lowlabel$.
%(i.e., the keystore of the \clio computation, and the store level of the key-value store).
%% that
%% are the keystore and store level,  respectively, to use while running 
%% the \clio computation.
To provide a source of adversary interactions while
running the program, the $\multistepnm$ function also takes as input a 
\emph{strategy} $\strategy$ which is a function from 
distributions of interactions to distributions of interactions, representing the
probabilities of interactions an active adversary would perform.
Before each low step, the strategy is invoked to produce a distribution of 
interactions that will affect the store that the \clio computation is using.

Strategy $\strategy$ expresses the ability of the attacker to modify the store. The
attacker chooses $\strategy$ (and $t$, $v_0$, and $v_1$), and $\strategy$
 interacts with the store during
execution. $\strategy$ is a function from (distributions of) interaction sequences to
(distributions of) interaction sequences, i.e., a function from a history of
what has happened to the store so far to the attacker’s next modifications to
the store. Note that we do not explicitly model fetching from the store as an 
adversary interaction. 
There is no need for $\strategy$ to fetch values to determine the next
modification to the store since $\strategy$ effectively observes the entire history of store
interactions. At the end of the game when the adversary continuation ($\adversary_2$) needs
to pick $v_0$ or $v_1$, it observes the history of interactions with the store via
the interaction sequence $\interactionsvar{\interaction_b}$, and thus does not need to explicitly get or fetch values.

\section{Formal Properties}
\label{sec:eval}

\subsection{Indistinguishability}
\label{subsec:indist}

A cryptosystem is \emph{semantically secure} 
if, informally, ciphertexts of messages of equal length
are \emph{computationally indistinguishable}. % from one another.
Two sequences of probability distributions
are computationally indistinguishable (written $\{X_n\}_n \approx \{Y_n\}_n$)
if for all non-uniform probabilistic polynomial time (ppt)
algorithms $\adversary$,
$$
\big| ~ \probabilityOfCond{\adversary(x) = 1}{x \drawnFrom X_n}  - \probabilityOfCond{\adversary(y) = 1}{y \drawnFrom Y_n~ } ~ \big|
$$
is \emph{negligible} in $n$~\cite{Goldwasser1982}. 

% \todo{reviewer: I was a bit surprised that IND-CPA security achieved via RSA-OAEP and counter mode encryption was sufficient for a confidentiality guarantee for the overall system. Why is it not possible to flip bits in the counter mode part of some ciphertext, then re-sign with a signing key known to the adversary (I think one assumes the adversary does have access to some of the signing keys).}
% DONE

In modern cryptosystems, semantic security is defined as
indistinguishability under chosen-plaintext attacks (CPA)
%defined below~
\cite{pass-shelat}.
%\todo{reviewer C: I am surprised that there is no limit here on the value of j.
%What if j were exponential in the security parameter n? I would have
%expected that Definition 2 would only apply for "polynomial-time" executions
%terms, i.e. j is polynomially bounded in n.}
% DONE
\begin{definition}[Indistinguishability under Chosen-Plaintext Attack]
Let the random variable $\mathrm{IND}_b(\mathcal{A}, n)$ denote
the output of the   experiment, where
$\mathcal{A}$ is non-uniform ppt, $n \in \varmathbb{N}$, $b \in \{0,1\}$ :
\noindent{\em
$$
\figCPAGame
$$}
{\em $\Pi = (\Gen, \Enc, \Dec)$} is Chosen-Plaintext Attack (CPA) secure if for all
non-uniform ppt $\mathcal{A}$:
$$
\Big\{~\mathrm{IND}_0(\mathcal{A}, n)~\Big\}_n \approx \Big\{~\mathrm{IND}_1(\mathcal{A}, n)~\Big\}_n
$$
\end{definition}

This definition of indistinguishability phrases the security of the
cryptosystem in terms of a game where %. In this game,
an adversary receives the
public key and then produces two plaintext messages of equal length. 
%The game then chooses o
One of the two messages is encrypted and 
the resulting ciphertext given to the adversary. 
The cryptosystem is CPA Secure if no adversary exists that can
produce substantially different distributions of output 
based on the choice
of message. In other words, no computationally-bounded adversary 
is able to effectively distinguish which message was encrypted.

\clio relies on a semantically secure cryptosystem, but this
is insufficient %not enough to show that
for \clio to protect the confidentiality of 
secret information. This is because CPA Security provides guarantees
only for individually chosen plaintext messages.
% Whereas in \clio, arbitrary programs
% are being run where intermittent results are visible on the store
% during which other non-\clio programs are interacting, possibly
% influencing its behavior. As mentioned in the introduction, it may
% also be the base that a program is able to influence the \clio computation
% to improperly use its cryptographic keys. As a result,
% we must adapt the definition of CPA Security to handle these considerations.
In contrast, in our setting we consider \emph{terms} (i.e., programs) 
chosen by an adversary.
%, rather than static plaintext messages.
There are also many principals and as a 
result many keys in a real system, so \clio must protect arbitrarily many principals' information from the adversary. Additionally, the adversary may already have 
access to some of the keys. Finally, the adversary is active: it can see interactions with the store and issue new interactions adaptively while the program is running.
It can attempt to leverage a \clio computation
to illegitimately produce a value it should not have, or could try to trick the \clio system into leaking secret information by interacting with the store.
Traditionally, these actions of the adversary are modeled by queries to a decryption oracle, as in CCA-2~\cite{pass-shelat}.
Here, they are modeled directly by the \clio language and store semantics.
% especially when considering that the program itself may be untrusted. 
%This breaks the intuitive security goal achieved by encrypting information.

We chose to formulate a new definition of security that addresses these concerns,
as many previous classical definitions of security fall short in this setting:
\begin{itemize} 
	\item Noninterference does not permit the use of computationally secure mechanisms like
cryptography. 
	\item CPA security considers only the semantics of the cryptographic
algorithms, not the system they are embedded within. 
	\item CCA and CCA2 attempt to
model system behavior using oracles, but the connection between these oracles
and an actual system is too abstract. 
\end{itemize}
In contrast, we chose to employ a computational
model of cryptography that accurately represents the power of the attacker precisely using
the semantics of the language and interactions with the store.

With these considerations in mind, we define indistinguishability
under a new form of attack: \emph{chosen-term attacks} (CTA).

\begin{definition}[Indistinguishability under Chosen-Term Attack]
Let the random variable $\IND{b}{\keystore}{\adversary}{\principals}{j}{n}$ denote 
the output of the following experiment, where 
$\cryptosys = (\Gen, \Enc, \Dec, \Sign, \Verify)$, 
$\adversary$ is non-uniform ppt, $n \in \nat$, $b \in \{0,1\}$:
$$
\figCTAGame
$$
\clio using \cryptosys is CTA Secure if for all
non-uniform ppt $\adversary$, $j \in \varmathbb{N}$, keystores \keystore, and principals $\principals$:
$$
	\distfam{\IND{0}{\keystore}{\adversary}{\principals}{j}{n}} \cequiv
	\distfam{\IND{1}{\keystore}{\adversary}{\principals}{j}{n}}	
$$
\end{definition}

%This version of the
The CTA game follows the same structure as the CPA game.
In addition, we allow the adversary to know certain information (by fixing it in the game),
including some part of the keystore  ($\keystore_0$),
the set of principals that \clio is protecting ($\principals$), and 
 the number of low steps the program takes ($j$).
Cryptosystem $\cryptosys$ is used implicitly in the CTA game to generate keys,
encrypt, decrypt, sign and verify\footnote{More formally, $\mathrm{IND}_b$, $\adversary$, and
the semantics are also parameterized on $\cryptosys$, and the uses of $\Gen, \Enc, \Dec, \Sign, \Verify$ should be explicitly taken from the tuple $\cryptosys$ though we elide their explicit usage in our notation for clarity.}.

In this game setup, $\Gen(\principals, 1^n)$ generates a new keystore
$\keystore'$
containing private keys for each of the principals in $\principals$,
using the underlying cryptosystem's $\Gen$ function for each keypair.
Then, the adversary receives all public keys of the keystore
$\pubkeys{\keystore}$
and
%now, instead of returning two plaintext mesasges (in CPA), it instead
returns three well-typed \clio terms: a function $t$, and two program inputs to the function
$v_0$ and $v_1$ 
that must be confidentiality-only low equivalent $\lowEquivC$
(i.e., they may differ only on secret values)\footnote{Complete definition of low equivalence is in \Cref{subsec:complete-low-equiv}.}. It also returns a strategy
$\strategy$ that %interacts with the store while $t$ is running.
%The strategy
models the behavior of the adversary on the store while the
computation is running. % and interacting with the store.
Note that the strategy is also polynomial in the security parameter as it
is constructed from a non-uniform polynomial time algorithm.
The program $t$ is run with one of the inputs $v_0$ or $v_1$ for a fixed number of steps $j$.
The adversary receives the interactions resulting from a run
of the program and needs to use that information to determine which secret
input the program was run with.

Being secure under a chosen-term attack means that the sequences of interactions between two low-equivalent programs are indistinguishable and hence an adversary does not learn any secret information from the store despite actively interacting with it while the program it chose is running. 
Note that the adversary receives the full trace of interactions on the store 
(including its own interactions); this gives it enough information to reconstruct
the final state of the store and any intermediate state.
For any set of principals,
and any adversary store level, the interactions with the store contain
no efficiently extractable secret information for all well-typed terminating
programs.

%This intuition is formalized below \todo{do we think this helps? it's possible the formalism isn't necessary and can be conveyed better informally}. 

% Similar to the CPA security therorem, the CTA definition may be difficult to understand to some as it is phrased in the form of a game. So, like the indistinguishability corollary, we give an alternative definition of security (that is a fairly direct consequence of CTA security).

% \begin{corollary}[Indistinguishability]
%  If \clio is CTA Secure, then for all keystores $\keystore$, principals $\principals$, strategies $\strategy$, terms $t_0, t_1$, if $t_0 \lowEquiv t_1$ where $\lowlabel = \authorityOf{\keystore}$ then,
% $$
% \begin{array}{l<{\hspace{-0.6em}}l}
% \Big\{~\interactions~ ~\Big|&\rconf{c}{\istore}{\interactions}{\versions} \drawnFrom \multistep{\keystore'}{ \conf{\Start{\keystore'}}{\Clr{\keystore'}}{t_0}  }{\strategy}{j}; \\
% & ~\keystore' \drawnFrom \keystore \uplus \{ p \mapsto \Gen(1^n) ~|~ p \in \tilde{p} \}
%    \Big\}_n \\

% \multicolumn{2}{c}{\cequiv}\\

% \Big\{~\interactions~ ~\Big|&\rconf{c}{\istore}{\interactions}{\versions} \drawnFrom \multistep{\keystore'}{ \conf{\Start{\keystore'}}{\Clr{\keystore'}}{t_1}  }{\strategy}{j}; \\
% & ~\keystore' \drawnFrom \keystore \uplus \{ p \mapsto \Gen(1^n) ~|~ p \in \tilde{p} \}
%    \Big\}_n
% \end{array}
% $$
% \end{corollary}

%We can show that \clio satisfies this security guarantee.

\begin{theorem}[CTA Security]
If \cryptosys if CPA Secure, then \clio using \cryptosys is CTA Secure.
\end{theorem}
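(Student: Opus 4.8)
The plan is to isolate the cryptographic content of the theorem in a single lemma about interaction traces, and then obtain the theorem by a purely syntactic argument. Concretely, I would first establish an \emph{interaction lemma}: for every keystore $\keystore_0$ with $\lowlabel = \authorityOf{\keystore_0}$, every strategy $\strategy$, set of principals $\principals$, step count $j \in \nat$, and pair of \clio configurations $c_1 \lowEquivC c_2$, if $\cryptosys$ is CPA secure then the two families of distributions over interaction traces produced by $\multistep{\keystore}{c_1}{\strategy}{j}$ and $\multistep{\keystore}{c_2}{\strategy}{j}$ (with $\keystore = \keystore_0 \uplus \keystore'$ and $\keystore' \drawnFrom \Gen(\principals, 1^n)$) are computationally indistinguishable, and moreover, except with negligible probability, the two runs end in configurations that are again $\lowEquivC$ and have identical version maps. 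Granting this, the theorem follows immediately: given the CTA adversary $\adversary$, which outputs $t$, $v_0 \lowEquivC v_1$, a strategy $\strategy$, and a continuation $\adversary_2$, reflexivity of low equivalence ($t \lowEquiv t$) and its being a congruence for application give $(t\ v_0) \lowEquivC (t\ v_1)$, and since both sides are run from the same initial label and clearance, $\conf{\Start{\keystore}}{\Clr{\keystore}}{(t\ v_0)} \lowEquivC \conf{\Start{\keystore}}{\Clr{\keystore}}{(t\ v_1)}$. The interaction lemma then yields $\{\interactionsdist_0\}_n \cequiv \{\interactionsdist_1\}_n$, and post-composing with the ppt map $\adversary_2$ (which cannot increase distinguishing advantage) gives $\distfam{\IND{0}{\keystore_0}{\adversary}{\principals}{j}{n}} \cequiv \distfam{\IND{1}{\keystore_0}{\adversary}{\principals}{j}{n}}$, i.e.\ CTA security.

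I would prove the interaction lemma by an \emph{erasure-based hybrid}, which is where the programming-language and cryptographic techniques meet. Introduce an erasure $\erasederived$ on interaction traces (and a corresponding $\eraseterm$ on terms and configurations) that replaces every serialized payload whose confidentiality label does not flow to $\projC{\lowlabel}$ --- exactly the ciphertexts the store-level attacker is not cleared to read --- by a canonical dummy ciphertext of the same length, leaving public payloads, indices, versions, and category-key material untouched. The lemma then follows from two sub-claims. \textbf{(a) Cryptographic step:} the real game is computationally indistinguishable from the ``erased game'' $\INDDer{b}{\keystore_0}{\adversary}{\principals}{j}{n}$ in which $\adversary_2$ receives $\erasederived(\interactionsvar{\interaction_b})$; this is a hybrid over the secret ciphertexts emitted during the (polynomially bounded, since $j$ is fixed and $\adversary,\strategy$ are ppt) run, where consecutive hybrids differ in a single such ciphertext and are indistinguishable by a reduction to CPA security of $\cryptosys$ that plants its challenge in place of the relevant category-public-key encryption and simulates the remaining \clio and store semantics, the version bookkeeping, the lazy category-key creation, and the adversary's strategy using only the public keys it was given. \textbf{(b) Noninterference step:} $\INDDer{0}{\keystore_0}{\adversary}{\principals}{j}{n}$ and $\INDDer{1}{\keystore_0}{\adversary}{\principals}{j}{n}$ are \emph{identical} distributions; this is a termination-insensitive noninterference argument for the erased semantics, by induction on the number of low steps, showing that two $\lowEquivC$ configurations run under the same strategy stay $\lowEquivC$, keep identical version maps, and emit erasure-equal interactions, because the only way the two runs can diverge is after unlabeling a secret value, at which point the current label has floated above $\lowlabel$ and --- exactly as in the LIO noninterference proof --- no low step (hence no interaction below $\lowlabel$) can occur until \rulename{Reset} restores the label. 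Chaining the two sub-claims, $\distfam{\IND{0}{\keystore_0}{\adversary}{\principals}{j}{n}} \cequiv \distfam{\INDDer{0}{\keystore_0}{\adversary}{\principals}{j}{n}} = \distfam{\INDDer{1}{\keystore_0}{\adversary}{\principals}{j}{n}} \cequiv \distfam{\IND{1}{\keystore_0}{\adversary}{\principals}{j}{n}}$.

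The hard part will be making the two sub-claims fit together cleanly given the store's extra machinery. In (b), the induction must carry $\lowEquivC$, version-map equality, and erasure-equality of traces through \rulename{Store}, the three \textsf{fetch} rules, and \rulename{ToLabeled}/\rulename{Reset}, and it must tolerate arbitrary adversary-injected store content (the real semantics imposes no integrity check on adversary stores, so this also subsumes signature forgeries): one shows that which fetch rule fires, and the label and ground value the program receives, depend only on erasure-visible state, so the two runs take matching steps, and that any adversary-injected value the program then unlabels at a confidentiality level above $\projC{\lowlabel}$ pushes the run high and out of low-observability. In (a), the subtleties are that the reduction must itself generate and sign the lazily created category keys (sound precisely because a category whose data is erased always has a member outside $\keystore_0$, so its private key is a fresh challenge key), that the replay and version discipline in \rulename{Fetch-Replay} and \rulename{Fetch-Invalid} must be simulated faithfully, and that the adaptive strategy $\strategy$ --- which never fetches --- can be run inside the reduction from the interaction history alone. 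I expect the bulk of the effort to be the induction of (b) together with bounding the number of hybrids in (a) so that the accumulated advantage stays negligible; the leveraged-existential-forgery machinery does not need to be invoked here, since a verification failure only ever makes a \textsf{fetch} fall back to its (erasure-visible) default value.
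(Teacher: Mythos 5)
Your proposal is correct in outline but takes a genuinely different route from the paper. The paper never introduces an erased game: its key lemma is that a purely \emph{syntactic} relation $\asymp$ on families of distributions of interaction sequences implies computational indistinguishability, built up in three ``rounds'' (secret ciphertexts only, shown indistinguishable by a hybrid reduction to CPA security, i.e.\ multi-message security; then with public/low interactions spliced in at fixed positions; then with the adaptive strategy's outputs interleaved), and these lemmas are deliberately stated independently of the \clio semantics to break the circularity between ``low-equivalent programs'' and ``indistinguishable traces''. The programming-language half of the paper's proof is then a single induction over $\rto{p}$ and $\Lowstep{p}$ carrying the strengthened invariant that the two runs remain $\lowEquivC$, keep equal version maps, and emit traces related by $\asymp$, followed by an argument that the per-trace probabilities in $\multistepnm$ agree up to negligible --- the paper explicitly allows the two runs to fire \emph{different} fetch rules on secret entries and argues only that the rule-selection probabilities must coincide, else the traces would be distinguishable. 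Your erasure decomposition instead concentrates all the cryptography in one real-versus-erased hybrid and makes the noninterference core an exact equality of distributions; this is closer to classical computational-noninterference proofs (and the paper's unused $\erasederived$ machinery suggests the authors contemplated it). What you gain is modularity and an exact rather than up-to-negligible noninterference step; what you pay for is having to define an erased \emph{semantics}, and that is where your sketch needs the most care: you must stipulate how \rulename{Fetch-Exists}, \rulename{Fetch-Missing} and \rulename{Fetch-Replay} behave on dummy ciphertexts so that your claim (b)'s exact equality holds, and so that the reduction in your claim (a) --- which lacks the challenge decryption key --- can still decide which fetch rule fires; the paper sidesteps this by never erasing and instead matching rule-selection probabilities across the two real runs.
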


We prove this theorem in part by induction over the low step relation
$\Lowstep{p}$, to show that two low equivalent configurations will
produce low equivalent configurations, including computationally
indistinguishable distributions over sequences of interactions. A
subtlety is that we must strengthen the inductive hypothesis to
show that sequences of interactions satisfy a stronger
syntactic relation (rather than  being just computationally
indistinguishable).

More concretely, the proof follows three high-level steps. 
 First, we show how a relation $\asymp$ on families of distributions of
sequences of interactions 
preserves computational indistinguishability.
That is, if $\interactionsdist_1 \asymp \interactionsdist_2$ and $\cryptosys$ is CPA secure, then $\interactionsdist_1 \approx \interactionsdist_2$. 
Second,
we show that as two low equivalent configurations step using the low step relation
$\Lowstep{p}$, low equivalence is preserved and the interactions
they produce satisfy the relation $\asymp$. 
Third, 
we show that the use of the $\multistepnm$ metafunction on two
low equivalent configurations will produce computationally indistinguishable distributions
over distributions of sequences of interactions.
Each step of the proof relies on the previous step and the 
first step relies on the underlying assumptions on the cryptosystem. We
now describe each step of the proof in more detail.

% More precisely, we use a 
% combination of a proof of preservation of low equivalence
% on $\Lowstep{p}$ together with a 
% hybrid argument (a standard crypto proof technique) 
% and indistinguishability reductions to show security.

\paragraph{Step 1: Interactions Relation}
%In this step,
We consider pairs of arbitrary distributions of sequences of interactions
and show that, if they are both of a certain syntactic form then they are indistinguishable.
Importantly, the indistinguishability lemmas do not refer to the \clio store semantics,
i.e., they merely describe the form of arbitrary interactions that may or may not
have come from \clio. The invariants on pairs of indistinguishable distributions of
interactions implicitly require low equivalence of the programs that generated them, 
and low equivalence circularly requires indistinguishable distributions of interactions. 
As a result, we describe the lemmas free from the 
\clio store semantics to break the circularity.
%% We also used this approach
%% since using indistinguishability alone 
%% as an inductive hypothesis is too weak when the 
%% distribution ensembles change (e.g., a new interaction is made).
%% It is too weak because indistinguishability is \textit{not} compositional: 
%% the output of a function over a pair of two computationally indistinguishable 
%% distributions is not always computationally indistinguishable. 
%% A simple counter-example
%% is a decryption function, where the two inputs are private keys from
%% a pseudorandom generation function and ciphertexts resulting from 
%% encryptions on messages of equivalent lengths.
%% On their own, each distribution is computationally indistinguishable. But the
%% decryption function can effectively reveal the original plaintext and thus 
%% creating distinguishable outputs. 

We progressively define the relation $\asymp$ on a pair of interactions.
Initially, distributions of interactions only contain secret encryptions so that
we can appeal to a standard cryptographic argument of multi-message security.
Formally,
for all keystores $\keystore_0$, and
$l_1, ..., l_k$, such that $\canFlowToC{\projC{l_i}}{\projC{\authorityOf{\keystore}}}$, and for all $m_{\{1,2\}}^1 ... m_{\{1,2\}}^n$ and all principals $\principals$, if $|m^i_1| = |m^i_2|$ for all $1 \leq i \leq k$ and $\cryptosys$ is CPA Secure, then 

\noindent
\hspace{-0mm}
\vbox{
\[
\begin{array}{l}
\big\{ ~ \storecmd{\gv^1}{\Labeled{l^1}{b^1_1}} \cdot  ~...~ \cdot \storecmd{\gv^k}{\Labeled{l^{k}}{b^{k}_1}} ~\big|~ \\ \hspace*{1cm} \keystore \drawnFrom \Gen(1^n); \\ \hspace*{1cm} (pk^i, sk^i) \in rng(\keystore); \\ \hspace*{1cm}  b^i_1 \drawnFrom \Enc(pk^i, m^i_1); ~ 1 \leq i \leq k ~ \big\}_n \\
\multicolumn{1}{c}{\asymp} \\
\big\{ ~ \storecmd{\gv^1}{\Labeled{l^1}{b^1_2}} \cdot  ~...~ \cdot~ \storecmd{\gv^k}{\Labeled{l^{k}}{b^{k}_2}} ~\big|~ \\ \hspace*{1cm} \keystore \drawnFrom \Gen(1^n); \\ \hspace*{1cm} (pk^i, sk^i) \in rng(\keystore); \\ \hspace*{1cm}  b^i_2 \drawnFrom \Enc(pk^i, m^i_2); ~ 1 \leq i \leq k ~ \big\}_n 
\end{array}
\]
}

% so long as the length of each corresponding message between the pair 
% in the sequence is equal.
Using multi-message security as a basis for indistinguishability, 
we then expand the relation to contain readable encryptions (i.e.,
ones for which the adversary has the private key to decrypt) where the
values encrypted are the same. 
%We make a post-processing argument
%to show that the distributions remain indistinguishable.
%
% so long
%as each corresponding message is equivalent. 
In the complete definiton of $\asymp$, we expand it to also 
contain interactions from a strategy,
forming the final relationship on interactions captured by the $\asymp$ 
relation.

We establish an invariant that must hold between pairs in the relation in
order for them to be indistinguishable. For example, in the first definition,
the lengths of each corresponding message between the pair must be the same.
Each intermediate definition of $\asymp$ is used
to show that a ppt can simulate the extra information in the more generalized
definition (thus providing no distinguishing power). 
For the first definition of the relation containing only secret encryptions, 
a hybrid argument is used similar to showing multi-message
CPA security~\cite{pass-shelat}.

%With these invariants on interactions (i.e., the sufficient conditions
%for indistinguishable interactions), we consider how the \clio store semantics generates interactions for a pair of low equivalent configurations.

\paragraph{Step 2: Preservation of Low Equivalence}
%and show that the semantics preserve these invariants. 
% %The invariants on interactions produced by \clio (captured by the
%% $\asymp$ relation) require
%% low equivalence to be preserved as the terms producing those interactions progress, and symmetrically preservation of low equivalence
%% requires the interactions they generate to stay in the $\asymp$ relation. 
%% As a result,
We show that as two low equivalent programs $t$ and $t'$ progress, they simultaneously 
preserve low equivalence $t \lowEquivC t'$ and the distributions of sequences of interactions they produce $\interactionsdist$ and $\interactionsdist'$ are in the relation $\asymp$.

We first show that if $c_0 \ltol{\alpha} c_0'$ and $c_1 \ltol{\alpha} c_1'$ and $c_0 \lowEquivC c_1$ then $c_0' \lowEquivC c_1'$. This proof takes advantage of 
the low equivalence preservation proofs for LIO in all cases except for the
storing and fetching rules. For store events, since all
values being stored will have the same type (due to type soundness), and will be ground values,
serialized values will have the same message lengths.

We then show that if 
$$(\rconf{c_0}{\istoredist_0}{\interactionsdist_0}{\versions_0}, \interactionsdist) \Lowstep{p} \rconf{c_0'}{\istoredist_0'}{\interactionsdist_0'}{\versions_0'}$$ and 
$$(\rconf{c_1}{\istoredist_1}{\interactionsdist_1}{\versions_1}, \interactionsdist) \Lowstep{p} \rconf{c_1'}{\istoredist_1'}{\interactionsdist_1'}{\versions_1'}$$ and 
$$(c_0 \lowEquivC c_1) \wedge (\versions_0 = \versions_1) \wedge (\interactionsdist_0 \asymp \interactionsdist_1)$$ then,
$$(c_0' \lowEquivC c_1') \wedge
(\versions_0' = \versions_1') \wedge
(\interactionsdist_0' \asymp \interactionsdist_1').$$
The proof on $\Lowstep{p}$ relies on the previous preservation proof on $\ltol{\alpha}$ and the indistinguishability results on $\asymp$.

% We divide our proof of preservation in this step into two proofs: one on the single
% step semantics $\ltol{\alpha}$ and one on the low step $\Lowstep{p}$.
% Due to the way we structured the semantics, we can import the low-equivalence
% preservation properties already proven for LIO to show preservation of 
% low equivalence for $\ltol{\alpha}$ for all but the storing and fetching rules. 
% For the low step relation, we show low equivalence is preserved by induction on the 
%  sequence of single steps. 

\paragraph{Step 3: Indistinguishability of the step metafunction}
We show that the $\multistepnm$ metafunction preserves low equivalence. More formally, we show that if
%% With $\Lowstep{p}$ shown to preserve low equivalence, 
%% we consider the distribution of distributions of interactions from the $\multistepnm$ metafunction, as CTA security is based on them. 
%% Formally, we show the following.
$c_0 \lowEquivC c_1$ and $\versions_0 = \versions_1$ and $\interactionsdist_0 \asymp \interactionsdist_1$
then
$$
\begin{array}{c}
\{ (\interactionsdist_0', p_0 \cdot ... \cdot p) ~|~ \rconf{c_0}{\istoredist_0}{\interactionsdist_0}{\versions_0} \Lowstep{p_0} ... \Lowstep{p} \rconf{c_0'}{\istoredist_0'}{\interactionsdist_0'}{\versions_0'} \}_n\\
\approx \\
\{ (\interactionsdist_1', p_0' \cdot ...\cdot p') ~|~ \rconf{c_1}{\istoredist_1}{\interactionsdist_1}{\versions_1} \Lowstep{p_0'} ... \Lowstep{p'} \rconf{c_1'}{\istoredist_1'}{\interactionsdist_1'}{\versions_1'} \}_n
\end{array}
$$
% 
% We show that it is not the case that a particular distribution of interactions generated from one trace of low steps .. will be much more likely than another distribution of interactions aa where $c_0 \lowEquivC c_1$. Concretely, $p$ will not be much greater than $p'$. If that were the case then the CTA game would more likely draw from $\interactionsdist_0'$ than $\interactionsdist_1'$. 
% If a particular
% distribution of interactions was more more likely than the distribution
% of interactions, the \clio semantics would not be CTA secure.
% For example, if $\interactionsdist_0'$ was a distribution of
% sequences of interactions from
% a run resulting from a failed decryption of a strategy-chosen
% ciphertext that was the resulting distribution of distributions with
% a higher probability than
% another distribution of interactions $\interactionsdist_1'$ where there were no
% failed decryptions,
% then the resulting distributions of distributions could be distinguished based
% on the greater probability that $\interactionsdist_0'$ was sampled compared to
% $\interactionsdist_1'$.
%
We prove this by showing that
the probabilities of traces taken by two low equivalent configurations
are equal with all but negligible probability.
%reason about the probabilities of corresponding low equivalent distributions of interactions. %We show the relationship between low equivalence of low steps between configurations by 
As an example, Figure~\ref{fig:lowequiv-proof} shows graphically how one step of the trace is handled.
%We consider the distribution formed from the step function after 1 low step as an example to make our arguments concrete, shown graphically in 
%In Figure~\ref{fig:lowequiv-proof}
We examine the result of $\multistepnm_{\keystore}(c_1, \strategy, 1)$ and $\multistepnm_{\keystore}(c_2,$ $\strategy, 1)$ where 
$c_1 \lowEquivC c_2$. (Note that this setup matches the instantiation of
the CTA game where $j=1$.) The left rectangle shows the resulting distribution
over distributions of configurations after one step of the
$c_1$ configuration. The right circle
shows the resulting distribution over distributions of configurations
after one step of the $c_2$ configuration. Due to the results from Step 2,
we can reason that $c_1' \lowEquivC c_2'$ and that $c_1'' \lowEquivC c_2''$.
We can also conclude that $\interactionsdist_1 \asymp \interactionsdist_2$
and that $\interactionsdist_1' \asymp \interactionsdist_2'$. The final
step of the proof is to show that the interactions from the 
resulting two distributions (i.e.,
the top circle and bottom circle) are computationally indistinguishable.
%
%% By analyzing how the distributions are transformed by the storing 
%% and fetching rules, we prove a relationship between the probabilities that led to the corresponding low equivalent configurations. That is, in the context
%% of the diagram in Figure~\ref{fig:lowequiv-proof}, we
That is, we
show that $p_1$ is equal to $p_2$ and also $p_1'$ is equal to $p_2'$ with all
but negligible probability.
%% We can prove this through reasoning about
%% how low equivalent configurations transition. By showing that the probabilities
%% of transitions of low equivalent configurations will be equal with all 
%% but negligible probability, we can conclude that 
%% the resulting draws from the $\multistepnm$ distributions will be 
%% computationally indistinguishable. 

\begin{figure}[t]
% Set the overall layout of the tree
\tikzstyle{level 1}=[level distance=0cm, sibling distance=4cm]
\tikzstyle{level 2}=[level distance=1.5cm, sibling distance=1.9cm]
%\tikzstyle{level 3}=[level distance=2.5cm, sibling distance=2cm]

% Define styles for bags and leafs
\tikzstyle{bag} = [text width=1em, text centered, text width=3cm]
\tikzstyle{bag2} = [text width=1em, text centered, text width=1.3cm]
\tikzstyle{end} = [circle, minimum width=2pt,fill, inner sep=0pt]

\hspace{-0cm}
\begin{tikzpicture}[grow=south, sloped,edge from parent/.style={draw,-latex}]
\node[bag] {~}
      child {
        node[bag] (c1 node) {\small$\big(\rconf{c_1}{\emptyset}{\skipcmd}{\versions_0}, \strategy(\skipcmd)\big)$}
            child {
                node[bag2] (c1b node) {\small$\rconf{c_1''}{\istoredist_1'}{\interactionsdist_1'}{\versions_1'}$}    
                    % child {
                    %     node[end, label=right:
                    %         {$P(W_1\cap W_2)=\frac{4}{7}\cdot\frac{4}{9}$}] {}
                    %     edge from parent
                    %     node[above] {$W$}
                    % }
                    % child {
                    %     node[end, label=right:
                    %         {$P(W_1\cap B_2)=\frac{4}{7}\cdot\frac{5}{9}$}] {}
                    %     edge from parent
                    %     node[above] {$B$}
                    % }
                    edge from parent 
                    node[above=-2pt] {$\Lowstep{p_1'}$}
            }
            child {
                node[bag2] (c1a node) {\small$\rconf{c_1'}{\istoredist_1}{\interactionsdist_1}{\versions_1}$}     
                  % child {
                  %       node[end, label=right:
                  %           {$P(B_1\cap W_2)=\frac{3}{7}\cdot\frac{3}{9}$}] {}
                  %       edge from parent
                  %       node[above] {$B$}
                  %   }
                  %   child {
                  %       node[end, label=right:
                  %           {$P(B_1\cap B_2)=\frac{3}{7}\cdot\frac{6}{9}$}] {}
                  %       edge from parent
                  %       node[above] {$W$}
                  %   }
                edge from parent         
                node[above=-2pt] {$\Lowstep{p_1}$}
            }
        edge from parent[draw=none]
      }
      child {
      node[bag] (c2 node) {\small$\big(\rconf{c_2}{ \emptyset }{ \skipcmd }{\versions_0}, \strategy(\skipcmd)\big)$}
          child {
              node[bag2] (c2b node) {\small$\rconf{c_2''}{\istoredist_2'}{\interactionsdist_2'}{\versions_2'}$}        
                  % child {
                  %     node[end, label=right:
                  %         {$P(W_1\cap W_2)=\frac{4}{7}\cdot\frac{4}{9}$}] {}
                  %     edge from parent
                  %     node[above] {$W$}
                  % }
                  % child {
                  %     node[end, label=right:
                  %         {$P(W_1\cap B_2)=\frac{4}{7}\cdot\frac{5}{9}$}] {}
                  %     edge from parent
                  %     node[above] {$B$}
                  % }
                  edge from parent 
                  node[above=-2pt, rotate=0] {$\Lowstep{p_2'}$}
          }
          child {
              node[bag2] (c2a node) {\small$\rconf{c_2'}{\istoredist_2}{\interactionsdist_2}{\versions_2}$}        
              % child {
              %         node[end, label=right:
              %             {$P(B_1\cap W_2)=\frac{3}{7}\cdot\frac{3}{9}$}] {}
              %         edge from parent
              %         node[above] {$B$}
              %     }
              %     child {
              %         node[end, label=right:
              %             {$P(B_1\cap B_2)=\frac{3}{7}\cdot\frac{6}{9}$}] {}
              %         edge from parent
              %         node[above] {$W$}
              %     }
              edge from parent         
              node[above=-2pt] {$\Lowstep{p_2}$}
          }
        edge from parent[draw=none]
      }; 
\draw[bend right,dashed] 
  (c2 node) to node[below=3pt, rotate=-0] { \small$\lowEquivC$ } (c1 node);
\draw[bend left=60,dashed] 
  (c2b node) to node[above=1pt, rotate=-0] { \small$\lowEquivC$ }  (c1b node);
\draw[bend left=60,dashed] 
  (c2a node) to node[above=1pt, rotate=-0] { \small$\lowEquivC$ }  (c1a node);

\node[draw,fit=(c1b node) (c1a node) ,inner sep=2mm,rounded corners=1mm] (d1) {};
\node[below=8mm of d1] {\small${\multistepnm_{\keystore}(c_1, \strategy, 1)}$}; 
\node[draw,fit=(c2b node) (c2a node) ,inner sep=2mm,rounded corners=1mm] (d2) {};
\node[below=8mm of d2] {\small${\multistepnm_{\keystore}(c_2, \strategy, 1)}$}; %{ $\{\interactions ~|~ \interactionsdist_2 \drawnFrom \multistepnm; ~ \interactions \drawnFrom \interactionsdist_2 \} $ };
\end{tikzpicture}

\caption{Low equivalence is preserved in $\multistepnm_{\keystore}$ for two low
equivalent configurations $c_1$ and $c_2$ and a strategy $\strategy$.} \label{fig:lowequiv-proof}
\end{figure}

\subsection{Leveraged Forgery}
\label{subsec:forgery}

Whereas in the previous subsection we considered the security of encryptions, 
in this case we consider the security of the signatures.
We show that an adversary cannot leverage a \clio computation to 
illegitimately produce a signed value. 

A digital signature scheme is secure if it is difficult
to forge signatures of messages. 
%
% \todo{describe that there is a difference between kinds of attacks (e.g., key-only, see some messages, get to choose message), and levels of success (e.g., forge 1 new signature, or forge any message)}
%
\clio requires its digital signature scheme to be secure against \emph{existential forgery} under a \emph{chosen-message attack}, where the adversary is a non-uniform ppt in the size of the key. Often stated informally
in the literature~\cite{goldwasser-bellare}, 
a digital signature scheme is secure against 
\emph{existential forgery} if no adversary can succeed in forging the signature of one message, not necessarily of his choice. Further, the scheme is secure under a \emph{chosen-message attack} if the adversary is allowed to ask the signer to sign a number of messages of the adversary’s choice. The choice of these messages may depend on previously obtained signatures.

%\todo{so, similarly we must define an analagous notion of forgery}
Parallel to CPA and CTA,
we adapt the definition of existential forgery for \clio, which
we call \emph{leveraged forgery}.
Intuitively, it should not be the case that a high integrity signature can be produced for a value when it is influenced by low integrity information.
We capture this intuition in the following theorem:

\begin{theorem}[Leveraged Forgery]
For a principal $p$ and all keystores $\keystore_0$, non-uniform ppts $\adversary$, and labels $l_1$, integers $j, j'$, where $\lowlabel = \authorityOf{\keystore_0}$ and $\canFlowToI{\projI{l_1}}{p}$, if $\cryptosys$ is secure against existential forgery under chosen-message attacks, then
{\em
$$
\figForgeryProb
$$} 
\end{theorem}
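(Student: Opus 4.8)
The goal is to show that the displayed probability is negligible in $n$, and the plan is to prove this by reduction to the assumed security of $\cryptosys$ against existential forgery under chosen-message attacks. Concretely, I would construct a non-uniform ppt forger $\mathcal{F}$ that, given the challenge public key for $p$ and access to a $\Sign$-oracle for $p$'s signing key, wins the forgery game whenever the ``bad event'' in the statement occurs; since the latter probability is negligible by assumption, so is the probability of the bad event.

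The forger $\mathcal{F}$ proceeds as follows. With $\keystore_0$ the fixed parameter of the statement, and the single keypair of $\keystore'$ for $p$ supplied by the forgery challenger (so $\mathcal{F}$ knows $p$'s public key but not its private key), $\mathcal{F}$ runs the experiment of the statement essentially verbatim: it runs $\adversary(\pubkeys{\keystore})$ to obtain $t,\strategy,\adversary_2$, simulates $\multistep{\keystore}{\conf{\Start{\keystore}}{\Clr{\keystore}}{t}}{\strategy}{j}$ and draws $\interactions$, runs $\adversary_2(\interactions)$ to get $t',\strategy'$, simulates $\multistep{\keystore}{\conf{\Start{\keystore_0}}{\Clr{\keystore}}{t'}}{\strategy'}{j}$ and draws $\interactionsprime$ --- except that every time the \clio runtime would apply $p$'s private signing key to a message $m$ (when serializing a labeled value whose integrity formula has a category involving $p$, or when producing the authenticity signature of a freshly created category key for such a category), $\mathcal{F}$ instead queries its signing oracle and records $m$ in a set $Q$. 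If, at the end, some valid $\Labeled{l_1}{b}\in\mathrm{Values}_{\keystore}(\interactionsprime)\setminus\mathrm{Values}_{\keystore}(\interactions)$ exists, then since $\canFlowToI{\projI{l_1}}{p}$ forces $p$ to occur as a singleton category of $\projI{l_1}$, the verifying integrity signature inside $\Labeled{l_1}{b}$ contains a component $\sigma_p=\Sign_{sk_p}(m)$ on the signed message $m$ of that value, and $\mathcal{F}$ outputs $(m,\sigma_p)$.

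Two lemmas make the reduction work. The first is a \emph{label-discipline} lemma: during the second run, neither the execution of $t'$ nor the adversary interactions $\strategy'$ ever cause $p$'s signing key to be used. Since $p\notin\dom{\keystore_0}$, the integrity component of $\Start{\keystore_0}$, namely $\Max{\keystore_0}$, does not imply $p$; and the current label only floats up, so at all times $\projI{\lcur}$ is implied by $\Max{\keystore_0}$ and hence never implies $p$ (the integrity component only weakens under \textsc{unlabel}, and \textsc{reset} merely restores an earlier value). Both \textsc{label} and \textsc{store} require $\canFlowTo{\lcur}{l_1}$, i.e.\ $\projI{\lcur}\implies\projI{l_1}\implies p$, which is therefore impossible --- so $t'$ can neither build nor store an $l_1$-labeled value, and more generally the runtime is never asked to serialize, on $t'$'s behalf, any value whose integrity categories require $p$. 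The adversary interactions $\strategy'$ merely inject adversary-chosen bitstrings, and a planted category key is accepted by the runtime only if it carries a valid authenticity signature; for a category all of whose members require $p$ (e.g.\ the singleton $\{p\}$) such a signature is itself a forgery on a message $\mathcal{F}$ never queried, so $\mathcal{F}$ may halt and output it. Consequently every use of $p$'s signing key in the experiment happens during the first run and is captured by an oracle query, so $\mathcal{F}$'s simulation is perfect and its view has exactly the distribution of the theorem's experiment. The second is a \emph{freshness} lemma: the extracted message $m$ is not in $Q$. Because the signed message binds the value's label to its payload $(\gv,\gv_k,n)$, a $p$-signature on $m$ can only have been produced by serializing exactly $\Labeled{l_1}{(\gv,\gv_k,n)}$; so if $m\in Q$, that serialization occurred in the first run and placed into $\interactions$ some $\Labeled{l_1}{b'}$ deserializing to the same underlying value as $\Labeled{l_1}{b}$, whence $\Labeled{l_1}{b}\in\mathrm{Values}_{\keystore}(\interactions)$, contradicting freshness. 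Therefore $(m,\sigma_p)$ is a genuine forgery, so $\Pr[\text{bad event}]\le\Pr[\mathcal{F}\text{ forges}]$, which is negligible in $n$.

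The step I expect to be hardest is the label-discipline lemma: one must pin down precisely which cryptographic keys the \clio runtime touches during serialization and (lazy) category-key creation, and argue that the floating-label/clearance mechanics together with the side conditions of \textsc{store}, \textsc{label}, and \textsc{toLabeled} genuinely prevent the second run --- which is parameterized by the \emph{full} $\keystore$ for cryptographic purposes but started at only $\keystore_0$'s authority --- from ever exercising $p$'s signing key, with no loophole through the category-key sub-protocol. A related subtlety is that $p$'s private key is also used for \emph{decryption} of encrypted category keys during the first run; making $\mathcal{F}$'s simulation perfect there needs either that $\cryptosys$ use independent key material for $\Enc/\Dec$ and $\Sign/\Verify$, or an additional argument that the first run never has to decrypt an adversary-supplied ciphertext under $p$'s key (the single-member case again reducing to unforgeability; the mixed case, where $p$ shares a category with a $\keystore_0$-principal whose key the adversary holds, is the delicate one, and may require the runtime's own lazy creation to pre-empt adversary interference). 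Finally, one should fix the reading of $\mathrm{Values}_{\keystore}(\cdot)$ as collecting \emph{deserialized} labeled values (not raw ciphertexts), so that re-encrypting a value one legitimately read does not spuriously trigger the bad event.
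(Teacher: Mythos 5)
Your proposal is correct and follows essentially the same route as the paper: the paper's proof consists of a ``Starting Label is a Floor'' lemma (your label-discipline lemma) showing that in the run started from $\Start{\keystore_0}$ the current label's integrity never becomes trustworthy enough to pass the \rulename{Store} check $\canFlowTo{\lcur}{l_1}$ for any $l_1$ with $\canFlowToI{\projI{l_1}}{p}$, and then concludes that any fresh $p$-signed value in $\interactionsprime$ must be an outright forgery by the strategy, reducing to existential unforgeability. Your write-up is in fact more careful than the paper's, which never explicitly constructs the forger, does not simulate the first run's $p$-signatures via a signing oracle, and omits both your freshness argument and the caveats you raise about key separation and about $\mathrm{Values}_{\keystore}$ comparing raw ciphertexts rather than deserialized values.
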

Intuitively, the game is structured as follows.
%\begin{enumerate}
%\item \vspace{-0.7em} {\bf Adversary chooses messages to sign:} 
First, an adversary chooses a term $t$ and strategy $\strategy$ that will be run with high integrity (i.e., $\Start{\keystore}$ where $\keystore$ has $p$'s authority). The adversary sees the interactions $\interactions$ produced by the high integrity computation (which in general will include high integrity signatures). 
%
%\item \vspace{-0.7em} {\bf Challenge phase:} 

With that information, the adversary constructs a new term $t'$ and new strategy $\strategy'$ that will be run with low integrity (i.e., $\Start{\keystore_0}$). Note that the strategy may internally encode high integrity signatures learned from the high integrity run that it can place in the store.
%
%\item \vspace{-0.7em} {\bf Existential forgery:} 

The interactions produced by this low integrity computation should not contain any high integrity signatures (i.e., are signed by $p$). The adversary succeeds if it produces
a new valid labeled bitstring $\Labeled{l_1}{b}$ 
that did not exist in the first run. 
In the experiment, the $\mathrm{Values}_{\keystore}$ metafunction
extracts the set of valid labeled bitstrings (i.e., can be deserialized
correctly) using the parameterized keystore $\keystore$ to perform
the category key decryptions.
%\end{enumerate}

\bigskip

The proof of this theorem is in two parts.
First we show that the label of a value being stored
by a computation is no more trustworthy than the current label of
computation. Second, we show that the current label never
becomes more trustworthy than the starting label.
This means that a low integrity execution (i.e., starting from
$\conf{\Start{\keystore_0}}{\Clr{\keystore}}{t}$) cannot
produce a high integrity value (i.e., a labeled value $\Labeled{l}{b}$
such that $\projI{l} \flows^I p$).

% The proof of this theorem involves showing that the current label
% of a \clio computation cannot become more trustworthy beyond
% the starting label. Additionally, it may also be the case that a
% computation fetches a high integrity value from the previous run
% and immediately re-store it (without unlabeling it) in attempts
% to create a new labeled bitstring.
% However it would not be able to store that value back because,
% rule \rulename{Store} requires the current label to
%  flow to the label it is storing. In other words, the current label 
% of computation must be at least as trustworthy as the label
% of the labeled value it is storing.

\ifsoundness
\subsection{Soundness}

A trivially secure but intuitively incorrect version of the crypto store semantics is to simply not
put anything in the store and just always return ``missing'' for fetch operations.
Indeed, cryptosystems have similar independence between security and correctness.
For example, a cryptosystem $(\Enc, \Dec)$ is only correct
if $\forall m.~\Dec(\Enc(m)) = m$. So we need to define an analogous
notion of ``correctness.''

\bigskip

We define correctness in terms of the ideal semantics. That is, the ideal semantics is the ``spec'' of
what correct behavior should be. To do this, we must relate real interactions to ideal interactions.
We do that with the following translation function.

We describe the adversary in the ideal semantics through a simiplified language of adversary interactions $\iinteraction \Coloneqq \skipcmd~|~\storecmd{\gv}{\Labeled{l}{\gv}}~|~\corruptcmd{\gv, ..., \gv}$. The skip command has no effect on the store, the store command will replace the entries, and corrupt will replace the entry's value with the distinguished ``corrupted'' value $\missing$. That is,
$\corruptcmd{\gv_1, ..., \gv_n} = \lambda \istore. ~\istore[\gv_1 \mapsto \missing\; ... \gv_n \mapsto \missing]$

Note that CLIO does not protect against availability attacks (e.g., corrupt). As a result, corrupted entries are treated as ``missing''. There are ways to combat this sort of corruption (e.g., replication, error correcting codes, etc.), which we do not model.

We can relate the real interactions $\interaction$ and ideal interactions $\iinteraction$ on a store $\istore$ using a translation function $\translate{\keystore}{\istore}{\versions}{\interaction} = \iinteraction$. 
%Further, the interaction returned satisfies the predicate $\iat{\iinteraction}{\authorityOf{\keystore}}$. 
The translation function is parameterized by $\keystore$ (the power of the adversary; i.e., what it knows) and the current state of the store it is interacting on. It produces an ideal interaction with adversarial power equal to $\authorityOf{\keystore}$. 
The definition of the translation function is given as follows:

For \skipcmd commands, the translation is straightforward: $\translate{\keystore}{\istore}{\versions}{\skipcmd} = \skipcmd$.

When a store is made, the real interaction corresponds to an ideal interaction when the bitstream can be properly deserialized and its version is not old. That is, $\translate{\keystore}{\istore}{\versions}{\storecmd{\gv_k}{\Labeled{l}{\bitstream}}} = \storecmd{\gv_k}{\Labeled{l}{\gv}}$ 
 if $\Labeled{l}{(\gv, \version)} = \deserialize{\istore}{\Labeled{l}{\bitstream}}$ and $\version \not< \versions(\gv_k)$. If the value cannot be deserialized then we treat it as a corrupted entry, i.e., $\corruptcmd{\gv_k}$. Note that it could still be the case that bitstring was actually valid (e.g., the adversary may have guessed a correct ciphertext) and as a result this translation may not be morally equivalent to the real interaction. As we will formally show later, though, these discrepancies in the translation function between a real interaction and an ideal interaction will only occur with negligible probability.

Finally, an adversary may alter the category keys. $\translate{\keystore}{\istore}{\versions}{\storecmd{\category}{\categorykey}} = \skipcmd$ when
$\fetchck{\istore}{\category} = \fetchck{\istore[\category \mapsto \categorykey]}{\category}$. Otherwise,
$\translate{\keystore}{\istore}{\versions}{\storecmd{\category}{\categorykey}} = \corruptcmd{\gv_1, ..., \gv_j}$ for all $\gv_i \in \gv_1, ..., \gv_j$ and where $\LabeledO{l_i}{b_i} = \sigma(\gv_i)$ and $l_i$ contains category $\category$.

Note that this translation is conservative with respect to the possible ideal interactions corresponding to a real interaction. A real store interaction might not corrupt a key entry if the adversary had access to the private key beyond what is in $\mathcal{P}$ or if it is able to ``replay'' a previous value it saw. However, informally (which we will formalize and make more precise in our soundness theorem), if we assume that keys are difficult to guess and CLIO prevents replay attacks, then with all but a negligible probability the store interaction will not be a valid serialization and it will corrupt the key entry.

With this specification of correct terms with respect to real interactions, we can now state our soundness theorem:

\bigskip

\begin{theorem}[Soundness]
For all ideal configurations $c_i$, real configurations $c_r$, and strategies drawn from a non-uniform ppt adversary, if $c_i ~\mathcal{R}~ c_r$ then,
{\em $$
\begin{array}{l}
\mathbf{Pr}\big[ (c_i', c_r') \not\in \mathcal{R} ~ \big| ~
	\interactions \drawnFrom \strategy(\textsf{store}(c_r)); ~
	\iinteractions = \mathcal{T}(c_r, \interactions); \\
	\hspace*{22mm} (c_i, \iinteractions) \Lowstep{} c_i'; ~
	(c_r, \interactions) \Lowstep{p} c_r'
\big]
\end{array}
$$}
\noindent
is negligible in $n$.  
$$
\begin{array}{lll}
% (c_i, \iinteractions) & \mathcal{R} & (c_r, \interactions) \\
% \multicolumn{3}{r}{\mathrm{if~} c_i ~\mathcal{R}~ c_r \mathrm{~and~} \iinteractions = \mathcal{T}(c_r, \interactions)}\\
% (c_r, \interactions) & \mathcal{R} & (c_i, \iinteractions) \\
% \multicolumn{3}{r}{\mathrm{if~} c_i ~\mathcal{R}~ c_r \mathrm{~and~} \iinteractions = \mathcal{T}(c_r, \interactions)}\\
\iconf{\conf{\lcur}{\lclr}{t}}{\istore_i} & \mathcal{R} & \rconf{\conf{\lcur}{\lclr}{t}}{\istore_r}{\interactions}{\versions} \\
\multicolumn{3}{r}{\mathrm{if~} \istore_i ~\mathcal{R}~ \istore_r}
\end{array}
$$
% \vspace{0.2cm}
% $
% \bigProbabilityOfCond{t \neq t_j}{
% 	\keystore \drawnFrom \keystore_{adv} \uplus \{ p \mapsto \Gen(1^n) ~|~ p \in \principals \}; \\
% \hspace*{0.7cm} 
% 	\rconf{\conf{\lcur}{\lclr}{t}}{\istore}{\interaction_j ~ \cdot ~ ... ~ \cdot ~ \interaction_1}{\versions} \drawnFrom \\
% \hspace*{1.4cm}
% 	\multistep{\keystore}{ \conf{\Start{\keystore}}{\Clr{\keystore}}{t_0} }{\strategy}{j}); \\
% \hspace*{0.7cm}
% 	\iconf{\conf{\Start{\keystore}}{\Clr{\keystore}}{t_0}}{\emptyset} ~ 
% 		\Lowstep{ \translate{\keystore_{adv}}{\emptyset}{\versions_0}{\interaction_1} }{}  \\ 
% \hspace*{1.4cm}
% 	... ~ \Lowstep{ \translate{\keystore_{adv}}{\emptyset}{\versions_j}{\interaction_j} }{} \iconf{\conf{\lcur'}{\lclr'}{t_j}}{\istore_j}
% } $
% \vspace{0.5cm}

%\todo{need to thread through the versions and the sequences of interactions at each step -- there's not just one interaction per step.}
\end{theorem}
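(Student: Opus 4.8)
The plan is to reduce a single low step to a bounded sequence of transitions and argue that each preserves $\mathcal{R}$ except with negligible probability, then take a union bound. First I would case split on whether $\Lowstep{}$ is derived by \rulename{Low-Step} or \rulename{Low-to-High-to-Low-Step}; in either case the step unfolds into a one-time application of the adversary's interactions $\interactions$ followed by a polynomially bounded run of single \clio steps $\rto{p}$, matched on the ideal side by applying the translation $\translate{\keystore}{\istore}{\versions}{\interactions}$ and the corresponding $\ito$ steps. The failure probability decomposes over these pieces, so it suffices to bound the contribution of each and observe that a polynomial sum of negligible functions is negligible.

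For the \clio steps I would case split on the rule applied. The \rulename{Internal-Step} case is immediate: the underlying reduction $\lto$ is identical in both semantics and leaves the store untouched, so $\mathcal{R}$ is preserved deterministically. The \rulename{Store} case follows from correctness of the cryptosystem: the bitstring produced by $\serializenm_{\keystore}$ deserializes back to $\Labeled{l_1}{(\gv, \gv_k, \version)}$ because $\Dec \circ \Enc$ and $\Verify \circ \Sign$ act as identities on well-formed inputs, and a freshly created category key, once fetched, verifies against its own signature for the same reason; hence the updated real and ideal stores stay related, the version component of $\mathcal{R}$ being maintained by the $\incrementnm$ bookkeeping. A fetch step then matches deterministically provided the store it reads from is already $\mathcal{R}$-related --- the event emitted to the \clio term (a \textsf{fetch} or a \textsf{missing}) is a function of whether the entry at $\gv_k$ deserializes, has the right index, and has a non-stale version, and that is exactly what the store-level relation tracks.

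The crux, and the main obstacle, is establishing that applying $\mathcal{T}$ to the adversary's interactions produces an ideal store related to the real store the same interactions produce --- except with negligible probability. Inspecting the definition of $\mathcal{T}$, a mismatch can arise only in one of three ways, each bounded separately. (i) The adversary replays a bitstring that \clio legitimately stored, but at a different index or with a stale version: since the index and version are bound inside the encrypted payload, the real semantics necessarily fires \rulename{Fetch-Replay} (emitting $\missingcmd{\gv_k}$), which coincides with the translation's ``corrupted'' image, so this contributes probability zero. (ii) The adversary fabricates a fresh bitstring that nonetheless deserializes to a valid signed labeled value: forging the integrity signature contradicts security against existential forgery under chosen-message attack, and producing such a payload without the signing key is negligible for the same reason, so this case contributes only a negligible term. (iii) The adversary overwrites a category key with one whose substitution changes $\fetchcknm$: the translation conservatively treats this as corrupting every entry whose label mentions $\category$, and with all but negligible probability a tampered category key fails its signature check, so the real store genuinely behaves like the corrupted ideal store. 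Summing the contributions of (ii) and (iii) over the polynomially many interaction points, and combining with the deterministic \clio-step cases, yields $\probabilityOf{(c_i', c_r') \notin \mathcal{R}}$ negligible in $n$. One subtlety I would treat with care is that $\mathcal{T}$ deserializes under the adversary's keystore $\keystore$ while the real semantics uses the full $\keystore_0 \uplus \keystore'$, so the store-level relation underlying $\mathcal{R}$ must be phrased so that a ``missing/corrupted'' ideal entry is matched by \emph{any} real contents that the real semantics treats as missing --- including successfully-deserializing-but-stale or wrong-index entries --- which is precisely what lets cases (i) and (iii) go through without incurring probability.
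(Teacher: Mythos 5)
You should know at the outset that the paper does not actually prove this theorem: the entire soundness development (the relation $\mathcal{R}$, the translation $\mathcal{T}$, and the theorem itself) sits inside an \texttt{ifsoundness} conditional that is switched off, and the proof body in the appendix is literally the single word ``todo.'' So there is no paper proof to compare against --- only the informal discussion surrounding the definition of $\mathcal{T}$, which says that a real interaction is translated conservatively to a corruption and that the discrepancies between this conservative translation and what actually happens occur with negligible probability because keys are hard to guess and replays are detected.

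Your proposal is consistent with that intended argument and fills it out in a reasonable way: the decomposition into (a) correctness of $\Dec\circ\Enc$ and $\Verify\circ\Sign$ for the \clio-generated steps, and (b) a three-way case analysis on how $\mathcal{T}$ can disagree with the real store (replay, forgery, category-key tampering), with the forgery cases discharged by existential unforgeability, is exactly the shape of proof the authors gesture at. Two caveats. First, your cases (ii) and (iii) are asserted rather than reduced: to make them count you must exhibit the reduction, i.e., show that an adversary strategy causing $\mathcal{T}$ to disagree with the real semantics yields a chosen-message forger, where the signing oracle simulates \clio's own \textsf{store} operations (the adversary gets to choose $t$, so it can drive \clio to sign messages of its choice --- this is why the chosen-message model is needed, and it deserves to be said). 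Second, you correctly identify that the paper's definition of $\mathcal{T}$ is incomplete --- a bitstring that deserializes but carries a stale version or wrong index falls between the two clauses as written --- and your fix (fold these into the ``corrupted'' image, and phrase the store-level $\mathcal{R}$ so that a missing ideal entry matches any real contents the real semantics treats as missing) is the right one; since the paper never resolves this itself, you should state that repaired definition explicitly rather than leave it as a remark. With those two points made precise, your argument would constitute the proof the paper omits.
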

\else

\fi

\section{\clio in Practice}
\label{sec:impl}
%!TEX root = paper.tex

%In this section, we describe our prototype implementation and a case study using \clio.

\subsection{Implementation}
%\todo{reviewer: Is chunking secure? how does it work w/ versioning? can IVs be reused for values written by adversary?}
%\todo{reviewer: confused by encrypt-then-sign here}
%\todo{reviewer: are access to keys accounted for in high level semantics?}

% \todo{Goals of the subsection: Show that we implemented a real-world
%   CLIO prototype using a real key-value store and real crypto. In
%   particular, we would cover the following points: (a) briefly
%   describe the architecture of the system, i.e. it works as a library
%   written in Haskell based on a simplified LIO API, uses an external
%   Redis database, uses a third-party Haskell crypto library; (b)
%   describe and enumerate the crypto schemes involved (lots of acronyms
%   and key sizes); (c) design decisions: use of base64 encoding, hybrid
%   scheme using a combination of symmetric and asymmetric crypto to
%   minimise the overhead, getting around the message size limitations
%   of asymmetric crypto by chunking, talk about keygen and source of
%   randomness; (d) short user guide: describe how to initialise and use
%   CLIO in your Haskell programs.}

We implemented a \clio prototype as a Haskell library, in the same
style as LIO. Building on the LIO code base, the \clio library has
 an API for defining and running \clio programs embedded in
Haskell. The library also implements a monitor that oversees the
execution of the program and orchestrates three interdependent tasks:

\begin{CompactItemize}
\item \textbf{Information-flow control} \clio executes the usual LIO IFC
  enforcement mechanism; in particular, it adjusts the current label
  and clearance and checks that information flows according
  to the DC labels lattice.
\item\textbf{External key-value store} \clio handles all interactions with the
  store, realized as an external Redis~\cite{redis} database. This is
  accomplished by using the hedis~\cite{hedis} Haskell library, which
  implements a Redis client.
  \todo{reviewer B: I’m having a hard time imagining how a more realistic system would be used in practice. Do you imagine the Redis is simply outsourced to a generic remote service? What about practical details like distributing the public keys initially, or adding new participants?}
\item \textbf{Cryptography} \clio takes care of managing and handling
  cryptographic keys as well as invoking cryptographic operations to
  protect the security of the principals' data as it crosses the
  system boundary into/back from the untrusted store. Instead of
  implementing our own cryptographic primitives, we leverage the
  third-party cryptonite~\cite{cryptonite} library.
\end{CompactItemize}

\clio uses standard cryptographic schemes to protect the information in
the store. In particular, for efficiency reasons we use a hybrid
scheme that combines asymmetric cryptography with symmetric
encryption. The category keys in the store are encrypted and signed
with asymmetric schemes, while the entries stored by \clio programs are
encrypted with symmetric encryption and signed with an asymmetric
signature scheme.

\noindent\textbf{Asymmetric cryptography}
We use cryptonite's implementation
of RSA, specifically OAEP mode for encryption/decryption and PSS for
signing/verification, both with 1024-bit keys and using SHA256 as a
hash. We get around the message size limitation by chunking the
plaintext and encrypting the chunks separately.

\noindent\textbf{Symmetric encryption}
We use cryptonite's implementation of
AES, specifically AES256 in Counter (CTR) mode for symmetric
encryption. We use randomized initialization vectors %(IVs)
for each
encryption. %The library
We can use AESNI if the
architecture supports it.

Storing and retrieving category keys and labeled values are
implemented as discussed in Section~\ref{subsec:keystore}. Appendix~\ref{app:impl} has more
details. 

\todo{P:Maybe move this somewhere else?}
\noindent\textbf{Performance}
LIO-style enforcement mechanisms have performed adequately in
practice, c.f. Hails~\cite{giffin:2012:hails}. We do not expect
combining this with off-the-shelf crypto to introduce more than a
constant time overhead for fetching and writing into the store. The
only additional concern is the overhead of the category key management
protocol, which is proportional to the number of distinct categories
and their size. Based on the experience obtained by Jif~\cite{jif},
Fabric~\cite{Liu:2009:FPS:1629575.1629606}, and Hails, categories are
usually small in number and size. Furthermore, creating category keys
incurs a one-time cost which can be amortized over multiple runs and
programs

% \noindent\textbf{Randomness}
% We use cryptonite's key generation functions
% and its random-number generator for initialization vectors. We have not
% evaluated how good these functions are as sourcse of randomness,
% but note that cryptonite can use RDRAND if the
% CPU supports it.

\subsection{Case Study} 
%\todo{reviewer A: example misses point: IRS can
%read whole interaction, should only see tax return. PII information
%may also be secret from preparer. (it seems like this example may have
%triggered some expectations we weren't aware of...)}
%\todo{reviewer B:
%The case study does not say much about the resulting protocol (with
%signatures and encryption, etc.). Could you describe the protocol that
%results? This should serve as form of validation of your approach,
%i.e., the protocol should resemble what a security engineer might
%design.
%
%security model does not seem to account for the corruption of parties (I guess we should head this misconception off early somehow?)
% }
% DONE

We have implemented a simple case study to illustrate how our
prototype \clio implementation can be used to build an application. In
this case, we have built a system that models a tax preparation tool
and its interactions with a customer (the taxpayer) and the tax
reporting agency, communicating via a shared untrusted store. We model
these three components as principals $C$ (the customer), $P$ (the
preparer) and $\IRS$ (the tax reporting agency). The actions of each
of these three principals are modeled as separate \clio computations
\texttt{customerCode}, \texttt{preparerCode} and \texttt{irsCode},
respectively. We assume that the store level $\lowlabel$ restricts writes to the store in confidential contexts,
i.e. $\ell = \ltrip{\bot}{\top}{S}$, where $S$ is the principal
running as the store.
In this scenario, we consider that the principals involved ($C$, $P$
and $\IRS$) trust each other and are trying to protect their data from
all other principals in the system (i.e., from $S$). 

The customer $C$ initially makes a record with his/her personal
information, including his/her name, social security number (SSN),
declared income and bank account details, modeled as the type
\texttt{TaxpayerInfo}.  Figure~\ref{fig:customer-preparer-irs} shows the customer
code on the left, modeled as a function that takes this record as an argument,
\texttt{tpi}. The first step is to label \texttt{tpi} with the label
$\ltrip{C \vee P \vee \IRS}{C}{S}$.
The confidentiality component is a disjunction of all the principals
involved in the interaction, reflecting the fact that the customer
trusts both the preparer and the IRS with their the data and expects
them to be able to read it.
A more realistic example would also keep the customer's personal data
confidential (i.e. not readable by the IRS and to some extent by the
preparer). However, expressing those flows would require an IFC system
with declassification, a feature that we have not included in the
current version of \clio since it would introduce additional
complexity in our model, and semantic security conditions for such
systems are still an active area of research~\cite{AskarovS07,Broberg:2010,AskarovC12}.  Without
declassification, if $\IRS$ was not in the label initially, the IFC
mechanism would not allow us to release this data (or anything derived
from it) to the IRS at a later time.
The integrity component of this label is just $C$ since this
data can be vouched for only by the customer at this point, while the
availability is trusted since these values haven't been exposed to (and potentially
corrupted by) the
adversary in the store yet. The final step of the customer is to store
their labeled \texttt{TaxpayerInfo} at key \verb|"taxpayer_info"| for
the preparer to see. Note that in practice this operation creates a
category key for $C\vee P\vee \IRS$, stores it in the database and
uses it to encrypt the data, which gets signed by $C$.

\begin{figure*}[t]
\begin{minipage}[t]{0.32\textwidth}
\vspace{0pt}
\footnotesize
$
\begin{array}{l}
\texttt{customerCode} :: \texttt{TaxpayerInfo} \to \CLIOT{()}  \\
\texttt{customerCode}\;\;\; \mathit{tpi} = \mathbf{do}  \\
\begin{array}{ll}
 \quad &\mathit{info} \leftarrow \llabel{\ltrip{C\vee P\vee \IRS}{C}{S}}{\mathit{tpi}} \\
 & \store{\texttt{"taxpayer\_info"}}{\mathit{info}} \\
 & \return{()}
\end{array}
\end{array}
$
\end{minipage} \hfill
\begin{minipage}[t]{0.34\textwidth}
\vspace{0pt}
\footnotesize
$
\begin{array}{l}
\texttt{preparerCode} :: \CLIOT{()}  \\
\texttt{preparerCode} = \mathbf{do}  \\
\begin{array}{ll}
\quad & \mathit{default} \leftarrow
        \llabel{\ltrip{P\vee\IRS}{P\vee
        C}{S}}{\mathit{notFound}} \\
  & \mathit{info} \leftarrow
    \fetch{\texttt{"taxpayer\_info"}}{\mathit{default}} \\
  & r \leftarrow \toLabeled{\ltrip{P\vee\IRS}{P\vee
        C}{S}} \$\;\; \mathbf{do} \\
& \begin{array}{ll}
 & i \leftarrow \unlabel{\mathit{info}} \\
  & \return{(\texttt{prepareTaxes}\;\; i)}
\end{array} \\
 &  \store{\texttt{"tax\_return"}}{r}
\end{array}
\end{array}
$
\end{minipage} \hfill
\begin{minipage}[t]{0.32\textwidth}
\vspace{0pt}
\footnotesize
$
\begin{array}{l}
\texttt{irsCode} :: \CLIOT{\BoolT}  \\
\texttt{irsCode} = \mathbf{do}  \\
\begin{array}{ll}
\quad & \textbf{let}\;\; \mathit{l} = \ltrip{\IRS}{P\vee C\vee
        \IRS}{S} \\
 & \mathit{default}\leftarrow
        \llabel{l}{\mathit{emptyTR}} \\
 & \mathit{lv} \leftarrow
   \fetch{\texttt{"tax\_return"}}{\mathit{default}} \\
 & \mathit{tr}  \leftarrow \unlabel{\mathit{lv}} \\
 & \return{(\texttt{verifyReturn}\;\; \mathit{tr})}
\end{array}
\end{array}
$
\end{minipage}
  \caption{Customer code (left), Preparer code (middle), and IRS code (right)}
  \label{fig:customer-preparer-irs}
\end{figure*}

% \begin{figure}[t]

%   \caption{Preparer code}
%   \label{fig:preparer}
% \end{figure}

% \begin{figure}[t]
% \scriptsize
% \[
% \begin{array}{l}
% \texttt{irsCode} :: \CLIOT{\BoolT}  \\
% \texttt{irsCode} = \mathbf{do}  \\
% \begin{array}{ll}
% \quad & \mathit{defaultReturn}\leftarrow
%         \llabel{\ltrip{\IRS}{\top}{\top}} \mathit{emptyTR} \\
%  & \mathit{lv} \leftarrow
%    \fetch{\texttt{"tax\_return"}}{\mathit{defaultReturn}} \\
%  & \text{TR}\;\;\mathit{tpi}\;\;(\mathit{tax},\mathit{balance})
%    \leftarrow \unlabel{\mathit{lv}} \\
%  & \return{(\texttt{verifyReturn}\;\; \mathit{tpi}\;\;
%    \mathit{tax}\;\; \mathit{balance})}
% \end{array}
% \end{array}
% \]
%   \caption{IRS code}
%   \label{fig:irs}
% \end{figure}

The next step is to run the preparer code, shown in the middle of
Figure~\ref{fig:customer-preparer-irs}. The preparer starts by fetching the
taxpayer data at key \verb|"taxpayer_info"|, using a default empty
record labeled with $l_1 = \ltrip{P\vee\IRS}{P\vee C}{S}$. The
entry in the database is labeled differently with
$l_2 =\ltrip{C\vee P\vee \IRS}{C}{S}$, but the operation succeeds
because $l_2 \sqsubseteq l_1$ and the availability
in $l_2$ is $S$, i.e., it reflects the fact that the adversary $S$ might have
corrupted this data. 
%\todo{reviewer B: Why does it reflect the fact that the adversary might have corrupted data?}
The code then starts a $\texttt{toLabeled}$
sub-computation to securely manipulate the labeled taxpayer record
without raising its current label. In the subcomputation, we unlabel
this labeled record and use function \texttt{prepareTaxes} to prepare
the tax return. Since we are only concerned with the information-flow
aspects of the example, we elide the details of how this function
works; our code includes a naive implementation but it would be
straightforward to extend it to implement a real-world tax preparation
operation. The $\text{toLabeled}$ block wraps the result in a labeled
value $r$ with label $l_1$, the argument to
$\text{toLabeled}$. Finally, the preparer stores the labeled tax
return $r$ at key \verb|"tax_return"|. Note that this operation would
fail if we had not used $\text{toLabeled}$, since in that case the
current label, raised by the $\text{unlabel}$ operation, would not
flow to $\ell$, the label of the adversary.

Figure~\ref{fig:customer-preparer-irs} shows the tax agency code
on the right. This code fetches the
tax return made by the preparer and stored at key
\verb|"tax_return"|. Analogously to the preparer code, we use the
default value of the fetch operation to specify the target label of
the result, namely $\ltrip{\IRS}{P\vee C\vee \IRS}{S}$, which in this case is
once again more restrictive than what is stored in the
database. Thereafter the labeled tax return gets unlabeled and the
information is audited in function \texttt{verifyReturn}, which
returns a boolean that represents whether the declaration is
correct. In a more realistic application, this auditing would be
performed inside a $\text{toLabeled}$ block too, but since we are not
doing any further $\text{store}$ operations we let the current label
get raised for simplicity.

These three pieces of code are put together in the main function of
the program, which we elide for brevity. This function simply
generates suitable keystores for the principals involved (using the
\clio library function \texttt{initializeKeyMapIO}) and then runs the
code for each principal using the \texttt{evalCLIO} function.
% As the
% \clio computations run, it's possible to play the role of the adversary
% and interact with the database to corrupt or attempt to read the
% results.
% \todo{reviewer B: What are some interesting scenarios you might play out?}

\section{Related Work}
\label{sec:related}
%!TEX root = paper.tex

\todo{reviewer: expected more technical comparison to prior systems providing crypto soundness for lang-based IFC}
\todo{reviewer: proof technique seems similar to CoSP by Backes, Hofheinz, and Unruh}
\emph{Language-based approaches.}  Combining cryptography and IFC languages is
not new.
The Decentralized Label Model (DLM)~\cite{conf/sp/MyersL98} has been
extended with cryptographic
operations \cite{VaughanZS2007,SmithAR2006,ChothiaDV03,GazeauCSF17}. These
extensions, however, either use only symbolic models of cryptography
or provide no security properties for their system.

%% Vaughan and Zdancewic \cite{VaughanZS2007} (and later Smith and Alp\'{\i}zar
%% \cite{SmithAR2006}) consider a version of the \emph{decentralized label model}
%% (DLM) with a symbolic treatment of cryptographic operations, assuming Dolev-Yao
%% style attackers.
%% %
%% %% While such an attacker model does not cover all real-world attacks, the authors
%% %% establish a correspondence between labels and cryptographic keys.
%% %
%% Chothia et al.~\cite{ChothiaDV03} provide language-level cryptographic
%% primitives as an extension to the DLM, called the keyed-DLM, but do not provide
%% any security properties for their extended system.
%% %
%% % Different from \clio, their language has no cryptographic primitives and instead
%% % all labeled values are immediately encrypted and implicitly decrypted when
%% % unlabeled.
%% %Different from \clio, t
%% %
%% Their language has two primitives that encrypt (and implicitly declassify) and
%% decrypt values, whereas in \clio values are implicitly encrypted and decrypted
%% when interacting with the store.
%% %
%% % \clio also establishes a correspondence for cryptographic keys and labels but
%% % for a stronger notion of attackers, i.e., probabilistic
%% % polynomial-time attackers.
%% %  bound by polynomial-time
%% % algorithms.
%% %

Models for secure declassification are an active area of research in the IFC
community (e.g.,~\cite{chong2006,myers2006,Askarov:2010,WayeBKCR2015}).
It is less clear, though, how such models compose with cryptographic attacker
models. Exploring the interactions between declassification and
cryptography is very interesting, but a rigorous treatment of it is beyond the
scope of this work.

%Cryptographically-masked flows, by
Cryptographically-masked flows \cite{AskarovHS06}
%:
%% Askarov et al. \cite{AskarovHS06} introduce a possibilistic noninterference
%% semantic condition that
account for
covert infor\-ma\-tion-flow channels due to the cryptosystem (e.g., an observer may
distinguish different ciphertexts for the same message).
%% i.e.,
%% to treat cryptographic
%% primitives since encryption may generate many different ciphertexts for a given
%% message.
%
%In its formal guarantees,
However, this approach ignores the probability distributions for ciphertexts, which might
compromise security in some scenarios \cite{McLean1990}.
%
%In later work,
Laud~\cite{Laud2008} establishes conditions under which secure
programs with cryptographically-masked flows %are computationally secure, i.e.,
%they
satisfy \emph{computational noninterference}~\cite{Laud:2001}.
%(a notion introduced by the same author in \cite{Laud:2001}).
%
Fournet and Rezk \cite{FournetRT2008} describe a language that directly embeds
cryptographic primitives and provide a language-based model of correctness,
where cryptographic games are encoded in the language itself so that security
can range from symbolic correctness to computational noninterference.
% Chothia et al.~\cite{ChothiaDV03} \todo{sentence for them. }  Different from
% all the cited work above, this paper presents a novel interpretation of
% cryptographic keys in an IFC setting by establishing a relation between \clio
% key maps and the information used and consumed by \lioS
% computations. Moreover, one of the contributions of our work is that the
% system and proof are structured so that we provide a computational
% noninterference result yet still get to reason about the language
% possibilistically (i.e., without having probability distributions over program
% configurations).\todo{Previous sentence is not exactly clear. Should clarify
% it once we have figured out how to properly phrase the contribution.}

%% and shows the translation between the languages \todo{unclear. Which
%% languages? And we don't have a full abstraction result, do we?} is fully
%% abstract.
%\todo{sc: if we have space it would be good to go into more detail
%about how we differ/are better from the previous work.}

Information-flow availability has not been extensively studied.
%% examined as an information flow
%% property.
%
Li et al.~\cite{LiMZ2003} discuss the relationship between availability and
integrity %as information flow properties,
and state a (termination- and
progress-insensitive) noninterference property for availability.
Zheng and Myers \cite{zm05} extend the DLM with availability policies, which
express which principals are trusted to make data available.
%
%% They present a semantics that is strict with respect to unavailable values, and
%% prove a noninterference guarantee for availability.
%
In their setting, availability is, in essence, the integrity of
progress~\cite{Askarov:2008}: low-integrity
inputs should not affect the availability of high-availability outputs.
In our work, availability tracks the successful verification of signatures and
decryption of ciphertexts, and has analogies with Zheng and Myers' approach.

The problem of conducting proofs of trace-based properties of
languages with access to cryptographic operations in a computational
setting has been studied before. %Backes et al.~
CoSP~\cite{Backes:2009} is
 a framework for writing computational soundness proofs
of symbolic models. Their approach abstracts details such as
message scheduling and corruption models and allows for proofs of
preservation of trace properties to be developed in a modular fashion. \todo{What do we have to say about the comparison?}

\emph{Cryptographic approaches.} There is much work on how to
map principals and access policies to cryptographic keys.
Attribute-Based Encryption~\cite{abe} could be used to protect the confidentiality
of data for categories and would avoid the need for category keys
when encrypting and decrypting.
Ring signatures \cite{rivest2006leak} could be used to protect the integrity
of data for categories and would similarly avoid the need for category
keys when signing and verifying. We take the approach
of using simpler cryptographic primitives as they are more
amenable to our proofs. Additionally, as a benefit of taking a
language-based approach, \clio's ideal semantics is agnostic
to the choice of cryptosystem used.
From a user's perspective
the underlying cryptographic operations could be swapped out in favor of more
efficient cryptosystems without changing the semantics of the system
(provided the real semantics was shown separately
to provide CTA security and security against leveraged forgery).

There is also work on strengthening the guarantees of existing cryptosystems
to protect against more powerful adversaries, e.g.,
Chosen Ciphertext Attack~(CCA) \cite{pass-shelat} security for adversaries that can
observe decryptions of arbitrary ciphertexts. CCA security is needed in systems where an adversary can
observe (some of) the effects of decrypting arbitrary ciphertexts.
%% Many real
%% systems need this increased (yet often conservative)
%% level of security for their cryptosystems compared to CPA secure
%% cryptosystems as there are indirect ways for an adversary
%% to observe (some of) the effects of decrypting arbitrary ciphertexts in the system.
In contrast,
\clio's security guarantees are based on a very precise definition of the
adversary's power over the system. In particular it captures
that an adversary cannot observe anything about the decryptions
of confidential values due to IFC mechanisms, since the results of
such a decryption would be
protected by a label that is more confidential than an adversary would have
access to.
As a result, \clio requires only a CPA secure cryptosystem to be CTA secure.

\emph{Systems.}
%% Only a few existing IFC tools use cryptography to protect
%% confidentiality and integrity of data, including DStar \cite{Zeldovich:2008} and
%% Fabric \cite{LiuGVQWM2009} (and its predecessor
%% Jif/Split~\cite{zznm01,zcmz03}). % being the exceptions.
%
DStar \cite{Zeldovich:2008} extends decentralized IFC in a distributed system.
Every DStar node has an \emph{exporter} that is responsible for communicating
over the network.
Exporters also establish the security categories trusted by a node via
private/public keys.
Fabric \cite{LiuGVQWM2009} is a platform and statically-checked fine-grained IFC language.  Fabric
supports the secure transfer of data as well as code~\cite{ArdenGLVAM2012}
through, in part, the use of cryptographic mechanisms.
In contrast to Fabric, \clio provides coarse-grained IFC and uses DC labels
instead of the DLM.
In contrast to both DStar and Fabric, this work establishes a formal basis for
security of the use of cryptography in the system.
The lack of a formal proof in both DStar and Fabric is not surprising, given
that they target more ambitious and complex scenarios %than \clio
(i.e.,
decentralized information-flow control for distributed systems).
%% Neither of these systems and also shows a show a fully-abstract translation
%% from \lioS (a language that shields the programmer from key management and
%% cryptography) to \clio.  %Different for our work, DStar and Fabric lack
%% formal security guarantees.  %---a not %surprising fact given the complexity
%% of such systems.  % This aspect is not surprising for DStar and Fabric since
%% they target more ambitious and complex scenarios than \clio (i.e.,
%% decentralized information-flow control for distributed systems).
%

\emph{Remote storage.}
While data can be stored and fetched cryptographically, information
 can be still leaked through \emph{access patterns}.
Private Information Retrieval protocols aim to avoid such leaks by hiding
queries and answers from a potentially malicious server \cite{Chor:1995:PIR}
similar to \clio's threat model.
%
% Unfortunately, non-trivial PIR protocols are time-consuming and sometimes it is
% just faster to fetch the entire database and launch the query locally
% \cite{Sion07onthe,OlumofinG11}.
%
For performance reasons~\cite{Sion07onthe,OlumofinG11},
some approaches rely on a small trusted execution
environment provided by hardware \cite{Ding2010,WangDDB06} that
provides the cryptographic support needed to obliviously query the data
store~\cite{Smith:2001,asonov2005querying,WilliamsS08}.
%
%With special hardware in place, researches adapt techniques from oblivious RAMs
%(ORAM)---conceived to hide memory access patterns \cite{Goldreich:1987}---to
%provide PIR queries (e.g. , \cite{Smith:2001,asonov2005querying,WilliamsS08}).
%
%In such scenarios, trusted hardware builds a private channel with clients and
%
This technique can be seen in oblivious computing
\cite{Maas:2013}, online advertising \cite{Backes:2012}, and credit networks
\cite{privpay}
%
%Most of the work above assume
for clients which are benign or follow an strict access protocol.
If clients are malicious, however, % the mere fact that the
attacker's code may leak information though access patterns.
%
%We see our work as complementary to PIR ones.
%
We force communication with the store to occur in non-sensitive contests. In addition,
our
%% Besides forcing communication to occur in non-sensitive contexts (as we do in
%% this work), our
language-based techniques could be extended to require
untrusted code to follow an oblivious protocol.

% Local Variables:
% TeX-master: "paper.tex"
% TeX-command-default: "Make"
% End:

\section{Conclusion}
\label{sec:conclusion}
\clio is a computationally secure coarse-grained dynamic information-flow control
library that uses cryptography to protect the confidentiality and integrity of
data. The use of cryptography is hidden from the language operations and is controlled
instead through familiar language constructs in an existing IFC library, LIO. Further,
we present a novel proof technique that combines standard programming language 
and cryptographic proof techniques to show the interaction between the
high-level security guarantees provided by information flow control
and the low-level guarantees offered by the cryptographic mechanisms are secure.
We also provide a prototype \clio implementation in the form of a Haskell library extending LIO to evaluate its practicality. We see \clio as a way for
programmers that are non-expert cryptographers
to use cryptography securely.

\begin{acks}
This material is based upon work supported by the National Science Foundation under Grant No.s 1421770 and 1524052. Any opinions, findings, and conclusions or recommendations expressed in this material are those of the authors and do not necessarily reflect the views of the National Science Foundation. This research is also supported by the Air Force Research Laboratory and the Swedish research agencies VR and STINT.
\end{acks}

%\clearpage

\bibliographystyle{ACM-Reference-Format}
\bibliography{bib}
\balance

%\theendnotes

\clearpage
\appendix
\onecolumn
%!TEX root = paper.tex

\section{Complete Definitions}
\label{appendix:defns}

\subsection{Security Lattice Orderings and Operators}

We use three lattices (Confidentiality, Integrity, and Availability) 
whose domains (named C, I, and A) are formulas of principals in conjuctive normal form. 
We 
define an information flow ordering $\flows$ (read as ``can flow to''). We
give the definitions of each below, where $\Rightarrow, \wedge, \vee$ are the usual classical logical connectives:

\subsubsection*{Confidentiality}
\[
\begin{array}{r@{\,}c@{\;\;}l}
  \canFlowToC{C}{C'}      &\iff&     C' \Rightarrow C \\
  \canFlowToStrictC{C}{C'}      &\iff&     \canFlowToC{C}{C'} \textand C \neq C' \\
  \cjoin{C}{C'}     &\iff&     C \wedge      C' \\
  \cmeet{C}{C'}     &\iff&     C \vee        C' \\
  \bot_C \equiv True    &    &     \top_C \equiv False
\end{array}
\]

\subsubsection*{Integrity}

\[
\begin{array}{r@{\,}c@{\;\;}l}
  \canFlowToI{I}{I'}      &\iff&     I \Rightarrow I' \\
  \canFlowToStrictI{I}{I'}      &\iff&     \canFlowToI{I}{I'} \textand I \neq I' \\
  \ijoin{I}{I'}     &\iff&     I \vee        I' \\
  \imeet{I}{I'}     &\iff&     I \wedge      I' \\
  \bot_I \equiv False   &    &     \top_I \equiv True
\end{array}
\]

\subsubsection*{Availability}

\[
\begin{array}{r@{\,}c@{\;\;}l}
  \canFlowToA{A}{A'}      &\iff&     A \Rightarrow A' \\
  \canFlowToStrictA{A}{A'}      &\iff&     \canFlowToA{A}{A'} \textand A \neq A' \\
  \ajoin{A}{A'}     &\iff&     A \vee        A' \\
  \ameet{A}{A'}     &\iff&     A \wedge      A' \\
  \bot_A \equiv False   &    &     \top_A \equiv True
\end{array}
\]

We define the security lattice of DC labels as a product lattice of the three individual lattices to form a security lattice whose domain is a triple of principal formulas ($l = \ltrip{C}{I}{A}$) and whose ordering is based on safe information flows. We also define a trust ordering $\trusts$ (read as ``at least as trustworthy as''):

\[
\begin{array}{r@{\,}c@{\;\;}l}
  \canFlowTo{\ltrip{C}{I}{A}}{\ltrip{C'}{I'}{A'}}      &\iff&   
    \canFlowToC{C}{C'} \textand \canFlowToI{I}{I'} \textand \canFlowToA{A}{A'}  \\
  \canFlowToStrict{\ltrip{C}{I}{A}}{\ltrip{C'}{I'}{A'}}      &\iff&   
    \canFlowToStrictC{C}{C'} \textor \canFlowToStrictI{I}{I'} \textor \canFlowToStrictA{A}{A'}  \\
 
  \canTrustTo{\ltrip{C}{I}{A}}{\ltrip{C'}{I'}{A'}}      &\iff&   
    \canFlowToC{C}{C'} \textand \canFlowToI{I'}{I} \textand \canFlowToA{A'}{A}  \\
  
  \join{\ltrip{C}{I}{A}}{\ltrip{C'}{I'}{A'}}     &\iff&    
    \ltrip{~\cjoin{C}{C'}~}{~\ijoin{I}{I'}~}{~\ajoin{A}{A'}~} \\
  \meet{\ltrip{C}{I}{A}}{\ltrip{C'}{I'}{A'}}     &\iff&    
    \ltrip{~\cmeet{C}{C'}~}{~\imeet{I}{I'}~}{~\ameet{A}{A'}~} \\
  \bot \equiv \ltrip{\bot_C}{\bot_I}{\bot_A}   &    &     \top \equiv \ltrip{\top_C}{\top_I}{\top_A}
\end{array}
\]

For convenience of avoiding pattern matching over the components of a label when needing to inspect an individual component, we define the following projection functions:

\[
\begin{array}{r@{\,}c@{\;\;}l}
 \projC{\ltrip{C}{I}{A}} &=& C \\
 \projI{\ltrip{C}{I}{A}} &=& I \\
 \projA{\ltrip{C}{I}{A}} &=& A 
\end{array}
\]

\clearpage

\subsection{Cryptography Background Definitions}

\begin{itemize}

\item {\bf Distribution Ensemble}: An ensemble of probability distributions is a sequence $\{X_n\}_{n \in \mathcal{N}}$ of probability distributions.

\item {\bf Negligible Function}: A negligible function is a function $f(x) : \mathbb{N} \rightarrow \mathbb{R}$ such that for every positive integer $c \in \mathbb{Z}^+$ there exists an integer $N_c$ such that for all $x > N_c$:

$$
\big| f(x) \big| < \frac{1}{x^c}
$$

\item {\bf Computationally Indistinguishable}: Let $\{X_n\}_n$ and $\{Y_n\}_n$ be distribution ensembles. Then we say that they are computationally indistinguishable $\approx$ if for any non-uniform PPT $\adversary$ the following function is negligible in $n$:

$$
\big| \probabilityOfCond{\adversary(x) = 1}{x \drawnFrom X_n}  - \probabilityOfCond{\adversary(y) = 1}{y \drawnFrom Y_n} \big|
$$.

\item {\bf Hybrid Argument}:  Let $X_1, ..., X_m$ be a sequence of probability distributions, where $m$ is polynomial in $n$. Suppose
there exists a distinguisher $D$ that distinguishes $X_1$ and $X_m$ with probability $\epsilon$. Then there exists $i \in [1, m - 1]$ such that $D$ distinguishes $X_i$ and $X_{i+1}$ with probability $\frac{\epsilon}{m}$. (Proof uses Triangle Inequality)

\item {\bf Cryptosystem}: We consider a asymmetric encryption system and signature scheme $\Pi = (\Gen, \Enc, \Dec, \Sign, \Verify)$ such that $\Gen : 1^n \rightarrow (\{0, 1\}^n, \{0, 1\}^n)$ representing the public key and private key to use in the assymetric cryptographic functions.

\item {\bf Correctness of Cryptosystem} For correctness of our encryption system, we require that if $p \in \{0, 1\}^*$ and $(pk, sk) \drawnFrom \Gen(1^n)$ and $c = \Enc(pk, p)$ then $p = \Dec(sk, c)$.

For correctness of our signature scheme, we require that if $p \in \{0, 1\}^*$ and $(pk, sk) \drawnFrom \Gen(1^n)$ and $s = \Sign(sk, p)$ then $1 = \Verify(pk, s)$.

To simplify the notation while mainting easy-to-prove security properties, we require that the encryption and signature functions operate on independent parts of the key; that is, they internally use a key derivation function (e.g., encryption uses the first half, and signing uses the second half).

\item {\bf Security of Cryptosystem}: For our encryption functions, we assume the cryptosystem is CPA secure, defined as follows~\cite{pass-shelat}.
Let the random variable $\mathrm{IND}_b(\adversary, n)$ denote
the output of the   experiment, where
$\adversary$ is a non-uniform p.p.t., $n \in \varmathbb{N}$, $b \in \{0,1\}$ :
\noindent
$$
\figCPAGame
$$

Then we say that $\Pi$ is CPA (Chosen-Plaintext Attack) secure if for all
non-uniform p.p.t. $\adversary$:
$$
\Big\{~\mathrm{IND}_0(\adversary, n)~\Big\}_n \approx \Big\{~\mathrm{IND}_1(\adversary, n)~\Big\}_n
$$
Note that we consider the cryptographic functions themselves to be public.

For security of the signature scheme, we assume $\Pi$ is secure against existential forgery.

\item {\bf Indistinguishability Corrollary}: The CPA definition may be difficult to understand to some as it is phrased in the form of a game. An alternative definition of security (that is a fairly direct consequence of CPA security) is: For all $p_0, p_1$, if $|p_0| = |p_1|$ then,

$$
\Big\{ ~\Enc(pk, p_0) ~|~ (pk, sk) \drawnFrom \Gen(1^n)~ \Big\}_n \approx 
  \Big\{ ~\Enc(pk, p_1) ~|~ (pk, sk) \drawnFrom \Gen(1^n)~ \Big\}_n
$$
Informally, the results of encrypting of equal-length plaintexts are computationally indistinguishable.

\item {\bf Digital Signature Forgery}: We require our digital signature scheme to be secure against \emph{existential forgery} under a \emph{Chosen-Message Attack}, where the adversary is a non-unfiorm ppt in the size of the key~\cite{goldwasser-bellare}:

\begin{itemize}
\item \emph{Existential Foregery}: The adversary succeeds in forging the signature of one message, not necessarily of his choice.

\item \emph{Chosen-Message Attack}: The adversary is allowed to ask the signer to sign a number of messages of the adversary’s choice. The choice of these messages may depend on previously obtained signatures. For example, one may think of a notary public who signs documents on demand.
\end{itemize}

\end{itemize}

\clearpage
\subsection{\lioS Complete Syntax and Typing Rules}

\label{sec:typing}

\begin{align*}
\figLioSyntaxFull
\end{align*}

\begin{mathpar}
\figLioTypingRules
\end{mathpar}

\clearpage
\subsection{\lioS Remaining Step Rules}
\label{sec:fullsem}

The program state is $c = \conf{\lcur}{\lclr}{t}$ where $\lcur$ is the
current label, $\lclr$ is the current clearance. 
Computation is modeled as a small-step semantics $c \ltol{\event~} c'$.
We use labels $\event$ to represent interaction with the store. 

\figLioEvalContextFull

\begin{mathpar}
\figLioSemanticsFull
\end{mathpar}

Where $\lowlabel$ is the store adversary level. %the following lattice projection functions are given as:

%\begin{tabular}{lcl}
%$\CI{\ltrip{l_c}{l_i}{l_a}}$ &$=$& $\lpair{l_c}{l_i}$\\
%$\Availability{\ltrip{l_c}{l_i}{l_a}}$ &$=$& ${l_a}$
%\end{tabular}

\clearpage

\subsection{Complete Low Equivalence Relation}
\label{subsec:complete-low-equiv}
% Here, if the same name is used on both sides of the relation it implies syntactic equivalence. For example, $\gv \lowEquiv \gv$ means $\gv \lowEquiv \gv'$ if $\gv = \gv'$.

\medskip

\begin{tabular}{lcll}
\figLowEquivFull
\end{tabular}
\vspace{-0em}
\subsubsection*{Confidentiality Low Equivalence}
Define \lowEquivC to be the low equivalence relation with respect to
only confidentiality. The integrity and availability parts of 
the label are ignored. %We also use a strict can-flow-to for confidentiality.

\medskip

{\renewcommand{\arraystretch}{1.2}
\begin{tabular}{lcll}
  $\gv_1$ & $\lowEquivC$ & $\gv_2$ & if $\gv_1 = \gv_2$ \\
  $\LabeledO{l_1}{\gv_1}$ & $\lowEquivC$ & $\LabeledO{l_1}{\gv_2}$& where \\
  
\multicolumn{4}{>{\hspace{1.5em}}l}{
 $ 
\left\{   
\begin{array}{lll}
  \gv_1 = \gv_2 & \mathrm{if} & \projC{l_1} \flows^C \projC{\lowlabel} \\
   \typeOf{\gv_1} = \typeOf{\gv_2} & \multicolumn{2}{l}{\mathrm{otherwise} }
\end{array}
\right.
$} \\
  \\
  $(v_1, v_2)$ & $\lowEquivC$ & $(v_1', v_2')$& if $v_1 \lowEquivC v_1'$ and $v_2 \lowEquiv v_2'$ \\
  $t_1\ t_2$ & $\lowEquivC$ & $t_1'\ t_2'$& if $t_1 \lowEquivC t_1'$ and $t_2 \lowEquiv t_2'$\\
  \multicolumn{3}{c}{...}
  \\ 
  \multicolumn{3}{l}{{\bf Configurations:}}\\
  $\conf{\lcur}{\lclr}{t}$ & $\lowEquivC$ & $\conf{\lcur'}{\lclr'}{t'}$ & if $t \lowEquivC t'$ and %\lowEquivD t'$ and 
  $\lcur = \lcur'$ and  $\lclr = \lclr'$ \\
  $\conf{\lcur}{\lclr}{t}$ & $\lowEquivC$ & $\conf{\lcur'}{\lclr'}{t'}$ & if $\projC{\lcur} \not\flows^C \projC{\lowlabel}$ and $\projC{\lcur'} \not\flows^C \projC{\lowlabel}$  and $\projC{\lclr} \not\flows^C \projC{\lowlabel}$ and $\lclr' \not\flows^C \projC{\lowlabel}$\\

\end{tabular}
}

%\vspace{-0.5em}
%\projC{l_1} \flows^C \projC{\lowlabel} \\ &&
%  \mathrm{and~} \projI{l_1} \flowsstrict^I \projI{\lowlabel} 
%\clearpage

% \clearpage

% \subsection{Derived Information}

%Where $t_0$ is the original program and $C_{\lowlabel}$ is a program context where the results are readable by $\lowlabel$.

\clearpage

\subsection{Ideal \clio Complete Syntax and Semantics}
\label{sec:ideal-storesem}
The ideal \clio state is $\iconf{c}{\istore}$ where $c$ is the
\lioS configuration and $\istore$ is a mapping 
$\istore : \gv \rightarrow \Labeled{l}{\gv}_{\missing}$, where $\missing$
represents a corrupted entry.

\begin{mathpar}
\figIdealClioFull
\end{mathpar}

Ideal interactions $\iinteraction$ are given by the following syntax:

\bigskip

\figIdealClioInteractions

The low-step relation $\Lowstep{\iinteractionsvar{\iinteraction}}{}$ from 
ideal \clio configurations to \clio configurations using adversary interaction $\iinteractions$.

\begin{mathpar}
\figIdealClioLowStep
\end{mathpar}

\clearpage

\subsection{Complete Definitions for Labeled Value Serialization}
\label{appendix:serialization}
\todo{reviewer: doesn't account for version indexes, which should be needed for Theorem 2}

\begin{itemize}
  \item $\initializecknm(\istore, C) = (\fetchcknm(\istore, C), \skipcmd)$ if $\fetchcknm(\istore, C)$ defined
  \item $\initializecknm(\istore, C) = (\fetchcknm(R(\istore), C), R)$ if $\fetchcknm(\istore, C)$ undefined and $R = \createcknm(\istore, C)$
  \item $\createcknm(\istore, p_1  \vee ... \vee p_n) = \mathrm{store}~(p_1 \vee ... \vee  p_i \vee ... \vee p_n)~(pk, m_n, s)$ where
  \\
  \hspace*{1cm}$(pk, sk) \drawnFrom \mathrm{Gen}(1^n)$ \\
  \hspace*{1cm}$m_0 = \emptyset$ \\
  \hspace*{1cm}for $i$ from $1$ to $n$:\\
  \hspace*{1.5cm}$(pk_i, sk_i) = \keystore(p_i)$\\
  \hspace*{1.5cm}$m_i \drawnFrom m_{i-1}[p_i \mapsto \Enc(pk_i, sk)]$\\
  \hspace*{1.5cm}$s \drawnFrom \Sign(sk_i, (pk, m))$ if $sk_i \neq \bot$
\item $\fetchcknm(\istore, p_1 \vee ... \vee p_i \vee ... \vee p_n) = (pk, sk)$ where
  \\
  \hspace*{1cm}$(pk, m, s) = \istore(p_1 \vee ... \vee p_i \vee ... \vee p_n)$ \\
  \hspace*{1cm}$(pk_i, sk_i) = \keystore(p_i)$ where $i$ chosen s.t. $sk_i \neq \bot$\\
  \hspace*{1cm}$sk = \Dec(sk_i, m(p_i))$ \\
  \hspace*{1cm}$\Verify(pk_j, (pk, m), s) = 1$ for some $pk_j$ in the category.

\end{itemize}

\begin{itemize}
\item $\Enc_{\keystore}(\istore_0, \langle C_1 \wedge ... \wedge C_i \wedge ... \wedge C_n, l_i \rangle, v_0) = (R_n \cdot ... \cdot R_1, v_n)$ where
  \\
  \hspace*{1cm} for $i$ from $1$ to $n$:\\
  \hspace*{1.5cm} $((pk_i, sk_i), R_i) \drawnFrom \initializecknm(\istore_{i-1}, C_i)$ \\
  \hspace*{1.5cm} $\istore_i = R_i(\istore_{i-1})$\\
  \hspace*{1.5cm} $v_i \drawnFrom \Enc(pk_i, v_{i-1})$

\item $\Sign_{\keystore}(\istore_0, \langle l_c, C_1 \wedge ... \wedge C_n \rangle, v) = (R_n \cdot ... \cdot R_1, s_1, ..., s_n)$ where
  \\
  \hspace*{1cm} for $i$ from $1$ to $n$:\\
  \hspace*{1.5cm} $((pk_i, sk_i), R_i) \drawnFrom \initializecknm(\istore_{i-1}, C_i)$ \\
  \hspace*{1.5cm} $\istore_i = R_i(\istore_{i-1})$\\
  \hspace*{1.5cm} $s_i \drawnFrom \Sign(sk_i, v)$

\item $\Dec_{\keystore}(\istore, \langle C_1  \wedge ... \wedge C_n, l_i \rangle, v_n) = v_0$ where
  \\
    \hspace*{1cm} for $i$ from $n$ to $1$:\\
  \hspace*{1.5cm} $(pk_i, sk_i) = \fetchcknm(\istore, C_i)$ \\
  \hspace*{1.5cm} $v_{i-1} = \Dec(sk_i, v_{i})$ \\

\item $\Verify_{\keystore}(\istore, \langle l_c, C_1  \wedge ... \wedge C_n \rangle, v, s_1, ..., s_n) = 1$ if
  \\
    \hspace*{1cm} for $i$ from $n$ to $1$:\\
  \hspace*{1.5cm} $(pk_i, sk_i) = \fetchcknm(\istore, C_i)$ \\
  \hspace*{1.5cm} $1 = \Verify(pk_i, v, s_i)$

\end{itemize}

\bigskip
\bigskip

Similar to the category key meta-functions, we also annotate the results of the meta-functions with the interactions made on the store so that we can track what actions are being taken on the crypto store.

With these cryptographic functions operating on labels, we are now ready to describe the meta-functions which convert a labeled value to a bit string and vice-versa.

\begin{itemize}
\item $\serializenm_{\keystore}(\istore, \LabeledO{l}{\gv}) = \randomExpr{(\overline{R_2} \cdot \overline{R_1} , \LabeledO{l}{\bitstream})}{(\overline{R}, b) \drawnFrom \Enc_{\keystore}(\overline{R_1}(\istore), l, (\gv, s_1, ..., s_n));~ (\overline{R_1}, s_1, ..., s_n) \drawnFrom \Sign_{\keystore}(\istore, l, \gv) }$
\item $\deserializenm_{\keystore}(\istore, \LabeledO{l}{\bitstream}, \tau) =  \LabeledO{l}{\gv}$ if $\Verify_{\keystore}(\istore, l, \gv, s_1, ..., s_n) = 1$ and $\Dec_{\keystore}(\istore, l, b) = (\gv, s_1, ..., s_n)$ and $\typeOf{\gv} = \tau$
\end{itemize}

We use the convenience function $\mathrm{pub}(\keystore)$ to represent the projection of the keystore that only contains the public key parts of the keystores, and no private keys.

\clearpage

\subsection{Real \clio Complete Syntax and Semantics}
\label{appendix:real-syntax}
\figRealClioSyntax

\figRealClioInteractions

\clearpage
The interaction concatenation operation $\concat{\interaction}{\interaction}$ sequences interactions. We use the notation $\interactions = \concat{\interaction}{\concat{...}{\interaction}}$ to denote a sequence of interactions. %The low-equivalance relation $\lowEquiv$ is overloaded for sequences of interactions as shown in the low equivalence relation.

The real \clio state is $\rconf{c}{\istore}{\interactionsdist}{\versions}$
where $c$ is the \lioS configuration, $\istore$ is the store.

\begin{mathpar}
\figRealClioFull
\end{mathpar}

\bigskip

The low-step relation $\Lowstep{p}$ from 
real \clio configurations and adversary interactions to \clio configurations with probability $p$.

\begin{mathpar}
\figRealClioLowStep
\end{mathpar}

\noindent The step function encodes the distribution of real \clio states after taking $j$ low steps:

\noindent
\figStepFunctionDef
\medskip

Note that, we consider only configurations and strategies that can and always will take at least $j$ low steps for all strategies. That is, there is no possibility for a trace to fail to make a low step before $j$ low steps. As a result, the step function will always produce a distribution (i.e., their probabilities will add up to 1). The program should be written in such a way that it is defensively written to ensure that it can take at least $j$ low steps.

\clearpage

\section{Complete Theorems and Proofs}
\label{appendix:proofs}

\subsection{\clio Interaction Indistinguishability Lemmas}

\begin{lemma}[Round 1: Multi-Message Security]
For all $m_{\{1,2\}}^1 ... m_{\{1,2\}}^n$ and all principals
$\principals$, if $|m^0_i| = |m^1_i|$ for all $1 \leq i \leq k$, and
$\cryptosys$ is CPA Secure, then

\scalebox{0.9}{%
\vbox{
%$l_1, ..., l_k$, such that $\canFlowToC{\projC{l_i}}{\projC{\authorityOf{\keystore}}}$, and for all $\gv^{\{0,1\}}_1, ..., \gv^{\{0,1\}}_k$, if for all $1 \leq i \leq n$, $\typeOf{\gv^0_i} = \typeOf{\gv^1_i}$ and $\cryptosys$ is CPA Secure, then 
\[
\begin{array}{c}
% \big\{ ~ \Enc_\keystore(\istore^1_1, l^1, \gv^1_1);  ~...~ \Enc_\keystore(\istore^{\,i}_1, l^{\,i}, \gv^{\,i}_1); ~  ...; ~\Enc_\keystore(\istore^k_1, l^k, \gv^k_1) ~|~ \keystore \drawnFrom \Gen(1^n) ~ \big\} \\
% \approx \\
% \big\{ \Enc_\keystore(\istore^1_2, l^1, \gv^1_2);  ~...~ \Enc_\keystore(\istore^{\,i}_2, l^{\,i}, \gv^{\,i}_2); ~  ...; ~\Enc_\keystore(\istore^k_2, l^k, \gv^k_2) ~|~ \keystore \drawnFrom \Gen(1^n) ~  \big\}
\big\{ \Enc(pk^1_1, m^1_1),  ~...,~ \Enc(pk^{\,i}_1, m^{\,i}_1), ~  ..., ~\Enc(pk^k_1, m^k_1) ~\big|~ \keystore \drawnFrom \Gen(\principals, 1^n); ~ (pk^i_2, sk^i_2) \in rng(\keystore);  ~ 1 \leq i \leq k \big\}_n \\
\approx \\
\big\{ \Enc(pk^1_2, m^1_2),  ~...,~ \Enc(pk^{\,i}_2, m^{\,i}_2), ~  ..., ~\Enc(pk^k_2, m^k_2) ~\big|~ \keystore \drawnFrom \Gen(\principals, 1^n); ~ (pk^i_2, sk^i_2) \in rng(\keystore); ~ 1 \leq i \leq k \big\}_n \\
\end{array}
\]
}}
\end{lemma}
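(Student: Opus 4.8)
The plan is to prove this by the standard hybrid argument, reducing to CPA security; it is the multi-message analogue of the ``Indistinguishability Corollary'' from the appendix, adapted to the keystore setting in which several components may use the same principal's key. First I would fix the assignment $i \mapsto p_i$ of component indices to principals, so that $(pk^i,sk^i) = \keystore(p_i)$; since the lemma is universally quantified over such data we may treat it as fixed. Writing $k = \mathrm{poly}(n)$ for the length of the message tuple, define for each $0 \le j \le k$ the hybrid ensemble $H_j$ that samples $\keystore \drawnFrom \Gen(\principals, 1^n)$ once and outputs the tuple whose first $j$ components are $\Enc(pk^i, m^i_2)$ and whose remaining components are $\Enc(pk^i, m^i_1)$. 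Then $H_0$ is the left-hand ensemble and $H_k$ is the right-hand ensemble, so by the hybrid lemma (triangle inequality) it suffices to show $H_{j-1} \approx H_j$ for every $j$.

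Next I would give the reduction. Suppose a non-uniform ppt $D$ distinguishes $H_{j-1}$ from $H_j$ with advantage $\epsilon_j$. Build a CPA adversary $\adversary^*_j$: on input a challenge public key $pk$, it samples its own keystore $\keystore' \drawnFrom \Gen(\principals, 1^n)$ and overrides the public key of principal $p_j$ with $pk$ (discarding $p_j$'s secret key, which it will never need). It submits the challenge pair $(m^j_1, m^j_2)$ — admissible because $|m^j_1| = |m^j_2|$ by hypothesis — receives the challenge ciphertext $c$, and assembles a tuple: component $j$ is $c$; each component $i < j$ is $\Enc(\widehat{pk}^{\,i}, m^i_2)$ and each component $i > j$ is $\Enc(\widehat{pk}^{\,i}, m^i_1)$, where $\widehat{pk}^{\,i}$ is $pk$ when $p_i = p_j$ and the public key from $\keystore'$ otherwise. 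It runs $D$ on this tuple and echoes the guess. Crucially, every encryption other than the challenge uses only public keys, which $\adversary^*_j$ has, and the experiment performs no decryption, so the missing secret key is irrelevant; $\adversary^*_j$ is non-uniform ppt (one extra $\Gen$, $k$ encryptions, one run of $D$).

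For correctness: the joint distribution of all keys used by $\adversary^*_j$ coincides with $\Gen(\principals, 1^n)$, since $pk$ is a fresh $\Gen(1^n)$ output exactly as $\keystore$ would produce for $p_j$. When the CPA hidden bit is $0$ the assembled tuple is distributed as $H_{j-1}$, and when it is $1$ as $H_j$; hence $\adversary^*_j$'s CPA advantage equals $\epsilon_j$, which is negligible by CPA security of $\cryptosys$. Summing over the polynomially many steps $j = 1, \dots, k$ shows the distinguishing advantage between $H_0$ and $H_k$ is negligible, which is the claim.

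The only genuinely delicate point — and the reason the argument is not literally the textbook one — is the reuse of a principal's key across components: the reduction handles it by having $\adversary^*_j$ itself perform the extra encryptions under the public challenge key, which is sound precisely because the CPA game hands the adversary the public key and no decryption occurs anywhere in the experiment. The remaining obligations (checking $k = \mathrm{poly}(n)$ so the hybrid lemma applies, and noting that the equal-length hypothesis is exactly what makes each challenge pair admissible) are routine bookkeeping.
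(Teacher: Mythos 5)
Your proposal is correct and follows essentially the same route as the paper's own proof: a hybrid argument over the $k$ message positions, with each adjacent pair of hybrids separated by a reduction to single-message CPA security that embeds the challenge ciphertext at the distinguishing position and encrypts the remaining components itself using public keys. If anything, your write-up is more careful than the paper's, which phrases the argument as a proof by contradiction via the hybrid lemma but does not explicitly address the reuse of a single principal's key across multiple tuple components — the point you correctly flag and handle by having the reduction substitute the challenge public key wherever that principal occurs.
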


\begin{proof}

We perform a proof by contradiction: we assume the consequent does not hold
and construct a counter-example to the CPA-security of $\cryptosys$.

% We use the definition of the labeled value encryption function and unpack it to show that the distributions are:

%Since we are using the Erased CTA adversary as a black box distinguisher 
%we can use an annotated semantics that also records the plain-text messages of each encryption. \todo{this sentence sounds circular} This gives the CPA adversary the ability to access the messages that were encrypted, which we will use to construct a CPA adversary. We also note that $k$ is polynomial in $n$ by or definition of the step function, and because the \rulename{Store} rule in \clio only produces a polynomial number of interactions per store command. 

The lengths of the sequences of encryptions are equal by setup. The sequences are also polynomial in $n$ as each low step produces a polynomial number of messages and the number of low steps is polynomial in $n$ as $k$ is fixed.
% The lengths of each corresponding message are equal due to the types being equivalent. Since the size of the message is a function of the type, we can conclude that $|m^i_1| = |m^i_2|$ for all $1 \leq i \leq k$. 

\medskip

This setup is equivalent to the multi-message CPA security problem. We use the same general technique to show that single-message CPA security. gives rise to multi-message CPA security by using a hybrid argument.

\medskip

We can define a hybrid sequence of messages

$$
H_i = \big\{ m^1_1;  ~...;~ m^{\,i}_1; ~   m^{\,i+1}_2; ~  ...; ~m^k_2 \} 
$$

such that at the point $i$ we switch from using the sequence of messages from the first run to using the sequence of messages from the second run. By the hybrid argument, there must exist an $i$ that distinguishes $H_i$ and $H_{i+1}$ with non-negligible probability in $n$. We will fix on that $i$. So we can now construct the following CPA adversary:
\begin{enumerate}
\item By assumption of our proof by contradiction, there exists a Round 1 adversary that can distinguish $H_i$ and $H_{i+1}$ for a particular $i$ and particular $m^{\{0,1\}}_1, ..., m^{\{0,1\}}_n$, and call it $\adversary_{R1}$.
\item In our construction, we generate a new keystore $\keystore$ as defined in the lemma statement and give the plaintext messages $m^{i+1}_2$ and $m^{i+1}_1$ to the CPA game and it will then provide us back a ciphertext $c$ of one of the messages.
\item We create a new sequence of encrypted messages in the following way:

\scalebox{0.8}{%
\vbox{
$$
H = \big\{ \Enc(pk^1_1, m^1_1);  ~...;~ \Enc(pk^i_1, m^{\,i}_1); ~  c; ~ \Enc(pk^{i+2}_2, m^{\,i+2}_2); ~  ...; ~\Enc(pk^k_2, m^k_2) ~\big|~ pk^o_{\{1,2\}} \drawnFrom \Gen(1^n); 1 \leq o \leq k \} \} 
$$
}}

In the case where the CPA game chose message $m^{i+1}_2$ we have that $H = H_i$ and in the other case where $m^{i+1}_1$ we have that $H = H_{i+1}$. Since $\adversary_{R1}$ can distinguish exactly this case and that the choice of message encrypted determines which sequence of messages was used, we can as a result distinguish which plain-text message was chosen by the CPA game with non-negligible probability.

\end{enumerate}

As a result of constructing a CPA adversary that can distinguish plain-text messages with non-negligible probability, we have shown a contradiction, and can conclude that the above sequences of encryptions is indistinguishable. 

\end{proof}

\clearpage

\begin{definition}[Low Equivalent Interactions]
\normalfont
Let $L_{\keystore}(\interactions)$ to be a fixed function (i.e., it does not change its behavior based on its inputs) from interactions to interactions such that the result contains the original sequence of interactions
with \emph{low interactions} added at statically fixed locations in the sequence. Let the resulting sequences of interactions be called \emph{low equivalent interactions}.

\bigskip

A low interaction is a $\skipcmd$ command or a $\storecmd{\gv_k}{\Labeled{l_1}{b}}$ such that \\ $\projC{l_1} \flows^C \projC{\authorityOf{\keystore}} \textand
 \Labeled{l_1}{(\gv, \gv_k, n)} = \deserializenm_{\keystore}({\interactions},{\Labeled{l_1}{b}})$ or $\storecmd{\category}{\principals'}$.
\end{definition}

\begin{lemma}[Round 2: Secret and Low Equivalent Interactions]
For all keystores $\keystore_0$, and
$l_1, ..., l_k$, such that $\canFlowToC{\projC{l_i}}{\projC{\authorityOf{\keystore}}}$, and for all $m_{\{1,2\}}^1 ... m_{\{1,2\}}^n$ and all principals $\principals$, if $|m^i_1| = |m^i_2|$ for all $1 \leq i \leq k$ and $\cryptosys$ is CPA Secure, then 

\scalebox{0.8}{%
\vbox{
\[
\begin{array}{c}
\big\{ ~ L_{\keystore_0}\Big(\storecmd{\gv^1}{\Labeled{l^1}{b^1_1}} \cdot  ~...~ \cdot \storecmd{\gv^k}{\Labeled{l^{k}}{b^{k}_1}} \Big) ~|~ \keystore \drawnFrom \Gen(1^n); ~ (pk^i, sk^i) \in rng(\keystore); ~ b^i_1 \drawnFrom \Enc(pk^i, m^i_1); ~ 1 \leq i \leq k ~ \big\}_n \\
\approx \\
\big\{ ~ L_{\keystore_0}\Big(\storecmd{\gv^1}{\Labeled{l^1}{b^1_2}} \cdot  ~...~ \cdot~ \storecmd{\gv^k}{\Labeled{l^{k}}{b^{k}_2}} \Big) ~|~ \keystore \drawnFrom \Gen(1^n); ~ (pk^i, sk^i) \in rng(\keystore); ~ b^i_2 \drawnFrom \Enc(pk^i, m^i_2); ~ 1 \leq i \leq k ~ \big\}_n 
\end{array}
\]
}}
\label{lemma:round2}
\end{lemma}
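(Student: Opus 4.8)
The idea is to reduce this statement to the previous lemma (Round~1, Multi-Message Security) by realising both conclusion distributions as the image, under one and the same polynomial-time post-processing step, of the two raw ciphertext-sequence distributions that Round~1 already equates. Round~1, whose hypothesis $|m^i_1| = |m^i_2|$ is exactly ours, gives that the family $\langle \Enc(pk^i_1, m^i_1)\rangle_{1 \le i \le k}$ is computationally indistinguishable from $\langle \Enc(pk^i_2, m^i_2)\rangle_{1 \le i \le k}$. Let $g$ be the operation that takes a length-$k$ ciphertext sequence $b^1,\dots,b^k$, forms the store interactions $\storecmd{\gv^1}{\Labeled{l^1}{b^1}}\cdot\dots\cdot\storecmd{\gv^k}{\Labeled{l^k}{b^k}}$ (with the fixed indices $\gv^i$ and labels $l^i$ of the lemma), and then applies $L_{\keystore_0}$, interleaving its statically fixed low interactions at their statically fixed positions. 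Since $L_{\keystore_0}$ does not change its behaviour based on its inputs, $g$ treats both ciphertext sequences identically, so $g$ applied to the first sequence is literally the left-hand distribution of the lemma and $g$ applied to the second is the right-hand distribution. The proof then follows from the standard closure of $\approx$ under non-uniform (possibly randomized) ppt post-processing: a distinguisher for $\langle g(X_n)\rangle$ versus $\langle g(Y_n)\rangle$, composed with $g$, distinguishes $\langle X_n \rangle$ from $\langle Y_n\rangle$. Equivalently one may run it as a proof by contradiction in the style of the Round~1 proof: from a Round~2 distinguisher build a Round~1 distinguisher that applies $g$ to its challenge and then calls the Round~2 distinguisher.

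First I would check the routine conditions: the output sequence has fixed length $k$ plus a fixed number of inserted low interactions, hence polynomial in $n$; and the sampling of the keystore, the public keys $pk^i$, and the ciphertexts $b^i_j$ agrees on the two sides by construction. The substance of the argument is the claim that $g$ — in particular the insertion of $L_{\keystore_0}$'s low interactions — can be performed by a non-uniform ppt algorithm. A $\skipcmd$ insertion is free; a category-key insertion $\storecmd{\category}{\principals'}$ only requires $\createcknm$ (fresh key generation, encryption of the fresh private key under the members' public keys, and one signature under a member's private key), and this is feasible because a low category $C = p_1 \vee \dots \vee p_n$ with $\canFlowToC{C}{\projC{\authorityOf{\keystore_0}}}$ must, in classical logic, share a principal with $\authorityOf{\keystore_0}$ whose private key is available; a low value insertion $\storecmd{\gv_k}{\Labeled{l_1}{b}}$ with $\canFlowToC{\projC{l_1}}{\projC{\authorityOf{\keystore_0}}}$ is produced by encrypting under the (public) confidentiality category keys and signing with keys recoverable from the available authority. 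Crucially none of these insertions depends on the challenge messages, and the deserialization side-condition defining a low value interaction is met because the category keys it needs are themselves among the fixed insertions and sit at earlier fixed positions, so $g$'s output genuinely lands in the support of the distribution named on the right-hand side of the lemma.

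The hard part will be making that ppt-simulability claim airtight: pinning down exactly which signing keys each inserted low interaction needs and confirming they are all derivable from the public parameters (a low-confidentiality value may a priori still carry integrity beyond the adversary's reach). If some fixed low interaction is not publicly computable, the fallback is to exploit non-uniformity: because the inserted interactions are independent of the challenge, an averaging argument fixes, for each $n$, a worst-case sample of the inserted sub-sequence and hard-wires it as advice into the Round~1 distinguisher, which still transfers any non-negligible Round~2 advantage back to Round~1 — contradicting CPA security via Round~1. A secondary point to be careful about is that the interleaving yields sequences lying in the exact support of the lemma's distributions, so that the reduction concerns precisely those distributions rather than merely statistically close ones.
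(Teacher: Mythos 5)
Your proposal takes essentially the same route as the paper: a reduction to Round~1 in which the (input-independent) post-processing that wraps the ciphertexts in $\storecmd{\gv^i}{\Labeled{l^i}{\cdot}}$ interactions and applies $L_{\keystore_0}$ is composed with the Round~2 distinguisher to yield a Round~1 distinguisher. If anything you are more careful than the paper, which simply asserts that the labels, entry keys, and $L$'s insertions can be constructed "in the same static fashion" without addressing the ppt-computability of the inserted low interactions (category keys and signatures) that you rightly flag as the delicate point.
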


\begin{proof}
We perform a reduction to the Round 1 adversary. That is, if there exists a Round 2 adversary $\adversary_{R2}$ then there also exists a Round 1 adversary, which will provide a contradiction.

We construct our adversary as follows. Given a sequence of secret encryptions (from Round 1), we can construct the input to the $L$ function by constructing a constant set of labels $l_1$ to $l_n$ arbitrarily so long as they flow to $\projC{\authorityOf{\keystore}}$, which is a static property and also arbitrary input keys. We can then also construct the entry keys $v^i$ in the same static fashion. When then just apply the deterministic $L$ function and pass that to $\adversary_{R2}$.

Since we have shown how to construct a Round 1 adversary from a Round 2 adversary, we have a contradiction of the Round 1 lemma, so we conclude our proof. 

\end{proof}

\clearpage

\begin{definition}[\clio Interactions]
\normalfont
Let $m^1_{\{1,2\}}, ..., m^k_{\{1,2\}}$ be sequences of messages such that $|m^i_1| = |m^i_2|$. Further, let $\interactions^1_{\{1,2\}}, ..., \interactions^j_{\{1,2\}}$ be sequences of low equivalent interactions (from Definition 2) whose ciphertexts are based on slices of the underlying message. For example

\[
\begin{array}{lcl}
\interactions^1_1 &=& \storecmd{\category}{(b, \{ p \mapsto \Enc(pk, m^1_1) \}, b_s)}  ~ \cdot ~ \storecmd{\gv_1}{\Enc(pk, m^2_1)} \\
\interactions^2_1 &=& \skipcmd \\
\interactions^3_1 &=& \storecmd{\gv_3}{\Enc(pk', m^3_1)} \\
\multicolumn{2}{c}{...} \\
\interactions^j_1 &=& \storecmd{\gv_j}{\Enc(pk'', m^k_1)} 
\end{array}
\]

Then we say two sequences of interactions are \emph{\clio equivalent} $\asymp$ iff they are of the form, with all but negligible probability, 
%\todo{are the primes correct? In particular, should $\interactions^{(j-1)'}_1$ really have a prime? It's a bit hard to read. I think $\interactions^{s'}$ may be better written as $\interactions'^{s}$  }

\scalebox{0.8}{%
\vbox{
\[
\begin{array}{rcl}
\big\{ ~ \interactions^{1}_{1} ~ \cdot ~ \interactions'^{2}_{1} \cdot ... \cdot ~ 
\interactions^{j}_{1}  ~ \cdot ~ \interactions'^j_1
 & ~\Big|~ & \keystore \drawnFrom \Gen(1^n); ~ (pk^i, sk^i) \in rng(\keystore); ~ b^i_1 \drawnFrom \Enc(pk^i, m^i_1); ~ 1 \leq i \leq k; ~ \\
 && t, v_{\{0,1\}}, \strategy \drawnFrom \adversary(1^n); ~
 \interactions'^{s}_1 \drawnFrom \strategy(\interactions^{s}_1 \cdot ... \cdot \interactions^{j}_1 \cdot \interactions'^{j}_1); ~ 1 < s < j; ~ \interactions'^{j}_1 \drawnFrom \strategy(\skipcmd)
   ~ \big\}_n \\
\multicolumn{3}{c}{\asymp} \\
\big\{ ~ \interactions^{1}_{2} ~ \cdot ~ \interactions'^{2}_{2} \cdot ... \cdot ~ 
\interactions^{j}_{2}  ~ \cdot ~ \interactions'^j_2
 & ~\Big|~ & \keystore \drawnFrom \Gen(1^n); ~ (pk^i, sk^i) \in rng(\keystore); ~ b^i_2 \drawnFrom \Enc(pk^i, m^i_2); ~ 1 \leq i \leq k; ~ \\
 && t, v_{\{0,1\}}, \strategy \drawnFrom \adversary(1^n); ~
 \interactions'^{s}_2 \drawnFrom \strategy(\interactions^{s}_2 \cdot ... \cdot \interactions^{j}_2 \cdot \interactions'^{j}_2); ~ 1 < s < j; ~ \interactions'^{j}_2 \drawnFrom \strategy(\skipcmd)
   ~ \big\}_n \\
\end{array}
\]
}}
\end{definition}

\begin{lemma}[Round 3: \clio Interactions Indistinguishability]
For all families of distributions $\interactionsdist_1$ and  $\interactionsdist_2$, if $\interactionsdist_1 \asymp \interactionsdist_2$ and $\cryptosys$ is CPA secure, then $\interactionsdist_1 \approx \interactionsdist_2$.
\label{lemma:clio-indist}
\end{lemma}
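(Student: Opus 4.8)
The plan is to prove Lemma~\ref{lemma:clio-indist} by a reduction to Lemma~\ref{lemma:round2} (Round~2). First I would unfold the hypothesis $\interactionsdist_1 \asymp \interactionsdist_2$ using Definition~3: with all but negligible probability each $\interactionsdist_b$ is a distribution over sequences of the form $\interactions^{1}_{b} \cdot \interactions'^{2}_{b} \cdot \cdots \cdot \interactions^{j}_{b} \cdot \interactions'^{j}_{b}$, where the \emph{backbone} $\interactions^{1}_{b} \cdot \cdots \cdot \interactions^{j}_{b}$ is a sequence of low-equivalent interactions (Definition~2) whose ciphertexts encrypt message slices $m^{i}_{b}$ with $|m^{i}_{1}| = |m^{i}_{2}|$, and the strategy interactions $\interactions'^{s}_{b}$ are obtained, working backwards from $s = j$, by applying $\strategy$ to the appropriate suffix. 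The key structural observation is that, because the fixed function $L_{\keystore_0}$ of Definition~2 inserts its low interactions at statically determined positions, the backbone of a $\interactionsdist_b$ sample is distributed exactly as one of the two Round~2 distributions for the realized message pair; hence Lemma~\ref{lemma:round2} together with CPA security of $\cryptosys$ gives that the two backbones are computationally indistinguishable.

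Second, I would argue that the interleaved strategy interactions contribute no distinguishing power, via a standard non-uniform reduction. Suppose for contradiction there is a non-uniform ppt $\adversary_{R3}$ distinguishing $\interactionsdist_1$ from $\interactionsdist_2$ with non-negligible advantage. Build a non-uniform ppt $\adversary_{R2}$ against Lemma~\ref{lemma:round2}: hardwire as advice the program $t$, the inputs $v_0, v_1$, the strategy $\strategy$, and the message pair that together maximize $\adversary_{R3}$'s advantage (so that the residual randomness is exactly Lemma~\ref{lemma:round2}'s key generation and encryption coins). On input a Round~2 backbone sample $\interactions^{1} \cdot \cdots \cdot \interactions^{j}$, $\adversary_{R2}$ first draws $\interactions'^{j} \drawnFrom \strategy(\skipcmd)$, then $\interactions'^{j-1} \drawnFrom \strategy(\interactions^{j-1} \cdot \interactions^{j} \cdot \interactions'^{j})$, and so on down to $\interactions'^{2}$, interleaves these with the backbone to form a sequence of the Definition~3 shape, feeds it to $\adversary_{R3}$, and echoes the output. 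Since $\strategy$ is itself constructed by a non-uniform ppt algorithm and the interleaving is efficient, $\adversary_{R2}$ is ppt; and conditioned on the ``good form'' event of Definition~3 it perfectly reconstructs a sample of $\interactionsdist_b$ from a sample of the $b$-th Round~2 distribution. Hence $\adversary_{R2}$'s advantage is within a negligible additive term of $\adversary_{R3}$'s, contradicting Lemma~\ref{lemma:round2}, and we conclude $\interactionsdist_1 \approx \interactionsdist_2$.

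The main obstacle is the second step, and within it the apparent circularity in how the strategy interactions are specified: each $\interactions'^{s}_{b}$ is drawn from $\strategy$ applied to a suffix that mentions the later strategy outputs $\interactions'^{s+1}_{b}, \ldots, \interactions'^{j}_{b}$. I need to make precise that this recurrence is well-founded---it bottoms out at $\interactions'^{j}_{b} \drawnFrom \strategy(\skipcmd)$---and that $\adversary_{R2}$ can evaluate it in a single backward pass using only data it already has, namely the Round~2 backbone and its hardwired coins for $\strategy$. A secondary subtlety, which must be handled carefully rather than waved away, is verifying that stripping the strategy interactions from a genuine $\interactionsdist_b$ sample yields precisely $L_{\keystore_0}$ applied to the secret encryptions: this requires the static insertion positions of $L$ and the category-key and readable-store interactions to line up with the backbone structure assumed in Definition~3, and it is exactly here that the ``with all but negligible probability'' qualifier of Definition~3 is invoked, with the resulting negligible error tracked through the reduction.
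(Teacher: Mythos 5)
Your proposal follows essentially the same route as the paper's proof: a reduction to the Round~2 lemma in which a hypothetical Round~3 distinguisher is converted into a Round~2 adversary by taking a Round~2 backbone sample, simulating the strategy interactions in a single backward pass starting from $\interactions'^{j} \drawnFrom \strategy(\skipcmd)$, splicing in the category-key material, and forwarding the assembled sequence. Your version is, if anything, more careful than the paper's about the non-uniform advice, the well-foundedness of the backward recurrence, and tracking the negligible error from Definition~3's ``with all but negligible probability'' qualifier, so no correction is needed.
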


\begin{proof}
Similar to the previous rounds, we will reduce this problem to the Round 2 Secret and Low Equivalent Interactions indistinguishability problem. If there exists a Round 3 adversary $\adversary_{R3}$ then there also exists a Round 2 adversary, which will provide a contradiction.

We construct our adversary as follows. Given a sequence of low equivalent interactions (from Round 2) $\interactions_1$ and $\interactions_2$, we subdivide the interactions into sequences of interactions $\interactions^1_{\{1,2\}} ... \interactions^j_{\{1,2\}}$. We also add the categories in storing category keys arbitrarily in a static way and also add the category key signatures (as it was not in the previous round) by just performing the signing process according to the \initializecknm function. 

For the strategy interactions, we just perform the draws from the strategy starting from the end of the sequences of interactions, working backwards and place them in their corresponding positions, i.e., $\interactions'^{s} \drawnFrom \strategy(\interactions^{s} \cdot ... \cdot \interactions'^{j} \cdot \interactions^{j}$ for $1 < s < j$. We note that, although the interactions may differ they are indistinguishable. That is because the first sequence of interactions $\interactions^j_{\{1,2\}}$ is a sub-problem of the Round 2 sequences of interactions. As a result, since the two sequences of interactions are computationally indistinguishable, then their corresponding draws from the strategy are also computationally indistinguishable.

As a result, we can then pass the final sequence of interactions to $\adversary_{R2}$ to distinguish the distributions. 
Since we have shown how to construct a Round 2 adversary from a Round 3 adversary, we have a contradiction of the Round 2 lemma, so we conclude our proof. 

\end{proof}

\clearpage
\subsection{\clio Preservation of Low Equivalence}

\bigskip

\begin{lemma}[Preservation of Low Equivalence]
For \figInteractionLemma{}
\label{lemma:interaction}
\end{lemma}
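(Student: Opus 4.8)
The plan is to prove Lemma~\ref{lemma:interaction} by induction on the number of low steps $j$, strengthening the statement so that the two conjuncts (negligible probability of divergence in configurations/versions, and $\asymp$-equivalence of the resulting interaction distributions) are carried together through the induction. The base case $j = 0$ is immediate: the starting configurations are $c_1 \lowEquivC c_2$ by hypothesis, the version maps both start at $\zeroes$, and the interaction distributions both start as $\{(\skipcmd, 1)\}$, which are trivially $\asymp$-related (empty strategy contributions). The heart of the argument is the inductive step, which reduces to two sublemmas that are foreshadowed in Step 2 of the proof sketch in the paper: (i) a single-step preservation result for the labeled transition relation $\ltol{\alpha}$ — if $c_0 \lowEquivC c_1$, $c_0 \ltol{\alpha} c_0'$, and $c_1 \ltol{\alpha} c_1'$, then $c_0' \lowEquivC c_1'$; and (ii) lifting (i) through the probabilistic low-step relation $\Lowstep{p}$, showing that low-equivalent configurations take low steps with matching probabilities (up to negligible error) and produce $\asymp$-related interaction distributions.

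First I would establish sublemma (i). For all rules other than \rulename{Store}, \rulename{Fetch-Valid}, and \rulename{Fetch-Invalid}, this is exactly the LIO low-equivalence preservation argument (the \clio base semantics in Figure~\ref{fig:sos-lio} coincide with LIO's), so I would cite that and focus on the store/fetch cases. For \rulename{Store}: since $c_0 \lowEquivC c_1$ and the store command is syntactically the same event $\alpha = \storecmd{\gv_k}{\Labeled{l_1}{\gv}}$ in both (because the adversary/strategy supplies the same interaction history and the term structure is low-equivalent), the resulting configurations differ only in a returned $\unitT$, and the current labels are unchanged; low-equivalence is preserved. The subtle point, noted in the sketch, is that when $\projC{l_1} \flows^C \projC{\lowlabel}$ the stored ground values must be \emph{equal} (not just equi-typed), which follows from the low-equivalence of the labeled values being stored; and when $\projC{l_1} \not\flows^C \projC{\lowlabel}$ the stored values need only have equal type, which by type soundness they do, hence equal serialized message lengths. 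For \rulename{Fetch-Valid}/\rulename{Fetch-Invalid}: the choice of which rule fires depends on the store contents and on $\canFlowTo{l}{l_d}$; since $l_d$ is determined by the (low-equivalent) term and the store histories are being kept in lockstep, the same rule fires and the returned labeled values are low-equivalent.

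Next I would handle sublemma (ii), lifting to $\Lowstep{p}$. Here I use the Round~3 indistinguishability lemma (Lemma~\ref{lemma:clio-indist}) and the fact that $\interactionsdist_0 \asymp \interactionsdist_1$ (inductive hypothesis) implies $\interactionsdist_0 \approx \interactionsdist_1$, so the strategy $\strategy$ — being a fixed ppt function — produces computationally indistinguishable, and in fact $\asymp$-related, extensions $\interactionsdist_0' \asymp \interactionsdist_1'$. The probabilistic fetch rules (\rulename{Fetch-Exists}, \rulename{Fetch-Missing}, \rulename{Fetch-Replay}) assign step probability equal to the probability mass of stores $\istore$ satisfying the relevant predicate; I need to argue these masses agree up to negligible error between the two runs. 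This is where the version numbers and the anti-replay machinery matter: a mismatch can only arise if the adversary manages to produce a store entry that deserializes validly under one keystore but not the other at the same index/version, which — because the confidential category keys are generated freshly and independently by $\Gen$ and the adversary only sees public keys — happens with negligible probability by CPA security (for the confidentiality-masked entries) together with the injectivity of the version tagging. For the \rulename{Low-To-High-To-Low-Step} case, I would further observe that a high-step segment (where $\cantFlowTo{\pcOf{c_i}}{\lowlabel}$) produces no observable interactions distinguishable to the adversary — the second clause of the configuration low-equivalence relation (both current labels not flowing to $\lowlabel$) is an invariant across such segments — so the net effect on the interaction distribution is again $\asymp$-preserving, and the product of step probabilities matches up to negligible error.

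The main obstacle I anticipate is precisely the probability-matching for the fetch rules across two independently sampled keystores: the clean PL-style preservation argument gives $\lowEquivC$ and $\asymp$ on the \emph{syntactic/structural} level, but showing $p_1 = p_2$ (hence $\versions_1 = \versions_2$ except with negligible probability) requires reasoning that the adversary cannot engineer a differential in which fetch rule fires, and this must be reduced carefully to CPA security and to a "no forgery of valid serializations" sub-argument — essentially reusing, in miniature, the machinery behind the Leveraged Forgery theorem. I would isolate this as a separate claim ("the probability that the two runs disagree on which fetch rule fires at any of the $j$ steps is negligible in $n$"), prove it by a hybrid over the $j$ steps and over the at-most-polynomially-many store indices, and then the rest of the induction goes through routinely by combining sublemmas (i) and (ii) with Lemma~\ref{lemma:clio-indist}.
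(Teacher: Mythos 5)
Your proposal follows essentially the same route as the paper's proof: a single-step preservation argument for the store and fetch cases (deferring to LIO's preservation result for the remaining rules), lifted to the probabilistic low-step relation by induction on the number of low steps, with the interaction-indistinguishability lemma (Lemma~\ref{lemma:clio-indist}) used both to show the strategy's contributions stay $\asymp$-related and to argue that the probabilities of which fetch rule fires must agree up to negligible error, since otherwise one could build a distinguisher for the interaction distributions. The only notable (and harmless) difference is that you isolate the probability-matching step as an explicit hybrid claim, whereas the paper argues it directly from indistinguishability and additionally observes that for secret-labeled entries the two runs may legitimately fire \emph{different} fetch rules while still landing in low-equivalent configurations, so your phrasing that ``the same rule fires'' should be weakened to cover that case.
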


\begin{proof} We will prove this lemma in two steps: first by showing that the invariant is preserved across \clio steps $\rto{p}^*$ and then using that fact, we can show that the invariant is also preserved across \clio low steps $\Lowstep{p}$. %Our invariant, which will serve as our inductive hypothesis  is the same as our 
%is slightly stronger than the lemma. That is, we will be showing that the probability of $\interactionsvar{\interaction_1} \not\sizeEquiv \interactionsvar{\interaction_2} \figInteractionLemmaIH$ is $0$.

\bigskip
\bigskip

{\bf Proof on Step relation $\rto{p}$:} We will perform induction on the derivation of the steps (which will be finite when used with the low-step rules, i.e., it is well-founded) with the number of steps being $k$ being 1 less than the total number of (possibly high) steps in the context of a single low step.

\bigskip
%\bigskip
%\emph{(proof continued on next page...)}

%\clearpage 

Our inductive hypothesis will be if $\rconf{c_1}{\emptyset}{\skipcmd}{\versions_1} \lowEquivC \rconf{c_2'}{\emptyset}{\skipcmd}{\versions_2}$ and $\cryptosys$ is CPA Secure, then,

% $ 
% \hspace*{3cm} \mathbf{Pr}\Big[  
%  \rconf{c_1'}{\istoredist_1}{\interactionsdist_1}{\versions_1}
%     \lowEquivD
%     \rconf{c_2'}{\istoredist_2}{\interactionsdist_2}{\versions_2}
% ~~\Big| \\ 
% \hspace*{4cm} 
%   \keystore' \drawnFrom \Gen(\principals, 1^n); \\
%       \hspace*{4cm} \keystore = \keystore_0 \uplus \keystore'; \\
% \hspace*{4cm}
%   \rconf{c_1}{\emptyset}{\skipcmd}{\zeroes} \rto{p_1} ... \rto{p_k} 
%   \rconf{c_1'}{\istoredist_1}{\interactionsdist_1}{\versions_1}; \\
% \hspace*{4cm}
%   \rconf{c_2}{\emptyset}{\skipcmd}{\zeroes} \rto{p_1'} ... \rto{p_{k'}'} 
%     \rconf{c_2'}{\istoredist_2}{\interactionsdist_2}{\versions_2}
% \Big] = 1
% $

% then,

% $ 
% \hspace*{3cm} \mathbf{Pr}\Big[  
%   \rconf{c_1'}{\istoredist_1}{\interactionsdist_1}{\versions_1}
%     \lowEquivD
%     \rconf{c_2'}{\istoredist_2}{\interactionsdist_2}{\versions_2}
% ~~\Big| \\ 
% \hspace*{4cm} 
%   \keystore' \drawnFrom \Gen(\principals, 1^n); \\
%       \hspace*{4cm} \keystore = \keystore_0 \uplus \keystore'; \\   
% \hspace*{4cm}
%   \rconf{c_1}{\emptyset}{\skipcmd}{\zeroes} \rto{p_1} ... \rto{p_k} 
%   \rconf{c_1'}{\istoredist_1}{\interactionsdist_1}{\versions_1} \rto{p_{k+1}} 
%   \rconf{c_1''}{\istoredist_1'}{\interactionsdist_1'}{\versions_1'}; \\
% \hspace*{4cm}
%   \rconf{c_2}{\emptyset}{\skipcmd}{\zeroes} \rto{p_1'} ... \rto{p_k'} 
%     \rconf{c_2'}{\istoredist_2}{\interactionsdist_2}{\versions_2} \rto{p_{k'+1}'} 
%     \rconf{c_2''}{\istoredist_2'}{\interactionsdist_2'}{\versions_2'}
% \Big] = 1
% $

$ 
  \hspace*{1cm} \textbf{Pr}\Big[  
    \rconf{c_1'}{\istoredist_1}{\interactionsdist_1}{\versions_1}
    \not\lowEquivC
    \rconf{c_2'}{\istoredist_2}{\interactionsdist_2}{\versions_2}
    \textor
    \versions_1 \neq \versions_2
  ~~\Big| ~\newline \hspace*{2cm}
     ~\keystore \drawnFrom \Gen(\principals, 1^n); ~
      \keystore_1 = \keystore_0 \uplus \keystore; ~
   \rconf{c_1}{\emptyset}{\skipcmd}{\zeroes} \rto{p_1} ... \rto{p_k} 
    \rconf{c_1'}{\istoredist_1}{\interactionsdist_1}{\versions_1};  \newline
  \hspace*{2cm}
    \keystore' \drawnFrom \Gen(\principals, 1^n); ~
      \keystore_2 = \keystore_0 \uplus \keystore'; ~
    \rconf{c_2}{\emptyset}{\skipcmd}{\zeroes} \rto{p_1'} ... \rto{p_{k'}'} 
     \rconf{c_2'}{\istoredist_2}{\interactionsdist_2}{\versions_2}
  \Big]$ \\ ~\\
  \hspace*{0.5cm} is negligible in $n$, and  \newline ~ \newline
  $\hspace*{0cm} \Big\{  ~~ \interactions_1 ~~ \Big| ~~
   \keystore' \drawnFrom \Gen(\principals, 1^n); ~
   \keystore = \keystore_0 \uplus \keystore'; ~
  \rconf{c_1}{\emptyset}{\skipcmd}{\zeroes} \rto{p_1} ... \rto{p_k} 
    \rconf{c_1'}{\istoredist_1}{\interactionsdist_1}{\versions_1}; ~ \interactions_1 \drawnFrom \interactionsdist_1 ~ ~ \Big\}_n \newline
    \hspace*{7cm} \asymp \newline
    \hspace*{0.5cm} \Big\{  ~~ \interactions_2 ~~ \Big| ~~
   \keystore' \drawnFrom \Gen(\principals, 1^n);~ 
   \keystore = \keystore_0 \uplus \keystore';~
  \rconf{c_2}{\emptyset}{\skipcmd}{\zeroes} \rto{p_1'} ... \rto{p_{k'}'} 
     \rconf{c_2'}{\istoredist_2}{\interactionsdist_2}{\versions_2}; ~ \interactions_2 \drawnFrom \interactionsdist_2 ~ \Big\}_n $
   % \\ \\
   % \hspace*{1cm} and, \\ \\
   % $\hspace*{2cm} \Big\{  ~~ \istoredist_1 ~~ \Big| ~~
   % \keystore' \drawnFrom \Gen(\principals, 1^n); ~
   % \keystore = \keystore_0 \uplus \keystore'; ~
   % \rconf{c_1}{\emptyset}{\skipcmd}{\zeroes} \rto{p_1} ... \rto{p_k} 
   %  \rconf{c_1'}{\istoredist_1}{\interactionsdist_1}{\versions_1} ~  \Big\}_n \\
   %  \hspace*{7cm} \approx \\
   %  \hspace*{2cm} \Big\{  ~~ \istoredist_2 ~~ \Big| ~~
   % \keystore' \drawnFrom \Gen(\principals, 1^n);~ 
   % \keystore = \keystore_0 \uplus \keystore';~
   % \rconf{c_2}{\emptyset}{\skipcmd}{\zeroes} \rto{p_1'} ... \rto{p_{k'}'} 
   %   \rconf{c_2'}{\istoredist_2}{\interactionsdist_2}{\versions_2} ~ \Big\}_n $

\bigskip

. We must show as our inductive step that if our inductive hypothesis is true,
that the following is true:

\bigskip

$ 
  \hspace*{0.5cm} \textbf{Pr}\Big[  
    \rconf{c_1''}{\istoredist_1'}{\interactionsdist_1'}{\versions_1'}
    \not\lowEquivC
    \rconf{c_2''}{\istoredist_2'}{\interactionsdist_2'}{\versions_2'}
  ~~\Big| \newline \hspace*{2cm}
     ~\keystore \drawnFrom \Gen(\principals, 1^n); ~
      \keystore_1 = \keystore_0 \uplus \keystore; ~
   \rconf{c_1}{\emptyset}{\skipcmd}{\zeroes} \rto{p_1} ... \rto{p_k} 
    \rconf{c_1'}{\istoredist_1}{\interactionsdist_1}{\versions_1}  \rto{p_{k+1}} 
  \rconf{c_1''}{\istoredist_1'}{\interactionsdist_1'}{\versions_1'};  \newline
  \hspace*{2cm}
    \keystore' \drawnFrom \Gen(\principals, 1^n); ~
      \keystore_2 = \keystore_0 \uplus \keystore'; ~
    \rconf{c_2}{\emptyset}{\skipcmd}{\zeroes} \rto{p_1'} ... \rto{p_{k'}'} 
     \rconf{c_2'}{\istoredist_2}{\interactionsdist_2}{\versions_2}
      \rto{p'_{k'+1}} 
   \rconf{c_2''}{\istoredist_2'}{\interactionsdist_2'}{\versions_2'}
  \Big]$ \newline~ \newline~
  \hspace*{0.5cm} is negligible in $n$, and \newline~ \newline~
  $\hspace*{0cm} \Big\{  ~~ \interactions_1' ~~ \Big| ~~
   \keystore' \drawnFrom \Gen(\principals, 1^n); ~
   \keystore = \keystore_0 \uplus \keystore'; ~
  \rconf{c_1}{\emptyset}{\skipcmd}{\zeroes} \rto{p_1} ... \rto{p_k} 
    \rconf{c_1'}{\istoredist_1}{\interactionsdist_1}{\versions_1}  \rto{p_{k+1}} 
   \rconf{c_1''}{\istoredist_1'}{\interactionsdist_1'}{\versions_1'}; ~ \interactions'_1 \drawnFrom \interactionsdist'_1 ~ \Big\}_n \newline~
    \hspace*{8cm} \asymp \newline~
    \hspace*{0.5cm} \Big\{  ~~ \interactions'_2 ~~ \Big| ~~
   \keystore' \drawnFrom \Gen(\principals, 1^n);~ 
   \keystore = \keystore_0 \uplus \keystore';~
  \rconf{c_2}{\emptyset}{\skipcmd}{\zeroes} \rto{p_1'} ... \rto{p_{k'}'} 
     \rconf{c_2'}{\istoredist_2}{\interactionsdist_2}{\versions_2} 
      \rto{p'_{k'+1}} 
   \rconf{c_2''}{\istoredist_2'}{\interactionsdist_2'}{\versions_2'}; ~ \interactions'_2 \drawnFrom \interactionsdist'_2  ~ \Big\}_n $
   % \\ \\
   % \hspace*{0.5cm} and, \\ \\
   % $\hspace*{1cm} \Big\{  ~~ \istoredist_1' ~~ \Big| ~~
   % \keystore' \drawnFrom \Gen(\principals, 1^n); ~
   % \keystore = \keystore_0 \uplus \keystore'; ~
   % \rconf{c_1}{\emptyset}{\skipcmd}{\zeroes} \rto{p_1} ... \rto{p_k} 
   %  \rconf{c_1'}{\istoredist_1}{\interactionsdist_1}{\versions_1} ~  
   %  \rto{p_{k+1}} 
   % \rconf{c_1''}{\istoredist_1'}{\interactionsdist_1'}{\versions_1'} ~   \Big\}_n \\
   %  \hspace*{8cm} \approx \\
   %  \hspace*{1cm} \Big\{  ~~ \istoredist_2' ~~ \Big| ~~
   % \keystore' \drawnFrom \Gen(\principals, 1^n);~ 
   % \keystore = \keystore_0 \uplus \keystore';~
   % \rconf{c_2}{\emptyset}{\skipcmd}{\zeroes} \rto{p_1'} ... \rto{p_{k'}'} 
   %   \rconf{c_2'}{\istoredist_2}{\interactionsdist_2}{\versions_2} ~ 
   %    \rto{p'_{k'+1}} 
   % \rconf{c_2''}{\istoredist_2'}{\interactionsdist_2'}{\versions_2'}  ~ \Big\}_n $

\bigskip

% The step relation does not affect the interactions so we will not consider that case in the single step relation, but will defer that to the low step relation (particularly the \rulename{Store} operation that updates the interactions made by \clio).

The base case $k=0$ is direct as the initial interactions are a special case of \clio interactions (i.e., $\skipcmd = \skipcmd$) and $\zeroes = \zeroes$ and we already know by supposition that $c_1 \lowEquivC c_2$.
%so the we can use the \rulename{Low-Well-Formed} rule to show well-formedness along with \rulename{Low-Program-Text}. For $\wf{c_1'}$ and $\wf{c_2'}$ we can use the \rulename{Low-Program-Text} derived rule to construct a well-formed configuration for both.

For the inductive case, we now consider the derivation rule used for the $k+1$'th step and show that it preserves the inductive hypothesis, assuming it for the $k$'th step. %Since we know the configurations $c_1'$ and $c_2'$ are well-formed, their ground values that are readable by the adversary are derived and also they must have the same syntactic structure from being low equivalent initially. 

We note that the single steps may take differing numbers of steps (i.e., $k$ and $k'$). Due to the \rulename{Low-to-High-to-Low} step rule, though, these differences only occur when $\lcur$ is high. As a result, the only invariant we need to preserve is confidentiality-only low equivalence between configurations as the high steps do not change the versions, stores, or interactions. We can appeal to the preservation of low equivalence of LIO proved by Stefan et al.~\cite{stefan:2011:flexible} to conclude the preservation confidentiality-only low equivalence of the standard (i.e., non-store and non-fetch) LIO internal steps. We now consider the low steps that affect the non-standard parts of LIO (i.e., the store and fetch commands).

We also note that since there are only a polynomial number of steps that the resulting sequences of configurations from single steps that do not preserve low equivalent in each step will still together be negligible. As a result, we only need to show that the probability of each step not preserving low equivalence is negligible in $n$. To that end, we will ignore the traces of steps with negligible probabilities that are not low equivalent.

\begin{itemize}

\item {\bf Case Fetch (\rulename{Fetch-Exists}, \rulename{Fetch-Missing}, or \rulename{Fetch-Replay}):}

In this case, both configurations have a term with an evaluation context hole that is at a fetch command. That is, they are both attempting to fetch an entry from the store with key $\gv_k$ and $\gv_k'$. Due to low equivalence they are both fetching the same key, $\gv_k = \gv_k'$. As a result, they must each be using one of the following rules: \rulename{Fetch-Exists}, \rulename{Fetch-Missing}, or \rulename{Fetch-Replay}. %In each case the current labels and clearances of the configuration do not change, we can ignore the labels of the configuration. %In all of these cases, we also note that $\interactionsvar{\interaction_1'} = \interactionsvar{\interaction_1}$
%and $\interactionsvar{\interaction_2'} = \interactionsvar{\interaction_2}$ and since $\interactionsvar{\interaction_1} \sizeEquiv \interactionsvar{\interaction_2}$ from the inductive hypothesis, then $\interactionsvar{\interaction_1'} \sizeEquiv \interactionsvar{\interaction_2'}$. 

We know by our inductive hypothesis that the distributions of interactions are \clio equivalent. Because the distributions are equal, we can consider steps where the draws are equivalent for values in the erased distributions. %Because the distributions are equivalent, we will be exhaustively considering all cases. 
We now consider now each case that $c_1'$  transitions with.

\begin{itemize}
\item {\bf Case } \rulename{Fetch-Exists} and $\canFlowToC{\projC{l_1}}{\projC{\lowlabel}}$: In this case the labeled value will be deserialized the same way and the same labeled value will be fetched. Since the labeled value is readable by the adversary it must be syntactically equivalent with all but negligible probability, otherwise the interactions would be distinguishable which would be a counter-example to Lemma~\ref{lemma:clio-indist}. As a result, low equivalence will be preserved and both configurations will transition in the same way with all but negligible probability.

\item {\bf Case } \rulename{Fetch-Exists} and $\cantFlowToC{\projC{l_1}}{\projC{\lowlabel}}$: %In this case the labeled value will be deserialized the same way and the differing labeled value will be fetched in general. %In general the % However this is fine because secret values are allowed to differ in confidentiality-only low equivalence. 
In general the value fetched from the store will vary, or it may be the case that only some of the time a value can even be deserialized.
In these cases, the configurations may transition using this rule and other interactions from the distribution may result in it using another rule. 
However, if $c_2'$ transitions using another fetch rule (\rulename{Fetch-Missing} or \rulename{Fetch-Replay}), 
the default value will be used. Since secret values can differ and still be low equivalent, 
the resulting two configurations will still be low equivalent.  In each of these cases, 
no new interactions are produced so the resulting distributions of interactions are still valid \clio interactions by our inductive hypothesis (as they did not change). 
As a result, low equivalence and the valid interactions invariant is preserved.
%\todo{This para unclear. Just some clarification to the words would be good.}
%Note that both configurations may 
%Both configurations will transition with the same probability $p$ as the distributions of stores are equivalent.

\item {\bf Case } \rulename{Fetch-Missing} or \rulename{Fetch-Replay}: In these cases, the default labeled value will be used, which by our inductive hypothesis is already low equivalent (due to the configurations being low equivalent).
The other configurations will transition in a symmetric way described for the \rulename{Fetch-Exists} rule.

\end{itemize}

\item {\bf Case \rulename{Store}:}

In this case the distribution of stores and interactions will change
so we must show that they remain equivalent. That is, we must show that $\{ ~ \interactions'_1 ~|~ \interactions'_ \drawnFrom \interactionsdist'_1 \}_n \asymp \{ ~ \interactions'_2 ~|~ \interactions'_2 \drawnFrom \interactionsdist'_2 \}$ where %and $\istoredist'_1 = \istoredist'_2$ where

$$
\begin{array}{l<{\hspace{-2mm}}c<{\hspace{-2mm}}l<{\hspace{-2mm}}l}
    \interactionsdist_1' &=& \big\{ ~ & \concat{\concat{\storecmd{\gv_k}{\Labeled{l_1}{\bitstream_1}}}{\interactionsprime_1}}{\interactions_1} ~ ~ \Big|~ ~
      \interactions_1 \drawnFrom \interactionsdist_1;  \arcr
      %\probabilityOf{(\interactionsprime, \Labeled{l_1}{\bitstream}) = 
        &&&
        (\interactionsprime_1, \Labeled{l_1}{\bitstream_1}) \drawnFrom \serialize{\istore_1}{\Labeled{l_1}{(\gv, \gv_k, \version_1)}}
         ~ \big\} \arcr
      % \istoredist'_1 &=& \big\{~ & \interactions_1(\sigma_1) ~\Big|~ \istore_1 \drawnFrom \istoredist_1;~ \interactions_1 \drawnFrom \interactionsdist_1 ~ \big\}
      \\
      \interactionsdist_2' &=& \big\{ ~ & \concat{\concat{\storecmd{\gv_k}{\Labeled{l_1}{\bitstream_2}}}{\interactionsprime_2}}{\interactions_2} ~ ~ \Big|~ ~
      \interactions_2 \drawnFrom \interactionsdist_2;  \arcr
      %\probabilityOf{(\interactionsprime, \Labeled{l_1}{\bitstream}) = 
        &&&
        (\interactionsprime_2, \Labeled{l_1}{\bitstream_2}) \drawnFrom \serialize{\istore_2}{\Labeled{l_1}{(\gv, \gv_k, \version_2)}}
         ~ \big\} \arcr
      % \istoredist'_2 &=& \big\{~ & \interactions_2(\sigma_2) ~\Big|~ \istore_2 \drawnFrom \istoredist_2;~ \interactions_2 \drawnFrom \interactionsdist_2 ~ \big\}
    \end{array} 
$$

We first note that the entry keys are the same from low equivalence. The versions are equal from \clio low equivalence, i.e., $\version_1 = \version_2$. We also note that the distributions of interactions are valid \clio interactions from our inductive hypothesis. For readable labeled values, we can conclude that they are  syntactically equivalent values from low equivalence. For non-readable values, the types of the secret values will be the same due to low equivalence (and so the serialized plaintext message will have the same length). As a result the $\storecmd{\gv_k}{\Labeled{l_1}{b_{\{0,1\}}}}$ will be a valid extension of valid \clio interactions.

We next consider the creation of category keys (i.e.,
$\interactionsprime_1$ and $\interactionsprime_2$). The initialization of category keys will behave the same way as described for fetching a labeled value: it will either create new keys (if they were corrupted or not there), or $\skipcmd$. It will do this in the same way as the resulting interactions are indistinguishable. For the contents of the category keys, we can divide the parts of the category into deterministic parts (i.e., from Lemma~\ref{lemma:round2}) and secret encryptions. 

With these considerations, we conclude that with all but negligible probability the resulting interactions will be valid \clio interactions.

% Since the draws are from equivalent distributions and syntactically equivalent inputs, we can conclude the interactions are equivalent, i.e., $\eraseterm(\interactionsdist'_1) = \eraseterm(\interactionsdist_2')$.
% %
% So, the resulting distributions of interactions are equivalent, the updated distributions of stores produced are equivalent.

Finally, for the versions mappings, we note that they are both updated equivalently (i.e., incremented by the version in the mapping) and the versions mappings originally were equal, so the resulting versions are equal.

% We must show that the messages serialized have the same types. We know this by type soundness, i.e., $\typeOf{\gv_1} = \typeOf{\gv_2}$. By inspection of the serialized value, we get that $\Enc_\keystore(\istore_1, l_1, \gv_2)$ and $\Enc_\keystore(\istore_2, l_1, \gv_2)$ so we can show:

% $$
% {\storecmd{\gv_k}{\Labeled{l_1}{\Enc_\keystore(\istore_1, l_1, \gv_1)}}}
% \sizeEquiv
% {\storecmd{\gv_k}{\Labeled{l_1}{\Enc_\keystore(\istore_2, l_1, \gv_2)}}}
% $$

% Intuitively, this matches our intuition: these values are either not readable by the adversary or were affected by the adversary (and will be derived). In the case where the values are not readable, the only thing the adversary can notice is the lengths of the messages, so we only need to make sure the types are equivalent (as the length of the ciphertext is a function of the serialized value's type). In the case where the values are derived, the adversary could influence the value in the first place, so they may in general differ.

\end{itemize}

%\end{itemize}

With all cases of the reduction shown to satisfy the proof obligation, we can conclude the inductive hypothesis is true for all steps used in the context of a single low step. We next show the low equivalence invariant on the low step relation.
%and well-formedness on the low step relation using a separate inductive hypothesis (which, in the proof of that leverages this proof).

\bigskip

{\bf Proof on Low-step relation $\Lowstep{p}$:} by induction on the number of low steps $j$. Our inductive hypothesis will match our lemma. %be slightly stronger than the final lemma statement. That is, for
For \figInteractionLemma{\figInteractionLemmaIH}

\bigskip

\begin{itemize}

\item {\bf Base Case: $j = 1$:} That is, we will prove the following:

\figInteractionProbStatement{1}{\figInteractionLemmaIH}

% We can expand the $\multistepnm$ metafunction to be 

% $$
% \big(\rconf{c_b'}{\istoredist_b'}{\interactionsdist_b}{\versions_b}, p_b \cdot p_b'\big)
%     ~\Big|~ 
%     \big(\interactions, p_b\big) \in \strategy(\emptyset); ~ 
%     (\rconf{c_b}{\emptyset}{\skipcmd}{\zeroes}, \interactions)
%       \Lowstep{p_b'}
%     \rconf{c_b'}{\istore_b'}{\interactionsvar{\interaction_b}}{\versions_b}
%     $$

%for $b \in \{0,1\}$ and now w
We must show $\rconf{c_1'}{\istoredist_1}{\interactionsdist_1}{\versions_1}
    \lowEquivC
    \rconf{c_2'}{\istoredist_2}{\interactionsdist_2}{\versions_2} \textor \versions_1 \neq \versions_2$. 
    There are two cases we must consider in the low step relation, the \rulename{Low-Step} rule and the \rulename{Low-to-High-to-Low-Step} rule. In the \rulename{Low-Step}, we must show that the inductive hypothesis holds after a single \clio step $\rto{p}$, and for the \rulename{Low-to-High-to-Low-Step} rule must hold for many (finite) \clio steps $\rto{p}^*$. Note that the \rulename{Low-Step} rule is a special case of the \rulename{Low-to-High-to-Low-Step} rule so we only consider the more general case of preserving the invariant across many steps. To show this, we appeal to the previous proof made to show that the invariant is preserved across \clio steps. 

%     As a result, we have that:

% \medskip

% $ 
% \hspace*{3cm} \mathbf{Pr}\Big[  
%   \rconf{c_1'}{\istoredist_1}{\interactionsdist_1}{\versions_1}
%     \lowEquivD
%     \rconf{c_2'}{\istoredist_2}{\interactionsdist_2}{\versions_2}
% ~~\Big| \\ 
% \hspace*{4cm} 
%   \keystore' \drawnFrom \Gen(\principals, 1^n); \\
%       \hspace*{4cm} \keystore = \keystore_0 \uplus \keystore'; \\
%       \hspace*{4cm} \interactionsdist \drawnFrom \strategy(\skipcmd); \\
% \hspace*{4cm}
%   \rconf{c_1}{\istoredist}{\interactionsdist}{\zeroes} \rto{p_1} ... \rto{p_k} 
%   \rconf{c_1'}{\istoredist_1}{\interactionsdist_1}{\versions_1}; \\
% \hspace*{4cm}
%   \rconf{c_2}{\istoredist}{\interactionsdist}{\zeroes} \rto{p_1'} ... \rto{p_k'} 
%     \rconf{c_2'}{\istoredist_2}{\interactionsdist_2}{\versions_2}
% \Big] = 0$

Unlike the single step relation, this includes a strategy interaction on the distribution of stores. Since both interactions receive the indistinguishable  distributions of interactions (from Lemma~\ref{lemma:clio-indist} and the inductive hypothesis) so the resulting distributions from the strategy will also be computationally indistinguishable. That is because if they were not, then the strategy itself could be used as a counter-example for Lemma~\ref{lemma:clio-indist}. In sum, the resulting strategy invocation results in a valid sequence of \clio interactions. %So the strategy interaction preserves  on the configurations.

From the previous proof on the single-step relation, we can conclude that \\ \rconf{c_1'}{\istoredist_1}{\interactionsdist_1}{\versions_1}
    \lowEquivC
    \rconf{c_2'}{\istoredist_2}{\interactionsdist_2}{\versions_2}. 
    As a result, we satisfy the inductive hypothesis.

\item {\bf Inductive Case} $j=k+1$: That is, we will prove the following:

\figInteractionProbStatement{k+1}{\figInteractionLemmaIH}

% We can expand the definition of the step metafunction to get that:

% $ \hspace*{4cm}
% \big(\rconf{c_b''}{\istore_b''}{\interactionsvar{\interaction_b'}}{\versions_b'}, p_b \cdot p_b' \cdot p_b''\big)
%     \\ \hspace*{4.5cm} \mathrm{where} \\ \hspace*{5cm}
%     \big(\rconf{c_b'}{\istore_b'}{\interactionsvar{\interaction_b}}{\versions_b}, p_b\big) \in \multistep{\keystore}{c_b}{\strategy}{k}; \\ \hspace*{5cm}
%     \big(\interactionsvar{\interaction_b'}, p_b'\big) \in \strategy(\istore_b'); \\ \hspace*{5cm}
%     (\rconf{c_b'}{\istore_b'}{\interactionsvar{\interaction_b}}{\versions_b}, 
%     {\interactionsvar{\interaction_b'}})
%       \Lowstep{p_b''}
%     \rconf{c_b''}{\istore_b''}{\interactionsvar{\interaction_b'}}{\versions_b'}
%     $

% % $
% % \multistep{\keystore}{c_0}{\strategy}{k+1} = \Big\{
% %     \big(\rconf{c_2}{\istore_2}{\interactionsvar{\interaction_2}}{\versions_2}, p_0 \cdot p_1 \cdot p_2\big)
% %     ~\Big|~ 
% %     \hspace*{0.4cm}
% %      \\
% %     \hspace*{0.4cm}
% %     \big(\interactions, p_1\big) \in \strategy(\istore_1); 
% %     \hspace*{0.4cm}
% %     \rconf{c_1}{\istore_1}{\interactionsvar{\interaction_1}}{\versions_1}
% %       \Lowstep{\interactions}{p_2}
% %     \rconf{c_2}{\istore_2}{\interactionsvar{\interaction_2}}{\versions_2}
% %   \Big\}
% %   $

% for $b \in \{0,1\}$. We know by our inductive hypothesis that %$c_1' \lowEquiv c_2'$ and $\istore_1 = \istore_2$ and that $\istore_1' \lowEquiv \istore_2'$ and that 
% $\interactionsvar{\interaction_1} \sizeEquiv \interactionsvar{\interaction_2}$ and that $\wf{c_1'}$ and $\wf{c_2'}$. 
We now must show that, for any low equivalent configurations that the resulting single step will remain low equivalent. 
We can use the same reasoning from the base case to show that the adversary interaction preserves equivalence on distributions of stores.
%The interactions can result in two events: the interaction results in a valid serialization, or it results in an invalid serialization. In both cases, as there is no restriction on the stores, the interaction has no effect on the invariant.
%
After this adversary interaction, we can invoke the single-step lemma result here to conclude that $c_1'' \lowEquiv c_2''$. 
% $\interactionsvar{\interaction_1'} \sizeEquiv \interactionsvar{\interaction_2'}$
% and that $\wf{c_1''}$ and $\wf{c_2''}$. 

%Intuitively, this is because the result on the single step relation accounts for any arbitrary interaction by the adversary
\end{itemize}

With the single low step relation handled we now must consider the distribution of distributions of interactions from the $\multistepnm$ function. For example, it may be the case that a particular distribution of interactions generated from one trace of low steps may be much more likely than another distribution of interactions. However, we can use the preservation of low equivalence to reason about the probabilities of corresponding low equivalent distributions of interactions. %We show the relationship between low equivalence of low steps between configurations by 
We consider the distribution formed from the step function after 1 low step as a running example to make our arguments concrete, shown graphically in the main matter in \Cref{subsec:indist}.

From our inductive hypothesis we know that corresponding low equivalent configurations have indistinguishable distributions of interactions. We now must consider the relationship between the probabilities that led to the corresponding low equivalent configurations (e.g., from the diagram $p_1$ and $p_2$, and also $p_1'$ and $p_2'$). If they are similar, then the resulting draws from the distributions will be similar (from low equivalence). 

Consider the pairs of low equivalent configurations and the probabilities that led to those configurations (e.g., from the diagram $\rconf{c_1'}{\istoredist_1}{\interactionsdist_1}{\versions_1}$ with probability $p_1$ and $\rconf{c_2'}{\istoredist_2}{\interactionsdist_2}{\versions_2}$ with probability $p_2$). 
Consider the ways the configurations can differ probabilistically (e.g., from the diagram, how $c_1$ steps to both $c_1'$ and $c_1''$ and how $c_2$ steps to both $c_2'$ and $c_2''$). 
The low step relation is just the probability of the trace of single steps leading to the next low \clio configuration. The \rulename{Store} and \rulename{Internal-Step} rules take steps with probability $1$ so they will not cause the low step to differ probabilistically. 

Indeed, only the fetching rules \rulename{Fetch-Exists}, \rulename{Fetch-Missing}, and \rulename{Fetch-Replay} rules will cause the configurations to differ probabilistically. In particular they will differ based on the interactions drawn,  and as a result differ on how those interactions affect the fetch: if the entry is missing or not deserializeable (\rulename{Fetch-Missing}), if the value can be deserialized but the version is old (\rulename{Fetch-Replay}), or if it was successfully deserialized and the version is not old (\rulename{Fetch-Valid}). 

Due to our inductive hypothesis we know that the distributions of interactions are valid \clio interactions and as a result are indistinguishable from Lemma~\ref{lemma:clio-indist}. For readable labeled values, the configurations will step with the same probability in lock-step with all but negligible probability, as the readable labeled values will be syntactically equivalent (as the distributions of erased stores are equivalent).

In the case where the label of the labeled value is not readable, the rules used to step may not be the same as they are the results of encrypted values. 
For example, in one configuration a labeled value may be successfully fetched (using \rulename{Fetch-Valid}) but not in the corresponding configuration (e.g., \rulename{Fetch-Missing} was used). However, as noted above and by our inductive hypothesis, the different rules used will all step to a low equivalent configuration. 
In addition, though, to the configurations being low equivalent, it is also the case that the sums of the probabilities of all steps taken will be equivalent with all but negligible probability. For example, if $c_1$ steps using \rulename{Fetch-Missing} with probability $p_1$, and \rulename{Fetch-Valid} with probability $p_2$, it is also the case that $c_2$ will use the same rules \rulename{Fetch-Missing} with probability $p_1$ and \rulename{Fetch-Valid} with probability $p_2$ due to indistinguishability of the interactions. That is because if it did not, then an adversary could be constructed to distinguish the interactions based on the proportions of rules used by the \clio semantics. Intuitively, the draws of indistinguishable interactions will produce distributions of indistinguishable steps.
% And since the distribution of stores and interactions were not changed in these steps, the overall probability density over the distributions of interactions is preserved despite different fetch rules being used. 
 %We can ignore the differences in the secret labeled values for the following reason: low equivalence allows secret values to vary. If the fetch command is attempting to fetch a secret value and later unlabels it, the configurations will stay low equivalent (and thus the configurations will not differ probabilistically). Because of this, we only need to consider the readable labeled values. However, from low equivalence those parts of the store are equivalent as the erased distributions of stores are equivalent. In sum, the variations of the steps are based only on readable labaled values. The draws of readable labeled values are from equivalent distributions, so they will produce equivalent distributions of steps. 

With this reasoning, we conclude that the probabilities of each corresponding single step taking place will be equal (e.g., in the diagram above, $p_1 = p_2$ and $p_2 = p_2'$). So, the resulting distribution of distributions over interactions will be still be valid \clio interactions and so the $\asymp$ relation holds (and, by Lemma~\ref{lemma:clio-indist}, they are also indistinguishable as a result).

\end{proof}

\clearpage

\subsection{Indistinguishability Proof}
\label{appendix:indist}

% Before proving the main security theorem, we show the main lemma which proves an invariant between the sequences of interactions (namely, size equivalence) of two  terms (that were originally confidentiality-only low equivalent) taking steps according to the $\multistepnm$ metafunction.

\bigskip

\begin{definition}[Chosen-Term Attack (CTA) Game]
Let the random variable $\IND{b}{\keystore}{\adversary}{\principals}{j}{n}$ denote 
the output of the following experiment, where 
$\cryptosys = (\Gen, \Enc, \Dec, \Sign, \Verify)$, 
$\adversary$ is a non-uniform ppt, $n \in \nat$, $b \in \{0,1\}$:

$$
\figCTAGame
$$

We say that \clio using \cryptosys is CTA (Chosen-Term Attack) Secure if for all
non-uniform ppt $\adversary$, $j \in \nat$, keystores \keystore, and principals $\principals$:
$$
  \distfam{\IND{0}{\keystore}{\adversary}{\principals}{j}{n}} \cequiv
  \distfam{\IND{1}{\keystore}{\adversary}{\principals}{j}{n}} 
$$.
\end{definition}

\begin{theorem}[Indistinguishability Theorem]
If \cryptosys if CPA Secure, then \clio using \cryptosys is CTA Secure.
\end{theorem}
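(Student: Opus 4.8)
The plan is to unfold the CTA game and reduce the statement, via the lemmas established above, to a single application of the Preservation of Low Equivalence lemma (Lemma~\ref{lemma:interaction}) together with the Round~3 indistinguishability lemma (Lemma~\ref{lemma:clio-indist}). First I would observe that the experiments $\IND{0}{\keystore}{\adversary}{\principals}{j}{n}$ and $\IND{1}{\keystore}{\adversary}{\principals}{j}{n}$ are syntactically identical up to the point where the program is executed: the adversary $\adversary$ produces the same $t, v_0, v_1, \strategy, \adversary_2$ subject to $v_0 \lowEquivC v_1$, $\vdash t : \LabeledT{\tau} \to \LIOT{\tau'}$, $\vdash v_b : \LabeledT{\tau}$, and $\lowlabel = \authorityOf{\keystore_0}$, and each experiment then draws a fresh keystore $\keystore = \keystore_0 \uplus \keystore'$ and runs $\multistep{\keystore}{\conf{\Start{\keystore}}{\Clr{\keystore}}{(t\ v_b)}}{\strategy}{j}$. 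Since $t$ is a fixed closed term applied to both inputs and $\lowEquivC$ is a congruence, $(t\ v_0) \lowEquivC (t\ v_1)$, and since both runs start from the same current label $\Start{\keystore}$ and clearance $\Clr{\keystore}$, we obtain $\conf{\Start{\keystore}}{\Clr{\keystore}}{(t\ v_0)} \lowEquivC \conf{\Start{\keystore}}{\Clr{\keystore}}{(t\ v_1)}$. The key point is that $\lowEquivC$ here is taken with respect to the store level $\lowlabel = \authorityOf{\keystore_0}$ fixed by the game, which is precisely the authority available to the adversary, so the game's hypothesis on $v_0, v_1$ is exactly the premise required downstream.

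Next I would instantiate Lemma~\ref{lemma:interaction} with $c_1 = \conf{\Start{\keystore}}{\Clr{\keystore}}{(t\ v_0)}$, $c_2 = \conf{\Start{\keystore}}{\Clr{\keystore}}{(t\ v_1)}$, the common strategy $\strategy$, principals $\principals$, and step count $j$. The initial version maps are both $\zeroes$, the initial interaction distributions are both the point mass on $\skipcmd$ (hence trivially related by $\asymp$), and the two independent keystore draws appearing in the lemma's statement match the independent keystore draws of the two experiments. The lemma therefore yields $\{\,\interactionsvar{\interaction_0}\,\}_n \asymp \{\,\interactionsvar{\interaction_1}\,\}_n$ for the (flattened) families of interaction traces produced by the two runs of $\multistepnm$, and Lemma~\ref{lemma:clio-indist} upgrades this to $\{\,\interactionsvar{\interaction_0}\,\}_n \cequiv \{\,\interactionsvar{\interaction_1}\,\}_n$. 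Finally, the output of experiment $b$ is $\adversary_2(\interactionsvar{\interaction_b})$ with $\adversary_2$ non-uniform ppt; for any non-uniform ppt distinguisher $D$ the composite $D\circ\adversary_2$ is again non-uniform ppt and $\Pr[D(\IND{b}{\keystore}{\adversary}{\principals}{j}{n})=1] = \Pr[(D\circ\adversary_2)(\interactionsvar{\interaction_b})=1]$, so indistinguishability of the two trace families forces this difference to be negligible in $n$. That is exactly $\distfam{\IND{0}{\keystore}{\adversary}{\principals}{j}{n}} \cequiv \distfam{\IND{1}{\keystore}{\adversary}{\principals}{j}{n}}$, i.e.\ CTA security.

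I expect the genuine difficulty to lie not in this assembly but in Lemma~\ref{lemma:interaction}, whose proof is a nested induction (on the single-step relation inside one low step, then on the number of low steps) and which must strengthen the inductive hypothesis so that interaction distributions satisfy the \emph{syntactic} relation $\asymp$ rather than merely being computationally indistinguishable --- otherwise the circular dependence between low equivalence (which refers to indistinguishable interactions) and indistinguishability (which needs low equivalence of the generating programs) cannot be broken. Within that induction the delicate cases are the probabilistic fetch rules \rulename{Fetch-Exists}, \rulename{Fetch-Missing}, and \rulename{Fetch-Replay}: two low-equivalent configurations may fire \emph{different} fetch rules on a non-readable entry, and one must argue that the aggregate transition probability into each low-equivalent successor agrees up to negligible error, so that drawing from the resulting distributions-over-interaction-distributions remains indistinguishable. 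The ultimate foundation of that argument is the hybrid / multi-message CPA reduction of Lemma~\ref{lemma:round2}, which is where the CPA-security hypothesis on $\cryptosys$ is actually consumed.
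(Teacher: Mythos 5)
Your proposal is correct and follows essentially the same route as the paper: the paper's own proof of this theorem is a one-line appeal to the Preservation of Low Equivalence lemma (to get that the two runs' interaction families satisfy $\asymp$) followed by Lemma~\ref{lemma:clio-indist} (to upgrade $\asymp$ to computational indistinguishability), with all the real work deferred to those lemmas exactly as you anticipate. Your write-up merely makes explicit the instantiation details (initial configurations, $\zeroes$ version maps, point-mass $\skipcmd$ distributions, and closure of indistinguishability under composition with the ppt continuation $\adversary_2$) that the paper leaves implicit.
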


\begin{proof} Direct result of low equivalence (interactions are valid \clio interactions, i.e., they satisfy the $\asymp$ relation) and Lemma~\ref{lemma:clio-indist} (indistinguishability of valid \clio interactions).

\end{proof}

\clearpage
\subsection{Leveraged Forgery Lemmas}
\label{appendix:forgery}

\begin{lemma}[Starting Label is a Floor]
For all keystores $\keystore_0$ and terms $t$ and strategies $\strategy$ and principals $\principals$ and $j$,
{\em $$
\begin{array}{l}
  \textbf{Pr}\Big[  
    ~\canFlowToStrictI{\projI{\pcOf{c}}}{\projI{\Start{\keystore_0}}}
  ~~\Big| \\ 
    \hspace*{0.5cm} \keystore' \drawnFrom \Gen(\principals, 1^n); \\
  \hspace*{0.5cm} \keystore = \keystore_0 \uplus \keystore'; \\
    \hspace*{0.5cm} \rconf{c}{\istoredist}{\interactionsdist}{\versions} \drawnFrom \multistep{\keystore}{\conf{\Start{\keystore_0}}{\Clr{\keystore_0}}{t}}{\strategy}{j}
    ~
  \Big] = 0
  \end{array}
$$}
\end{lemma}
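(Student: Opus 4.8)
The plan is to establish a purely operational invariant --- that a computation whose current label starts at $\Start{\keystore_0}$ can never reach a configuration whose current label is \emph{strictly} below $\Start{\keystore_0}$ --- and then close the argument with antisymmetry of the information-flow order on the integrity component. Concretely, I would prove that every configuration $\conf{\lcur'}{\lclr'}{t'}$ reachable from $\conf{\Start{\keystore_0}}{\Clr{\keystore_0}}{t}$ --- through any interleaving of internal steps $\lto$, store steps $\ltol{\alpha}$, single steps $\rto{p}$, and low steps $\Lowstep{p}$, for \emph{any} keystore, store, and strategy --- satisfies (i) $\canFlowTo{\Start{\keystore_0}}{\lcur'}$, and, as a strengthening needed to make the induction go through, (ii) every subterm of $t'$ of the form $\block{l_a}{l_b}{l_c}{t''}$ has $\canFlowTo{\Start{\keystore_0}}{l_a}$. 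The base case is immediate: initially $\lcur' = \Start{\keystore_0}$, and since the source program $t$ contains no subterms of block form, (ii) holds vacuously.

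For the inductive step I would prove one-step preservation of (i)--(ii) by induction on the derivation of a single step, then lift it to reduction sequences. The only rules that change the current label are \rulename{Unlabel} and \rulename{Reset}; the store and fetch rules (\rulename{Store}, \rulename{Fetch-Valid}, \rulename{Fetch-Invalid}) leave $\lcur'$ untouched, and the adversary machinery inside $\Lowstep{p}$ --- drawing from the interaction distribution, serializing, deserializing --- only rewrites the store distribution, never the configuration's labels, so those steps change $c$ only via $\lto$/$\ltol{\alpha}$. \rulename{Unlabel} replaces $\lcur'$ by $\lcur' \lub l_1 \sqsupseteq \lcur'$, so (i) is preserved by transitivity of $\flows$. \rulename{Reset} restores $\lcur'$ to the first component $l_a$ of an enclosing $\block{l_a}{l_b}{l_c}{\CLIO{t''}}$; by clause (ii) of the hypothesis $\canFlowTo{\Start{\keystore_0}}{l_a}$, so (i) is re-established. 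Clause (ii) is itself preserved because the only rule that introduces a fresh block, \rulename{ToLabeled}, sets its first component to the current $\lcur'$, which satisfies $\canFlowTo{\Start{\keystore_0}}{\lcur'}$ by (i); every other rule only propagates, discards, or reduces inside existing blocks, and the evaluation-context rule \rulename{Step} reduces a subterm and so is handled by a routine structural sub-induction (analogous to the current-label monotonicity facts used for LIO~\cite{stefan:2011:flexible}). Hence the invariant holds in the final configuration $c$ drawn from $\multistep{\keystore}{\conf{\Start{\keystore_0}}{\Clr{\keystore_0}}{t}}{\strategy}{j}$, for every outcome of the coin flips of $\Gen$ and $\strategy$; in particular $\canFlowTo{\Start{\keystore_0}}{\pcOf{c}}$ holds with probability $1$.

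It remains to convert this into the stated probability-zero claim using the DC-label structure. From $\canFlowTo{\Start{\keystore_0}}{\pcOf{c}}$ and the definition of $\flows$ on DC labels we obtain $\canFlowToI{\projI{\Start{\keystore_0}}}{\projI{\pcOf{c}}}$. Now the event in the statement, $\canFlowToStrictI{\projI{\pcOf{c}}}{\projI{\Start{\keystore_0}}}$, unfolds to $\canFlowToI{\projI{\pcOf{c}}}{\projI{\Start{\keystore_0}}}$ \emph{and} $\projI{\pcOf{c}} \neq \projI{\Start{\keystore_0}}$. But $\canFlowToI$ is a partial order --- it is logical implication on principal formulas considered up to logical equivalence --- so $\canFlowToI{\projI{\Start{\keystore_0}}}{\projI{\pcOf{c}}}$ together with $\canFlowToI{\projI{\pcOf{c}}}{\projI{\Start{\keystore_0}}}$ forces $\projI{\pcOf{c}} = \projI{\Start{\keystore_0}}$, contradicting the strictness. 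Thus the event is impossible for every outcome of the experiment, and its probability is exactly $0$.

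I do not expect a deep obstacle here: the only subtlety is choosing the invariant strong enough to be inductive. The naive statement ``the current label only floats up'' breaks at \rulename{Reset}, which restores a saved label, so the invariant must additionally record that every still-open $\textsf{toLabeled}$ compartment was entered at a label above $\Start{\keystore_0}$ --- this is clause (ii). Once that is in place every case is routine, and no cryptographic reasoning enters at all --- the argument is entirely about the deterministic label discipline, so it holds pointwise over the probability space.
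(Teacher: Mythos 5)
Your proof takes essentially the same route as the paper's: an induction over the single-step relation, lifted to the low-step relation, observing that only \rulename{Unlabel} and \rulename{Reset} touch the current label, with \rulename{Reset} handled by the fact that \rulename{ToLabeled} saved a label that was already above the floor. Your explicit strengthening of the invariant to clause (ii) on $\block{l_a}{l_b}{l_c}{t''}$ subterms is a cleaner formalization of what the paper argues only informally for the \rulename{Reset} case, and your closing antisymmetry step (converting $\canFlowTo{\Start{\keystore_0}}{\pcOf{c}}$ into the impossibility of the strict reverse flow) makes explicit a conversion the paper leaves implicit.
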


\begin{proof}
We will prove this lemma in two steps: first by showing that the invariant is preserved across \clio steps $\rto{p}^*$ and then using that fact, we can show that the invariant is also preserved across \clio low steps $\Lowstep{p}*$. Our invariant will serve as our inductive hypothesis in both cases.

{\bf Proof on Step relation $\rto{p}$:} We will perform induction on the derivation of the steps (which will be finite when used with the low-step rules, i.e., it is well-founded) with the number of steps being $k$ being 1 less than the total number of (possibly high) steps in the context of a single low step, and our inductive hypothesis will be if  $\canFlowToStrictI{\projI{\pcOf{c}}}{\projI{\Start{\keystore_0}}}$ and,

$ 
\hspace*{4cm} \textbf{Pr}\Big[  
   ~\canFlowToStrictI{\projI{\pcOf{c'}}}{\projI{\Start{\keystore_0}}}
~~\Big| \newline 
   \hspace*{5cm} \keystore' \drawnFrom \Gen(\principals, 1^n); \newline~
  \hspace*{5cm} \keystore = \keystore_0 \uplus \keystore'; \newline~
\hspace*{5cm}
  \rconf{c}{\emptyset}{\skipcmd}{\zeroes} \rto{p_1} ... \rto{p_k} 
  \rconf{c'}{\istoredist'}{\interactionsdist'}{\versions'}
\Big] = 0
$

then,

$ 
\hspace*{4cm} \textbf{Pr}\Big[  
   ~\canFlowToStrictI{\projI{\pcOf{c''}}}{\projI{\Start{\keystore_0}}}
~~\Big| \newline~ 
 \hspace*{5cm} \keystore' \drawnFrom \Gen(\principals, 1^n); \newline~
  \hspace*{5cm} \keystore = \keystore_0 \uplus \keystore'; \newline~
\hspace*{5cm}
  \rconf{c}{\emptyset}{\skipcmd}{\zeroes} \rto{p_1} ... \rto{p_k} 
  \rconf{c'}{\istoredist}{\interactionsdist}{\versions} \rto{p_{k+1}} 
  \rconf{c''}{\istoredist'}{\interactionsdist'}{\versions'}
\Big] = 0
$

\bigskip

The base case $k=0$ is trivial as it is true by supposition.

For the inductive case, we now consider the derivation rule used for the $k+1$'th step and show that it preserves the inductive hypothesis, assuming it for the $k$'th step. We now perform a case analysis on the step used.

\begin{itemize}

\item {\bf Case \rulename{Internal-Step}:} 

In this derivation we have that:
\begin{mathpar}
\inferrule[internal step]{
  c' \lto c''
}{
  \rconf{c'}{\istoredist}{\interactionsdist}{\versions} \rto{1} \rconf{c''}{\istoredist'}{\interactionsdist'}{\versions_1}
}
\end{mathpar}

By inspection of each of the LIO rules, the label is manipulated in the following ways:
\begin{itemize}
\item In \rulename{Unlabel}, the current label is joined with the level of the labeled value, so the flows relation between the current label and the starting label is preserved.
\item In \rulename{Reset} the label is returned to its original label. However, from the \rulename{ToLabeled} rule, the label is based on the current label. As a result, since the label is based on a previous step's current label, and the inductive hypothesis assumes it was true for that point, then the label it is reset to is also satisfies the flow relation to the starting label.
\item In all other rules, the current label is not changed, which by supposition satisfies the flow relation.
\end{itemize}

\end{itemize}

{\bf Proof on Low-Step relation \Lowstep{}{p}}
By induction on $j$.
\begin{itemize}
\item {\bf Base Case:} $j = 1$: That is, we will prove the following:
$$
\begin{array}{l}
  \textbf{Pr}\Big[  
    ~\canFlowToStrictI{\projI{\pcOf{c}}}{\projI{\Start{\keystore_0}}}
  ~~\Big| \\~ 
 \hspace*{0.5cm} \keystore' \drawnFrom \Gen(\principals, 1^n); \\~
  \hspace*{0.5cm} \keystore = \keystore_0 \uplus \keystore'; \newline~
    \hspace*{0.5cm} \rconf{c}{\istoredist}{\interactionsdist}{\versions} \drawnFrom \multistep{\keystore}{\conf{\Start{\keystore}}{\Clr{\keystore}}{t}}{\strategy}{1}
    ~
  \Big] = 0
  \end{array}
$$

% We can expand the $\multistepnm$ metafunction to be

% $$
% \big(\rconf{c'}{\istore'}{\interactionsvar{\interaction}}{\versions}, p \cdot p '\big)
%     ~\Big|~ 
%     \big(\interactions, p\big) \in \strategy(\emptyset); ~ 
%     (\rconf{c}{\emptyset}{\skipcmd}{\zeroes}, {\interactions})
%       \Lowstep{p'}
%     \rconf{c'}{\istore'}{\interactionsvar{\interaction}}{\versions}
%     $$

There are two cases we must consider in the low step relation, the \rulename{Low-Step} rule and the \rulename{Low-to-High-to-Low-Step} rule. In the \rulename{Low-Step}, we must show that the inductive hypothesis holds after a single \clio step $\rto{p}$, and for the \rulename{Low-to-High-to-Low-Step} rule must hold for many (finite) \clio steps $\rto{p}^*$. Note that the \rulename{Low-Step} rule is a special case of the \rulename{Low-to-High-to-Low-Step} rule so we only consider the more general case of preserving the invariant across many steps. To show this, we appeal to the previous proof made to show that the invariant is preserved across \clio steps. As a result, we have that:

\medskip

$ 
\hspace*{4cm} \textbf{Pr}\Big[  
   ~\canFlowToStrictI{\projI{\pcOf{c'}}}{\projI{\Start{\keystore_0}}}
 ~~\Big| \newline~ 
 \hspace*{5cm} \keystore' \drawnFrom \Gen(\principals, 1^n); \newline~
  \hspace*{5cm} \keystore = \keystore_0 \uplus \keystore'; \newline~
\hspace*{5cm}
  \rconf{c}{\emptyset}{\skipcmd}{\zeroes} \rto{p_1} ... \rto{p_k} 
  \rconf{c'}{\istoredist}{\interactionsdist}{\versions}
\Big] = 0$

From the previous proof on the single-step relation, we can conclude that $\canFlowToStrictI{\projI{\pcOf{c}}}{\projI{\Start{\keystore_0}}}$. As a result, we satisfy the inductive hypothesis.

\item {\bf Inductive Case:} j = k+1: That is, we will prove the following:
$$
\begin{array}{l}
  \textbf{Pr}\Big[  
    ~\canFlowToStrictI{\projI{\pcOf{c}}}{\projI{\Start{\keystore_0}}}
  ~~\Big| \newline~ 
 \hspace*{0.5cm} \keystore' \drawnFrom \Gen(\principals, 1^n); \\~
  \hspace*{0.5cm} \keystore = \keystore_0 \uplus \keystore'; \newline~
    \hspace*{0.5cm} \rconf{c}{\istoredist}{\interactionsdist}{\versions} \drawnFrom \multistep{\keystore}{\conf{\Start{\keystore}}{\Clr{\keystore}}{t}}{\strategy}{k+1}
    ~
  \Big] = 0
  \end{array}
$$

We can expand the $\multistepnm$ metafunction to be

$ \hspace*{4cm}
\big(\rconf{c''}{\istoredist''}{\interactionsdist''}{\versions'}, p \cdot p'\big)
    \newline~ \hspace*{4.5cm} \mathrm{where} \newline~ \hspace*{5cm}
    \big(\rconf{c'}{\istoredist'}{\interactionsdist}{\versions}, p\big) \in \multistep{\keystore}{c}{\strategy}{k}; \newline~ \hspace*{5cm}
    \interactionsdist' = \strategy(\interactionsdist); \newline~ \hspace*{5cm}
    (\rconf{c'}{\istoredist}{\interactionsdist}{\versions}, 
    {\interactionsdist'})
      \Lowstep{p'}
    \rconf{c''}{\istoredist''}{\interactionsdist''}{\versions'}
    $

    % We now much show that, for any $\interactionsdist$ and resulting distributions over stores it induces that the resulting single step will remain size equivalent. The interactions can result in two events: the interaction results in a valid serialization, or it results in an invalid serialization. In both cases, the current label is not affected, so the interaction has no effect on the invariant.
    The strategy on the stores does not affect the current label.
After this adversary interaction, we can invoke the single-step lemma result here to conclude that $\canFlowToStrictI{\projI{\pcOf{c''}}}{\projI{\Start{\keystore_0}}}
  $. As a result, the inductive hypothesis is true.

\end{itemize}
With all cases accounted for in the low step relation, and the single-step relation, we can conclude the proof.

\end{proof}

\bigskip

\clearpage
\subsection{Leveraged Forgery Security Proof}

\begin{definition}[Values function]

Define the Values function as follows:

\scalebox{0.8}{%
\vbox{
\[
\begin{array}{lcll}
\mathrm{Values}_{\keystore}(\concat{\storecmd{\gv_k}{\Labeled{l_1}{b}}}{\interactions}) &=&   \concat{\storecmd{\gv_k}{\Labeled{l_1}{b}}}{\mathrm{Values}(\interactions)} & \textif \Labeled{l_1}{(\gv, \gv_k, n)} = \deserialize{\interactions}{\Labeled{l_1}{b}} \\
\mathrm{Values}_{\keystore}(\concat{\interaction}{\interactions}) &=&   \mathrm{Values}(\interactions) & \mathrm{otherwise} 
\end{array}
\]
}}
\end{definition}

\begin{theorem}[Existential forgery under chosen message attack]
For all keystores $\keystore_0$, principals $p$ in principal sets $\principals$, and $j$ if $\cryptosys$ is secure against existential forgery under chosen message attacks, then \\ for all $~\canFlowToI{\projI{l_2}}{\canFlowToI{\projI{l_1}}{\projI{\authorityOf{\keystore_0}} \wedge p}}$,

$$
\figForgeryProb
$$
is negligible in $n$. 
\end{theorem}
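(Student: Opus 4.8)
The plan is to follow the two-part strategy sketched above. Part~(i) is an information-flow argument: the low-integrity execution of $t'$ can itself never emit a store event $\storecmd{\gv_k}{\Labeled{l_1}{\gv}}$ whose label satisfies $\canFlowToI{\projI{l_1}}{p}$. Part~(ii) is a cryptographic reduction: the only remaining way for such a valid entry to appear in $\interactionsprime$ is for the adversary's strategy $\strategy'$ to inject it, and -- since $\strategy'$ has neither $p$'s signing key nor any ability to decrypt -- doing so forces an existential forgery of $p$'s signature under a chosen-message attack.

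For part~(i) I would combine two facts. First, the side condition $\canFlowTo{\lcur}{l_1}$ of rule $\rulename{Store}$ gives $\canFlowToI{\projI{\pcOf{c}}}{\projI{l_1}}$ for any store step out of configuration $c$. Second, the \textbf{Starting Label is a Floor} lemma -- whose induction over $\rto{p}$ and $\Lowstep{p}$ in fact establishes the monotone invariant $\canFlowToI{\projI{\Start{\keystore_0}}}{\projI{\pcOf{c}}}$ at every configuration reachable from $\conf{\Start{\keystore_0}}{\Clr{\keystore}}{t'}$ (integrity only drops, through $\rulename{unlabel}$ joins, while $\rulename{reset}$ restores it to an earlier, no-less-trustworthy value). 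Since $\projI{\Start{\keystore_0}} = \Max{\keystore_0}$ is the conjunction of exactly the principals for which $\keystore_0$ holds a private key, and $p$'s key pair is sampled into the disjoint component $\keystore'$, we have $\cantFlowToI{\projI{\Start{\keystore_0}}}{p}$. Chaining these: a store of $\Labeled{l_1}{\gv}$ by $t'$ would yield $\canFlowToI{\projI{\Start{\keystore_0}}}{\projI{l_1}}$, and if also $\canFlowToI{\projI{l_1}}{p}$ then $\canFlowToI{\projI{\Start{\keystore_0}}}{p}$ by transitivity, a contradiction. Hence, with probability $1$, every $\storecmd{\gv_k}{\Labeled{l_1}{b}}$ occurring in $\interactionsprime$ with $\canFlowToI{\projI{l_1}}{p}$ was produced by $\strategy'$, not by the \clio computation of $t'$.

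For part~(ii) I would build a forger $\mathcal{F}$ against $\cryptosys$'s signature scheme at $p$'s key pair. $\mathcal{F}$ receives $p$'s verification key and a signing oracle, samples all remaining key material locally -- including $p$'s \emph{encryption} component, which is legitimate by the key-separation assumption and lets $\mathcal{F}$ decrypt every stored ciphertext -- and then simulates the whole game of the theorem: it runs $\adversary$, executes $\multistep{\keystore}{\conf{\Start{\keystore}}{\Clr{\keystore}}{t}}{\strategy}{j}$ (answering every signature under $p$'s key, whether from serializing a $p$-signed value or from $\createcknm$ choosing $p$ as the signing member of a category key, via the oracle, and recording the queried payload in a set $Q$), hands $\interactions$ to $\adversary_2$, simulates $\multistep{\keystore}{\conf{\Start{\keystore_0}}{\Clr{\keystore}}{t'}}{\strategy'}{j}$ likewise, and finally scans $\interactionsprime$ for a store command $\storecmd{\gv_k}{\Labeled{l_1}{b}}$ with $\canFlowToI{\projI{l_1}}{p}$ -- so that $\projI{l_1}$ has the singleton category $\{p\}$ among its conjuncts and hence $\keystore$-deserialization of $b$, when it succeeds, yields a payload $(\gv,\gv_k,n)$ together with a valid signature $s_p$ by $p$ on it -- that is not already in $\mathrm{Values}_{\keystore}(\interactions)$; $\mathcal{F}$ outputs $\big((\gv,\gv_k,n),\, s_p\big)$. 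The crux is that $\strategy'$ and $\adversary_2$ see only $\pubkeys{\keystore}$ and $\interactions$, hold no principal private keys, and therefore can neither sign with $p$'s key nor decrypt any ciphertext stored by $t$: any $p$-signature they cause to surface in $\interactionsprime$ must travel verbatim inside a ciphertext copied wholesale from $\interactions$; but such a copy stored at the same index under the same label is already in $\mathrm{Values}_{\keystore}(\interactions)$ (so it is excluded by the theorem's second conjunct), while stored under a different index or relabeled it fails deserialization because $\gv_k$ and the version are rebound inside the signed payload. Consequently a genuinely new valid entry must be a \emph{fresh} ciphertext that decrypts to a validly-$p$-signed payload $(\gv,\gv_k,n)$, and since $\adversary_2$ never learns any message of $Q$ (those are hidden inside ciphertexts it cannot open), that payload lies outside $Q$ except with negligible probability. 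Thus $\mathcal{F}$'s forgery advantage is at least the \clio forgery probability minus a negligible term, and existential-forgery security of $\cryptosys$ makes it negligible -- whence the \clio forgery probability is negligible too.

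I expect the main obstacle to be making the case analysis inside part~(ii) airtight: characterizing precisely which bitstrings $\strategy'$ can place in $\interactionsprime$ so that $\keystore$-deserialization succeeds and $\canFlowToI{\projI{l_1}}{p}$ holds, showing that every such labeled bitstring either coincides (at the matching index) with a member of $\mathrm{Values}_{\keystore}(\interactions)$ or else carries a $p$-signature on a payload outside the oracle-query set $Q$, and bounding by a negligible quantity the probability that a fresh, adversarially-chosen ciphertext happens to decrypt to a validly signed payload. This draws on the index-and-version binding of the serialization, the exact definition of $\mathrm{Values}_{\keystore}$, and the fact that $\adversary_2$'s only view of $\interactions$ is through ciphertexts it cannot decrypt; by comparison the information-flow half (part~(i)) and the simulation of the probabilistic store semantics by $\mathcal{F}$ are routine, given the \textbf{Starting Label is a Floor} lemma and the key-separation assumption.
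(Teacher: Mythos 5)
Your proposal is correct and follows essentially the same two-part decomposition as the paper: first the information-flow argument via the \textbf{Starting Label is a Floor} lemma and the $\canFlowTo{\lcur}{l_1}$ side condition of \rulename{Store} to rule out the \clio computation itself producing a $p$-vouched value, then a reduction of the residual adversarial case to existential forgery of $\cryptosys$'s signature scheme. The only difference is one of detail: the paper compresses your entire part~(ii) into a single sentence ("the strategy would have to forge a signature itself, which occurs with negligible probability"), whereas you spell out the forger, the signing-oracle simulation, the key-separation point, and the index-and-version binding that defeats replays -- all of which is consistent with, and more careful than, the paper's own argument.
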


\begin{proof}

We consider the level required to produce a valid signature during a store operation. The signature must be valid for a $p \in \principals$. By inspection of the \clio  semantics, the only way to a valid signature would occur in the interaction is during the store operation, which uses the labeled value's label, or by the strategy. 

According to the store operation, current label must be bounded above by the label
of the labeled value (i.e., $\canFlowTo{\lcur}{l_1}$). For integrity, this means
the current label's integrity component $I$ must be at least as trustworthy as the principal $p$.
%There are two ways a \clio term could obtain a labeled value, which we now consider.

% The first is by fetching it from the store, which would mean that the ground value would be un-altered, which we do not consider. \todo{talk about forging}

By the previous lemma, we can conclude that the current integrity label will never be at a level $I'$ such that 
$$
\canFlowToStrictI{I'}{\projI{\Start{\keystore_0}}}$$ 
 This means that the level of the \clio computation would need to be at least 
$$
I = \projI{\authorityOf{\keystore_0 \uplus \{ p \mapsto \Gen(1^n) \} }}$$

\noindent By unpacking the definition of $\Start{\keystore \uplus \{ p \mapsto \Gen(1^n) \}}$, this integrity label satisfies the following relation
$$
\canFlowToStrictI{ I }{ \projI{\Start{\keystore_0} }}
$$

Which we have shown is impossible to reach. As a result, the current label's integrity level will never be at a level where it can sign the value using $p$'s private signing key.  

As a result, the only way a high integrity value could be in the challenge store $\istore'$ and not in the original store $\istore$ would require the strategy to forge a signature itself without \clio's assistance. Since this occurs with
only negiglible probability, we have satisfied the proof obligation.
\end{proof}

\ifsoundness
\clearpage

\subsection{Soundness}

\begin{definition}[Relation Between Real and Ideal]
Define $c_i \relatedclio c_r$ to be:
{\em $$
\begin{array}{llll}
\iconf{\conf{\lcur}{\lclr}{t}}{\istore_i} & \relatedclio & \rconf{\conf{\lcur}{\lclr}{t}}{\istore_r}{\interactions}{\versions} &
  \mathrm{if~} \istore_i \relatedclio \istore_r \newline~
\istore_i[\gv_k \mapsto \Labeled{l_1}{\gv}] & \relatedclio & \istore_r[\gv_k \mapsto \Labeled{l_1}{b}] & 
  \mathrm{if~} \istore_i \relatedclio \istore_r \textand 
    \Labeled{l_1}{\gv} = \deserializevar{\istore_r}{\Labeled{l_1}{b}}{\typeOf{\gv}}
\end{array}
$$}
\end{definition}

For all ideal configurations $c_i$, real configurations $c_r$, and strategies drawn from a non-uniform ppt adversary, if $c_i \relatedclio c_r$ then,
{\em $$
\begin{array}{l}
\textbf{Pr}\big[ (c_i', c_r') \not\in \relatedclio ~ \big| ~
  \interactions \drawnFrom \strategy(\textsf{store}(c_r)); ~
  \iinteractions = \mathcal{T}(c_r, \interactions); \newline~
  \hspace*{22mm} (c_i, \iinteractions) \Lowstep{} c_i'; ~
  (c_r, \interactions) \Lowstep{p} c_r'
\big]
\end{array}
$$}
\noindent
is negligible in $n$.  

\begin{proof}
todo
\end{proof}

\else
\fi

\clearpage
\section{Implementation Details}
\label{app:impl}

In this section we discuss additional implementation details of our
prototype.

\subsection{Storing and Fetching}

For each category used in the
program we generate a symmetric key and two RSA key pairs: an
encryption/decryption key pair and a signing/verification key
pair. This information is stored in the database after being
asymmetrically encrypted and then signed as described in
Section~\ref{subsec:keystore}. Category key generation relies on the
RSA key pairs for each principal involved, which should be supplied by
the user in the form of an initial keystore when the \clio computation
starts.

% We will now take a closer look at the store and fetch operations after
% the relevant IFC effects have been performed. We focus on the
% cryptographic aspects since that is where our implementation differs
% slightly from our model.
% \todo{Argue that our implementation is still
%   faithful to the model somehow, e.g. hybrid crypto has the same
%   guarantees as pure asymmetric crypto, and we only use hybrid for
%   practical/efficiency reasons.}

After the relevant IFC effects have been performed, storing a labeled
value involves fetching the symmetric key for each
category in its confidentiality clause as well as the signature keys
that correspond to each category in its integrity clause, potentially
generating these on the fly. The labeled value is serialized to a
bitstring, then RSA-PSS signed by at least one principal per integrity
category, and finally AES256-CTR onion-encrypted using the symmetric
key for each confidentiality category. Fetching involves the dual
operations, i.e., symmetric decryption and RSA-PSS signature
verification.

In order to avoid problems with improperly
escaped strings, we encode every bitstring in
base64.
%\todo{Is it worth it to talk about the Serialize type class?}

\subsection{User API}

Our library provides all the \clio{} operations described in the
paper, plus a few extra functions that are necessary to glue \clio code
with the rest of the program. Here are some of the most important ones.

\clio code can be run using the \texttt{evalCLIO} function. This
function takes two arguments: a record $\text{initialState}$ of type
\texttt{CLIOState} and a \clio computation $m$. The record
$\text{initialState}$ provides initial values for the current label,
the current clearance, the keystore, the version map and the store
label. The function simply establishes a connection with a Redis
server %running on the default local port 6379
 and executes $m$ using
that database as the store and $\text{initialState}$ as the local
state. 

In order to generate keystores, we provide the utility function
\texttt{initializeKeyMapIO}. This function takes a list of principals
as argument, and produces a keystore with fresh asymmetric key pairs
for all of them. Our prototype does not provide means to store these
keystores beyond the execution of the program, but it would be
straightforward for users to implement this functionality in their own
programs, or with their own PKI.

%%% Local Variables:
%%% mode: latex
%%% TeX-master: "paper"
%%% End:

\end{document}

%%% Local Variables:
%%% mode: latex
%%% TeX-master: t
%%% End: